\newtheorem{theorem}{Theorem}
\newtheorem{lemma}{Lemma} 
\newtheorem{proposition}{Proposition}
\newtheorem{remark}{Remark}
\newcommand{\PA}{{\mathcal Q}}
\newcommand{\OS}{{\mathcal O}}
\newcommand\indep{\protect\mathpalette{\protect\independenT}{\perp}}
\def\independenT#1#2{\mathrel{\rlap{$#1#2$}\mkern2mu{#1#2}}}
\def\argmin{\mathop{\rm argmin}}
\def\Cov{\mathop{\rm Cov}}
\def\Var{\mathop{\rm Var}}
\def\sign{\mathop{\rm sign}}
\def\rank{\mathop{\rm rank}}
\def\diag{\mathop{\rm diag}}
\newtheorem{Assumption}{Assumption}
\newcommand {\bfSigma} {\mbox{\boldmath $\Sigma$}}
\newcommand {\bfOmega} {\mbox{\boldmath $\Omega$}}
\def\bx{\mathop{\bf x}}
\def\bX{\mathop{\bf X}}
\newcommand{\eee}{\boldsymbol \epsilon}
\newcommand{\OOO}{\boldsymbol \Omega}
\newcommand{\SG}{\boldsymbol \Sigma}
\newcommand{\TTT}{\boldsymbol \Theta}
\newcommand{\A}{\mathbf A}
\newcommand{\AAA}{\mathcal A}
\newcommand{\BB}{\mathbf B}
\newcommand{\BBB}{\mathcal B}
\newcommand{\OO}{\mbox{$\mathbf O$}}
\newcommand{\CC}{\mathbf C}
\newcommand{\CCC}{\mathcal C}
\newcommand{\DD}{\mathbf D}
\newcommand{\DDD}{\mathbf \Delta}
\newcommand{\DDDD}{\mathcal D}
\newcommand{\ee}{\mathbf e}
\newcommand{\EEE}{\mathbb E}
\newcommand{\MM}{\mathbf M}
\newcommand{\MMM}{\mathcal M}
\newcommand{\YY}{\mbox{$\mathbf Y$}}
\newcommand{\NNN}{\mathcal N}
\newcommand{\II}{\mathbf I}
\newcommand{\III}{\mathcal I}
\newcommand{\LL}{\mathbf L}
\newcommand{\PP}{\mathbf P}
\newcommand{\PPP}{\mathcal P}
\newcommand{\R}{\mathbb R}
\newcommand{\SSS}{\mathbf S}
\newcommand{\SSSS}{\mathcal S}
\newcommand{\TT}{\mathcal T}
\newcommand{\UU}{\mathbf U}
\newcommand{\VV}{\mathbf V}
\newcommand{\xx}{\mathbf x}
\newcommand{\XX}{\mathbf X}
\newcommand{\pa}{\text{pa}}
\newcommand{\1}{\uppercase\expandafter{\romannumeral1}}
\newcommand{\2}{\uppercase\expandafter{\romannumeral2}}
\newcommand{\off}{\text{off}}
\newcommand{\0}{\textbf{0}}
\begin{document}
	
\title{Identifiability and Consistent Estimation for Gaussian Chain Graph Models}
\author{Ruixuan Zhao$^{\dag}$, Haoran Zhang$^\ddag$ and Junhui Wang$^*$\\ [10pt]
	$^\dag$School of Data Science \\
	City University of Hong Kong 
	\and
	$^\ddag$ Department of Statistics and Data Science \\
	Southern University of Science and Technology 
	\and
	$^*$Department of Statistics \\
	The Chinese University of Hong Kong
}
\date{ }

\maketitle
	
\onehalfspacing
\begin{abstract}
	The chain graph model admits both undirected and directed edges in one graph, where symmetric conditional dependencies are encoded via undirected edges and asymmetric causal relations are encoded via directed edges. Though frequently encountered in practice, the chain graph model has been largely under investigated in the literature, possibly due to the lack of identifiability conditions between undirected and directed edges. In this paper, we first establish a set of novel identifiability conditions for the Gaussian chain graph model, exploiting a low rank plus sparse decomposition of the precision matrix. Further, an efficient learning algorithm is built upon the identifiability conditions to fully recover the chain graph structure. Theoretical analysis on the proposed method is conducted, assuring its asymptotic consistency in recovering the exact chain graph structure. The advantage of the proposed method is also supported by numerical experiments on both simulated examples and a real application on the Standard \& Poor 500 index data.
\end{abstract}

\begin{keywords}
	Causal inference, tangent space, directed acyclic graph, Gaussian graphical model, low-rank plus sparse decomposition
\end{keywords}

\doublespacing

\section{Introduction}	\label{sec:intro}

Graphical models have attracted tremendous attention in recent years, which provide an efficient modeling framework to characterize various relationships among multiple objects of interest. They find applications in a wide spectrum of scientific domains, ranging from finance\cite{Sanford2012}, information system \cite{Stanton1995}, genetics \cite{Friedman2008}, neuroscience \cite{Cole2013} to public health \cite{Luke2007}.

In the literature, two types of graphical models have been extensively studied. The first type is the undirected graphical model, which encodes conditional dependences among collected nodes via undirected edges. Various learning methods have been proposed to reconstruct the undirected graph, especially under the well-known Gaussian graphical model \cite{Friedman2008, Cai2011} where conditional dependences are encoded via the zero pattern of the precision matrix. Another well-studied graphical model is the directed acyclic graphical model, which uses directed edges to represent causal relationships among collected nodes in a directed acyclic graph (DAG). To reconstruct the DAG structure, linear Gaussian structural equation model (SEM) has been popularly considered in the literature where causal relations are encoded via the sign pattern of the coefficient matrix. Various identifiability conditions \cite{Peters2014, Park2020} have been established for the linear Gaussian SEM model, leading to a number of DAG learning methods \cite{ChenW2019, Park2020}.

Another more flexible graphical model, known as the chain graph model, can be traced back to the early work in \cite{Lauritzen1989, Wermuth1990}. It admits both undirected and directed edges in one graph, where symmetric conditional dependencies are encoded via undirected edges and asymmetric causal relations are encoded via directed edges. Further, it is often assumed that no semi-directed cycles are allowed in the chain graph. As a direct consequence, the chain graph model can be seen as a special DAG model with multiple chain components, where each chain component is a subset of nodes connected via undirected edges, and directed edges are only allowed across different chain components. It has been frequently encountered in various application domains, ranging from genetics \cite{Ha2021}, social science \citep{ogburn2020causal} to computer science \cite{Chen2018}. For instance, in pan-cancer network analysis \citep{Ha2021}, there are three kinds of variables: DNA-level variables containing copy number and methylation, transcriptomic variables containing mRNA expression and proteomic variables. It is believed that there only exist symmetric dependencies within three kinds of variables and asymmetric casual relationships across different kinds of variables. Also, in the stock market, the relationship among different stocks can be multi-faceted, where there are symmetric competitive or cooperative relationship and  asymmetric causal relationship between pairs of stocks. 

The chain graph model can have different interpretations, including the Andersson-Madigan-Perlman (AMP) interpretation \cite{Andersson2001}, the Lauritzen–Wermuth–Frydenberg (LWF) interpretation \cite{Lauritzen1989, Frydenberg1990} and the multivariate regression (MVR) interpretation \cite{Cox1993}. Each interpretation implies a different  conditional independence relationship from the chain graph structure. Particularly, the LWF interpretation leads to a natural interpretation \citep{Cox1993, Andersson2001}, where each node affects its children and the nodes in the same chain components of these children \cite{Javidian2020}. The AMP interpretation, however, is more suitable for statistical modeling via the SEM model, which also facilitates the data generating scheme of the chain graph \citep{Andersson2001, Drton2006}. In contrast to the LWF chain graph models, each node in AMP chain graph models only affects its children, but no other nodes in the same chain components of these children \cite{Javidian2020}. The MVR interpretation corresponds to the acyclic directed mixed graph (ADMG) with no partially directed cycle \cite{Javidian2018properties}. 

Based on different interpretations, several structure learning methods are established, including the IC like algorithm \cite{Studeny1997}, the CKES algorithm \cite{Pena2014} and the decomposition-based algorithm \cite{Ma2008}, the Markov blanket based algorithm \cite{Javidian2020b}, the pseudolikelihood-type algorithm \cite{Bhattacharya2020} for LWF chain graphs, the PC like algorithm \cite{Sonntag2012} and the decomposition-based algorithm \cite{Javidian2018} for MVR chain graphs, and the PC like algorithm \cite{Pena20142} and the decomposition-based algorithm \cite{Javidian2020} for AMP chain graphs. Yet, all these methods can only estimate some Markov equivalence classes of the chain graph model, and provide no guarantee for the reconstruction of the exact chain graph structure, mostly due to the lack of identifiability conditions between undirected and directed edges. It was until very recently that \cite{Wang2021} extends the equal noise variance assumption for DAG \cite{Peters2014} to establish the identifiability of the chain graph model under the AMP interpretation, which is rather difficult to verify in practice. It is also worth mentioning that if the chain components and their causal ordering are known a priori, then the chain graph model degenerates to a sequence of multivariate regression models, and various methods \cite{Drton2006, Mccarter2014, Ha2021} have been developed to recover the graphical structure.


In this paper, we establish a set of novel identifiability conditions for the Gaussian chain graph model under AMP interpretation, exploiting a low rank plus sparse decomposition of the precision matrix. Further, an efficient learning algorithm is developed to recover the exact chain graph structure, including both undirected and directed edges. Specifically, we first reconstruct the undirected edges by estimating the precision matrix of the noise vector through a regularized likelihood optimization.  Then, we identify each chain component and determine its causal ordering based on the conditional variances of its nodes. Finally, the directed edges are reconstructed via multivariate regression coupled with truncated singular value decomposition (SVD). Theoretical analysis shows that the proposed method can consistently reconstruct the exact chain graph structure, which, to the best of our knowledge, is the first asymptotic consistency result in terms of exact graph recovery for the Gaussian chain graph with AMP interpretation and linear SEM model. The advantage of the proposed method is supported by numerical experiments on both simulated examples and a real application on the Standard \& Poor 500 index data, which reveals some interesting impacts of the COVID-19 pandemic on the stock market.

The rest of the paper is organized as following. Section \ref{sec:model} introduces some preliminaries on the chain graph model. Section \ref{sec::method} proposes the identifiability conditions for linear Gaussian chain graph model, and develops an efficient learning algorithm to reconstruct the exact chain graph structure. The asymptotic consistency of the proposed method is established in Section \ref{sec:theory}. Numerical experiments of the proposed method on both simulated and real examples are included in Section \ref{sec:experiment}. Section \ref{sec:conclusion} concludes the paper, and technical proofs are provided in the Appendix. Auxiliary lemmas and further computational details are deferred to a separate Supplementary File.

Before moving to Section \ref{sec:model}, we define some notations. For an integer $m$, denote $[m] = \{1,...,m\}$.
For a real value $x$, denote $\lceil x\rceil$ as the largest integer less than or equal to $x$.
For two nonnegative sequences $a_n$ and $b_n,~a_n\lesssim b_n$ means there exists a constant $c>0$ such that $a_n\leq cb_n$ when $n$ is sufficiently large. Further, $a_n\lesssim_P b_n$ means there exists a constant $c>0$ such that $\Pr(a_n\leq cb_n) \to 1$ as $n$ grows to infinity. 
For a vector $\xx$, the sub-vector corresponding to an index subset $S$ is denoted as $\xx_S=(\xx_i)_{i\in S}$. For a matrix $\mathbf{A}=(a_{ij})_{p\times p}$, the sub-matrix corresponding to rows in $S_1$ and columns in $S_2$ is denoted as $\mathbf{A}_{S_1,S_2}=(a_{ij})_{i\in S_1,j\in S_2}$, and let $\mathbf{A}_{S_1,S_2}^{-1}$ denote the corresponding sub-matrix of $\A^{-1}.$  
Also, let $\|\mathbf{A}\|_{1,\text{off}}=\sum_{i\neq j} |a_{ij}|$, $\|\A\|_{\max} = \max_{ij} |a_{ij}|$, $\|\mathbf{A}\|_2$ denote the spectrum norm, $\|\mathbf{A} \|_*$ denote the nuclear norm and $v(\A)\in\R^{p^2}$ denote the vectorization of $\A$.

\section{Chain graph model}\label{sec:model}

Suppose the joint distribution of $\xx=(x_1,...,x_p)^\top$ can be depicted as a chain graph ${\cal G}=({\cal N}, {\cal E})$, where ${\cal N}=\{1,...,p\}$ represents the node set and ${\cal E} \subset {\cal N} \times {\cal N}$ represents the edge set containing all undirected and directed edges. To differentiate, we denote $(i - j)$ for an undirected edge between nodes $i$ and $j$, $(i \rightarrow j)$ for a directed edge pointing from node $i$ to node $j$, and suppose that at most one edge is allowed between two nodes.
Then, there exists a positive integer $m$ such that ${\cal N}$ can be uniquely partitioned into $m$ disjoint chain components $\NNN = \bigcup_{k=1}^m \tau_k,$ where each $\tau_k$ is a connected component of nodes via undirected edges.
Suppose that only undirected edges exist within each chain component, and directed edges are only allowed across different chain components \cite{Maathuis2018, Drton2006}. Further suppose that there exists a permutation $\boldsymbol{\pi}=(\pi_1,...,\pi_m)$ such that for $i \in \tau_{\pi_k}$ and $j \in \tau_{\pi_l}$, if $(i\rightarrow j)\in{\cal E}$, then $k<l$. 
This excludes the existence of semi-directed cycle in $\cal G$ \cite{Maathuis2018}.
We call such permutation $\boldsymbol{\pi}$ as the causal ordering of the chain components, and directed edges could only point from nodes in higher-order $\tau_k$ to nodes in lower-order one.

Let $\pa(i) = \{ j\in {\cal N} : (j\rightarrow i) \in {\cal E}\},~\mbox{ch}(i)=\{j\in {\cal N} : (i\rightarrow j) \in {\cal E}\}$ and $\mbox{ne}(i) = \{ j\in {\cal N} :  (j-i) \in {\cal E}\}$ denote the parents, children and neighbors of node $i,$ respectively. Further, let $\pa(\tau_k)=\bigcup_{i \in \tau_k} \pa(i)$ be the parent set of chain component $\tau_k$. Suppose the joint distribution of $\xx$ satisfies the AMP Markov property \citep{Andersson2001, Levitz2001} with respect to ${\cal G},$ and follows the linear structural equation model (SEM),
\begin{align}\label{eq:model}
	\bx = \mathbf{B} \bx + \eee,
\end{align}
where $\mathbf{B} = (\beta_{ij})_{p\times p}$ is the coefficient matrix, $\eee = (\epsilon_1,...,\epsilon_p)^\top \sim N(\mathbf{0}, \mathbf{\Omega}^{-1})$, and $\mathbf{\Omega}=(\omega_{ij})_{p\times p}$ is the precision matrix of $\eee$. 
Further, suppose that $\beta_{ij} \neq 0$ if and only if $j\in \pa(i)$, and $\omega_{ij}\neq 0$ if and only if $j\in \mbox{ne}(i)$.
Therefore, the undirected and directed edges in $\cal G$ can be directly implied by the zero patterns in $\OOO$ and $\BB$, respectively. 
The joint density of $\xx$ can then be factorized as 
\begin{equation}\label{eq:factorization}
	P(\xx)=\prod_{k=1}^m P(\xx_{\tau_k} | \xx_{\pa(\tau_k)}),
\end{equation}
where $\mathbf{x}_{\tau_k} | \mathbf{x}_{\pa(\tau_k)} \sim N(\mathbf{B}_{\tau_k, \pa(\tau_k)} \mathbf{x}_{\pa(\tau_k)}, \mathbf{\Omega}_{\tau_k,\tau_k}^{-1})$ for $k\in[m]$, and  $\mathbf{\Omega}_{\tau_k,\tau_k}$ is not necessarily a diagonal matrix. This is a key component of the chain graph model to allow undirected edges within each $\tau_k$, and thus differs from most existing SEM models with diagonal $\mathbf{\Omega}$ in the literature \cite{Peters2017, Peters2014, Park2020, ChenW2019}. 

It is also interesting to compare the AMP, LWF and MVR interpretations of $\cal G$ with the SEM model in \eqref{eq:model}. Let $\TTT^k$ be the precision matrix corresponding to $\tau_k\cup \pa(\tau_k)$, then different interpretations lead to different implications on $\OOO$ and $\BB$. Particularly, 
\begin{itemize}
    \item AMP interpretation: $(j\rightarrow i)\notin {\cal E}$ for $i\in \tau_k$ and $j\in \pa(\tau_k)$ $\implies \BB_{ij} = 0$ ; $(j - i) \notin {\cal E}$ for $i,j \in \tau_k$ $\implies \OOO_{ij} = 0$.
	\item LWF interpretation: $(j\rightarrow i)\notin {\cal E}$ for $i\in \tau_k$ and $j\in \pa(\tau_k)$ $\implies [\TTT^k_{\tau_k,\pa(\tau_k)}]_{ij} = 0$; $(j - i) \notin {\cal E}$ for $i,j \in \tau_k$ $\implies \OOO_{ij} = 0$.
	\item MVR interpretation: $(j\rightarrow i)\notin {\cal E}$ for $i\in \tau_k$ and $j\in \pa(\tau_k)$ $\implies \BB_{ij} = 0$; $(j \leftrightarrow i) \notin {\cal E}$ for $i,j \in \tau_k$ $\implies \OOO_{ij}^{-1} = 0$.
\end{itemize}
It is clear that AMP interpretation matches well with the linear SEM model so that the graph structure in $\cal G$ can be fully captured by the supports of $\OOO$ and $\BB$, which also greatly facilitates the data generating scheme of the chain graph \citep{Andersson2001, Drton2006}. Yet, the directed edges in the LWF chain graph corresponds to more involved parameter $\TTT^k_{\tau_k,\pa(\tau_k)} = - \OOO_{\tau_k,\tau_k} \BB_{\tau_k,\pa(\tau_k)}$, whereas the undirected edges in the MVR chain graph corresponds to the noise covariance matrix $\OOO^{-1}$. Here $[\TTT^k_{\tau_k,\pa(\tau_k)}]_{ij}$ represents the element in matrix $\TTT^k_{\tau_k,\pa(\tau_k)}$ corresponding to nodes $i$ and $j$.


Furthermore, to assure the acyclicity among chain components in $\cal G$, we say $(\OOO,\BB)$ is CG-feasible if there exists a permutation matrix $\PP$ such that both $\PP\OOO\PP^\top$ and $\PP\BB\PP^\top$ share the same block structure, where $\PP\OOO\PP^\top$ is a block diagonal matrix and $\PP\BB\PP^\top$ is a block lower triangular matrix with zero diagonal blocks. Figure \ref{ToyExample} shows a toy chain graph, as well as the supports of the original and permuted $(\OOO,\BB)$. Let $\TTT$ denote the precision matrix for $\xx$, then it follows from \eqref{eq:model} that 
\begin{align} \label{eq::decom}
	\mathbf{\Theta}=(\mathbf{I}_p-\mathbf{B})^\top \mathbf{\Omega} (\mathbf{I}_p-\mathbf{B}) =: \mathbf{\Omega} + \mathbf{L},
\end{align}
where $\mathbf{L} = \mathbf{B}^\top \mathbf{\Omega}\mathbf{B} - \mathbf{B}^{\top} \mathbf{\Omega} - \mathbf{\Omega}\mathbf{B}$. 


\begin{figure}[!htb]
	\centering
	\begin{minipage}[b]{0.45\textwidth}
		\centering
		\includegraphics[width=1\textwidth]{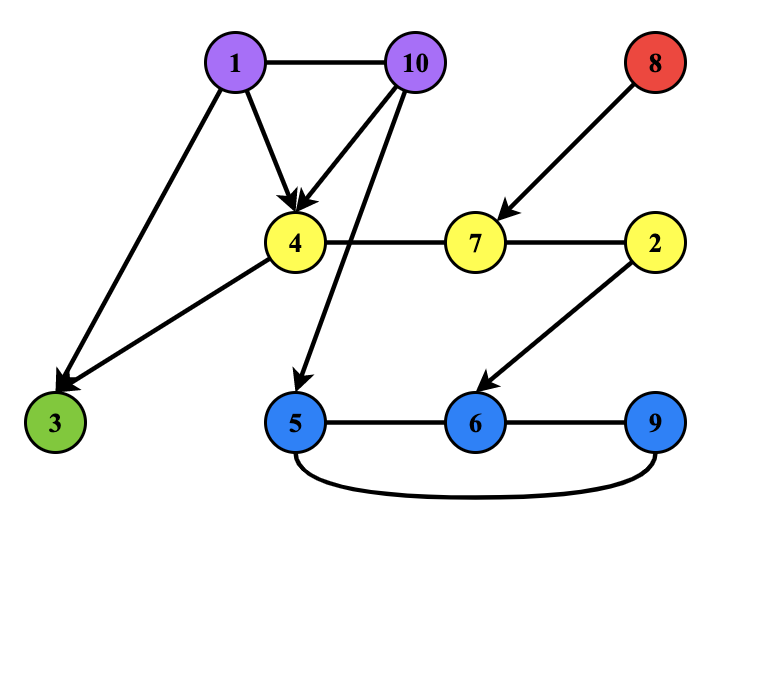}
	\end{minipage}
	\begin{minipage}[b]{0.5\textwidth}
		\centering
		\includegraphics[width=1\textwidth]{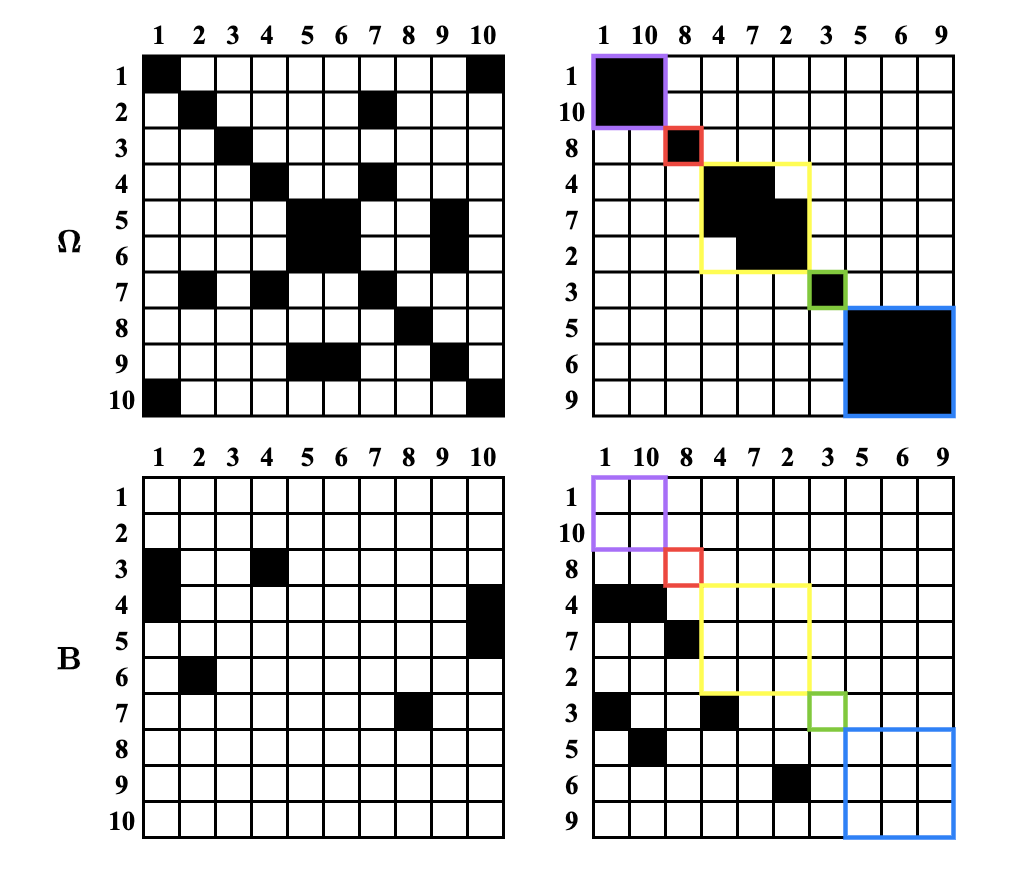}
	\end{minipage}
	\caption{The left panel displays a toy chain graph with colors indicating different chain components, and the right panel displays the supports of the original $(\OOO,\BB)$ in the first column and the permuted $(\OOO,\BB)$ in the second column.}\label{ToyExample}
\end{figure}

\section{Proposed method}\label{sec::method}

\subsection{Identifiability of $\cal G$}\label{sec::identi}

A key challenge in the chain graph model is the identifiability of the graph structure $\cal G$, due to the fact that $\OOO$ and $\BB$ are intertwined in the SEM model in \eqref{eq:model}. To proceed, we assume that $\OOO$ is sparse with $\|\OOO\|_0 = S$ and $\LL$ is a low-rank matrix with $\rank(\LL) = K$. The sparseness of $\OOO$ implies the sparseness of undirected edges in ${\cal G}$, which has been well adopted in the literature of Gaussian  graphical models \cite{Meinshausen2006, Friedman2008, Cai2011}. The low-rank of $\LL$ inherits from that of $\BB$ \cite{Fang2023}, which essentially assumes the presence of hub nodes in $\cal G$ or nodes with multiple children or parents.

Let $\LL = \UU_1 \DD_1 \UU_1^\top$ be the eigen decomposition of $\LL,$ where $\UU_1^\top \UU_1 = \II_K$ and $\DD_1$ is a $K\times K$ diagonal matrix. We define two linear subspaces, 
\begin{eqnarray}
	{\cal S}(\OOO) &=& \{\SSS\in\R^{p\times p}: \SSS^\top = \SSS,~\text{and}~s_{ij} = 0~\text{if}~\omega_{ij} = 0\}, \label{eq:tangent S} \\
	{\cal T}(\LL) &=& \left\{ \UU_1\YY+\YY^\top\UU_1^\top:~ \YY \in \R^{K\times p} \right\}, \label{eq:tangent L}
\end{eqnarray}
where ${\cal S}(\OOO)$ is the tangent space, at point $\OOO$, of the manifold containing symmetric matrices with at most $S$ non-zero entries, and ${\cal T}(\LL)$ 
is the tangent space, at point $\LL$, of the manifold containing symmetric matrices with rank at most $K$ \citep{Chandrasekaran2011}. 

\begin{Assumption}\label{ass:tangent}
	${\cal S}(\OOO)$ and ${\cal T}(\LL)$ intersect at the origin only; that is, ${\cal S}(\OOO) \cap {\cal T}(\LL) = \{\mathbf{0}_{p\times p}\}$.	
\end{Assumption}

Assumption \ref{ass:tangent} is the same as the transversality condition in \cite{Chandrasekaran2011}, which assures the identifiability of $(\OOO,\LL)$ in the sense that $\TTT$ can be uniquely decomposed as the sum of a matrix in ${\cal S}(\OOO)$ and the other one in ${\cal T}(\LL)$. It essentially holds if $\OOO$ is sparse but not low-rank and $\LL$ is low-rank but not sparse. Note that $\OOO$ is the precision matrix and thus full-rank, yet its sparsity corresponds to the sparseness of undirected edges in ${\cal G}$. The low-rank of $\LL$ inherits from that of $\BB$, which corresponds to the presence of hub nodes in ${\cal G}$ \citep{Fang2023}. Further, we have
$$L_{jk} = \sum_{l=1}^p\sum_{i=1}^p \beta_{ij}\omega_{il}\beta_{lk} - \sum_{i=1}^p \beta_{ij}\omega_{ik} - \sum_{i=1}^p \omega_{ji}\beta_{ik},$$
where the first term corresponds to the path $j\rightarrow i - l \leftarrow k$ if $l\neq i$ or the path $j\rightarrow i \leftarrow k$ if $l=i$, the second term corresponds to the path $j\rightarrow i - k$ if $i\neq k$ and $j\rightarrow k$ if $i=k$, and the third term corresponds to the path $j-i\leftarrow k$ if $i\neq j$ and $j\leftarrow k$ if $i=j$. Therefore, $L_{jk}\neq 0$ if any one of these paths exists between nodes $j$ and $k$, and thus it is reasonable to assume $\LL$ to be non-sparse. 

\begin{Assumption}\label{ass:diff}
	The $K$ eigenvalues of $\LL$ are distinct.
\end{Assumption}

Assumption~\ref{ass:diff} is necessary to identify the eigen space of the low-rank matrix $\LL,$ which has been commonly assumed in the literature of matrix perturbation \cite{yu2015useful}. 
Let $\PA$ be the parameter space of CG-feasible $(\OOO,\BB)$, where $\OOO\succ0,~\OOO+\LL\succ0,~\|\OOO\|_0\leq S,~\rank(\LL)\leq K,$ and Assumptions~\ref{ass:tangent} and \ref{ass:diff} are met. Let $(\OOO^*,\BB^*)$ denote the true parameters of the linear SEM model \eqref{eq:model} satisfying $\|\OOO^*\|_0 = S$ and $\rank(\LL^*)=K.$

\begin{theorem}\label{thm:ident}
	Suppose $(\OOO^*,\BB^*)\in\PA.$ Then, there exists a small $\epsilon>0$ such that for any $(\OOO, \BB)\in\PA$ satisfying $\| \OOO-\OOO^*\|_{\max}<\epsilon$ and $\|\BB-\BB^*\|_{\max}<\epsilon,$ if $$
	(\mathbf{I}_p-\BB)^\top\OOO (\mathbf{I}_p-\BB)=(\mathbf{I}_p-\BB^*)^\top \OOO^* (\mathbf{I}_p-\BB^*),
	$$ 
	it holds true that $(\OOO, \BB) = (\OOO^*,\BB^*).$
\end{theorem}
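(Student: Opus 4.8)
The plan is to split the identification into two stages that mirror the decomposition $\TTT = \OOO + \LL$ in \eqref{eq::decom}: first recover the sparse and low-rank pieces $(\OOO,\LL)$ from $\TTT$, and then recover $\BB$ from $(\OOO,\LL)$. Writing $\LL = \BB^\top\OOO\BB - \BB^\top\OOO - \OOO\BB$ and likewise for $\LL^*$, the hypothesis $(\II_p-\BB)^\top\OOO(\II_p-\BB) = (\II_p-\BB^*)^\top\OOO^*(\II_p-\BB^*)$ reads $\OOO + \LL = \OOO^* + \LL^*$, so that $\Delta := \OOO - \OOO^* = -(\LL - \LL^*)$. Everything hinges on first pinning down $\OOO$, since once the noise precision is fixed the directed part becomes rigid.

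\textbf{Stage 1 ($\OOO = \OOO^*$ and $\LL = \LL^*$).} First I would argue that $\OOO$ and $\OOO^*$ share support: choosing $\epsilon$ below half the smallest nonzero magnitude of $\OOO^*$, the bound $\|\OOO-\OOO^*\|_{\max}<\epsilon$ keeps every nonzero entry of $\OOO^*$ nonzero, while $\|\OOO\|_0\le S = \|\OOO^*\|_0$ forbids new nonzeros; hence $\Delta\in{\cal S}(\OOO^*)$. Next, since $\LL-\LL^*=-\Delta$ is small and $\rank(\LL)\le K=\rank(\LL^*)$ with $\sigma_K(\LL^*)>0$, the matrix $\LL$ lies on the smooth rank-$K$ manifold near $\LL^*$, so the standard second-order estimate gives $\|\mathcal{P}_{{\cal T}(\LL^*)^\perp}(\LL-\LL^*)\|_2 \le C\|\LL-\LL^*\|_2^2$ for a constant $C$ depending on $\LL^*$. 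Assumption~\ref{ass:tangent} makes ${\cal S}(\OOO^*)$ and ${\cal T}(\LL^*)$ transversal, and for finite-dimensional subspaces this yields a coercivity constant $\rho\in(0,1)$ with $\|\mathcal{P}_{{\cal T}(\LL^*)^\perp}(M)\|_2\ge\rho\|M\|_2$ for every $M\in{\cal S}(\OOO^*)$. Applying this to $M=\Delta$ and using $\mathcal{P}_{{\cal T}(\LL^*)^\perp}\Delta = -\mathcal{P}_{{\cal T}(\LL^*)^\perp}(\LL-\LL^*)$ gives $\rho\|\Delta\|_2\le C\|\Delta\|_2^2$, so either $\Delta=0$ or $\|\Delta\|_2\ge\rho/C$; shrinking $\epsilon$ so that $\|\Delta\|_2<\rho/C$ forces $\Delta=0$, i.e. $\OOO=\OOO^*$ and $\LL=\LL^*$.

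\textbf{Stage 2 ($\BB=\BB^*$).} With $\OOO=\OOO^*$ the hypothesis becomes $(\II_p-\BB)^\top\OOO(\II_p-\BB) = (\II_p-\BB^*)^\top\OOO(\II_p-\BB^*)=\TTT$, a block Cholesky-type factorization of the fixed matrix $\TTT$ in which $\OOO\succ0$ is block-diagonal with respect to the chain-component partition (determined by $\supp(\OOO)$) and each of $\II_p-\BB$, $\II_p-\BB^*$ is block lower-triangular with identity diagonal blocks under a causal ordering of that partition. The remaining point is that the ordering is forced. I would show a chain component $\tau_k$ is parent-free if and only if the corresponding diagonal block of $\TTT^{-1}$ equals $\OOO_{\tau_k,\tau_k}^{-1}$: indeed $\var(\xx_{\tau_k}) = \BB_{\tau_k,\pa(\tau_k)}\var(\xx_{\pa(\tau_k)})\BB_{\tau_k,\pa(\tau_k)}^\top + \OOO_{\tau_k,\tau_k}^{-1}$, where the cross term vanishes exactly when $\pa(\tau_k)=\emptyset$, because the block-diagonality of $\OOO$ makes $\eee_{\tau_k}$ independent of $\xx_{\pa(\tau_k)}$. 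This criterion depends only on $(\TTT,\OOO)$, so the set of roots is determined; peeling off the roots and recursing on the Schur complement $\TTT_{R,R}$ with the sub-block $\OOO_{R,R}$, where $R$ indexes the remaining components, shows the entire DAG on chain components together with its regression coefficients is uniquely determined by $(\TTT,\OOO)$. Hence $\BB=\BB^*$, and with Stage 1 this gives $(\OOO,\BB)=(\OOO^*,\BB^*)$.

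\textbf{Main obstacle.} The crux is the tension in Stage 1 between manifold curvature and transversality: $\LL-\LL^*$ is only \emph{approximately} tangent to ${\cal T}(\LL^*)$, so one must quantify its $O(\|\Delta\|_2^2)$ deviation and play it against the linear lower bound supplied by Assumption~\ref{ass:tangent}, making the constants $\rho,C$ and the radius $\epsilon$ cooperate. A secondary, conceptual subtlety is recognizing that Stage 2 works only \emph{after} $\OOO$ is fixed: without fixing the noise precision the directed edges are identifiable only up to Markov equivalence (a two-node chain admits both orientations with different noise variances), and it is precisely $\OOO=\OOO^*$ that removes this ambiguity, making Stage 2 a global statement. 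I expect Assumption~\ref{ass:diff} not to be needed for this theorem, since smoothness of the rank manifold requires only $\sigma_K(\LL^*)>0$; it instead secures the eigen-structure of $\LL$ used downstream to recover $\BB$ in the estimation algorithm.
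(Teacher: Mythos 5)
Your proposal is correct, but Stage~1 takes a genuinely different route from the paper. The paper proves $\OOO=\OOO^*$ variationally: Proposition~\ref{prop:ident} shows $(\OOO^*,\LL^*)$ is the \emph{unique} local maximizer of the population log-likelihood $\bar l(\OOO,\LL)$ over a localization set parameterized by eigenvectors and eigenvalues of $\LL$, so that equality of $\OOO+\LL$ with $\OOO^*+\LL^*$ forces equality of the pairs; this requires placing $(\OOO,\LL)$ in that set via Weyl's inequality and Davis--Kahan (hence Assumption~\ref{ass:diff}), and the transversality of Assumption~\ref{ass:tangent} enters indirectly through the invertibility of the operator $F_{\UU_1}$ in the first-order-condition analysis. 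You instead argue purely geometrically: support stability places $\Delta=\OOO-\OOO^*$ in ${\cal S}(\OOO^*)$, the curvature of the rank-$K$ variety makes $\LL-\LL^*=-\Delta$ second-order close to ${\cal T}(\LL^*)$, and the compactness-based lower bound $\|{\cal P}_{{\cal T}(\LL^*)^\perp}(M)\|\geq\rho\|M\|$ on ${\cal S}(\OOO^*)$ (valid precisely because of Assumption~\ref{ass:tangent}) yields $\rho\|\Delta\|\leq C\|\Delta\|^2$, forcing $\Delta=\mathbf{0}$ locally. This is the classical tangent-space identifiability argument of Chandrasekaran et al.; it is shorter, avoids the likelihood machinery entirely, and — as you correctly observe — needs only $\sigma_K(\LL^*)>0$ rather than distinct eigenvalues, so it sharpens the statement by dropping Assumption~\ref{ass:diff} from the identifiability result. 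What the paper's heavier route buys is reuse: Proposition~\ref{prop:ident}, the operator $F$, and the eigenvector-localized set are exactly the ingredients recycled in the consistency proof of Theorem~\ref{thm:Omega}. Your Stage~2 (root components are those with $\var(\xx_{\tau_k})=\OOO_{\tau_k,\tau_k}^{-1}$, recurse on Schur complements, read off $\BB$ from population regressions) is essentially identical to the paper's argument via Lemma~3 and the ${\cal D}(\tau_k,{\cal C})$ criterion, differing only in that you test whole diagonal blocks where the paper maximizes over diagonal entries.
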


Theorem \ref{thm:ident} establishes the local identifiability of $(\OOO^*,\BB^*)$ in \eqref{eq:model}, which further implies the local identifiability of the chain graph $\cal G$. It essentially states that $(\OOO^*,\BB^*)\in\PA$ can be uniquely determined, within its neighborhood in $\PA$, by the precision matrix $\TTT^*=(\mathbf{I}_p-\BB^*)^\top \OOO^* (\mathbf{I}_p-\BB^*),$ which can be consistently estimated from the observed sample. 

\begin{remark}[{\bf Identifiability for DAG}]
	When $\OOO$ is indeed a diagonal matrix, each chain component contains exactly one node and thus $\cal G$ reduces to a DAG. By Theorem \ref{thm:ident}, it is identifiable as long as the eigenvalues of $\LL^*$ are distinct and $\ee_j\notin\text{span}(\UU_1^*)$ for any $j \in [p]$, where $\UU_1^*\in\R^{p\times K}$ contains eigenvectors of $\LL^*$ and $\{\ee_j\}_{j=1}^p$ are the standard basis of $\R^p.$ It provides an alternative identifiability condition for DAG, in contrast to the popularly-employed equal noise variance assumption \cite{Peters2014}. These two assumptions are incomparable though. Particularly, in our identifiability condition for DAG, $\OOO$ is a diagonal matrix but the diagonal elements can vary. On the other hand, the equal noise variance assumption also does not imply our identifiability condition. For example, let $p=3$, $\OOO = \mathbf{I}_3$ and $\BB = (b_{ij})_{3\times 3}$ with $b_{21}=1$ and other entries being 0, then
	$$
	\LL = 
	\begin{pmatrix}
		1 & -1 & 0\\
		-1 & 0 & 0 \\
		0 & 0 & 0
	\end{pmatrix}
	$$ where it is easy to verify that $\ee_1, \ee_2 \in \text{span}(\UU_1)$.
\end{remark}


\subsection{Learning algorithm}\label{sec:est}

We now develop a learning algorithm to estimate $(\mathbf{\Omega}^*, \mathbf{B}^*)$ and reconstruct the chain graph ${\cal G}$. Suppose we observe independent copies $\xx_1,...,\xx_n\in\R^p$ and denote $\mathbf{X}=(\mathbf{x}_1,..., \mathbf{x}_n)^\top\in \mathbb{R}^{n\times p}$.
We first estimate $\mathbf{\Omega}^*$ via the following regularized likelihood,
\begin{align} \label{eq:opti1}
	(\widehat\OOO,\widehat\LL) = \argmin_{\OOO,\LL} & \ \big \{-l(\OOO+\LL) + \lambda_n  (\|\OOO\|_{1,\off} + \gamma\|\LL\|_* ) \big \} \\
	\mbox{subject to} &~~ \OOO \succ 0 \mbox{~~and~~}\OOO+\LL \succ0, \nonumber
\end{align}
where $l(\TTT) = -\text{tr}(\mathbf{\Theta} \widehat \SG) + \log\{\det (\mathbf{\Theta})\}$ is the Gaussian log-likelihood with $\widehat\SG=\frac{1}{n}\mathbf{X}^\top \mathbf{X}$, and $\lambda_n$ and $\gamma$ are tuning parameters. Here $\|\OOO\|_{1,\off} = \sum_{i\neq j}|\omega_{ij}|$ induces sparsity in $\OOO$, $\|\LL\|_*$ induces low-rank of $\LL$, and the constraints are due to the fact that both $\OOO$ and $\mathbf{\Theta}=\OOO+\LL$ are precision matrices. Note that the optimization task in \eqref{eq:opti1} is convex, and can be efficiently solved via the alternative direction method of multipliers (ADMM; \cite{BoydS2011}). More computational details are deferred to the Supplementary Files. 

Once $\widehat\OOO$ is obtained, we connect nodes $i$ and $j$ with undirected edges if $\widehat\omega_{ij}\neq 0$, which leads to multiple estimated chain components, denoted as $\widehat\tau_1,...,\widehat\tau_{\widehat m}$. We can further determine their causal ordering by comparing the estimated conditional variance of node $i\in \widehat\tau_k$ with $\widehat{\OOO}_{ii}^{-1}$. Particularly, the conditional variance of node $i$ can be decomposed into two parts, with one corresponding to its parent nodes and the other from $\OOO_{ii}^{-1}$. Therefore, if all the parent nodes of node $i$ are given, the remaining variance should all come from $\OOO_{ii}^{-1}$, otherwise the remaining variance should be larger than $\OOO_{ii}^{-1}$. This observation shares the same spirit as \cite{ChenW2019} in determining causal ordering, without requiring equal noise variances though. More specifically, for each $\widehat\tau_k$ and any $\CCC\subset[p]\backslash\widehat\tau_k$, we define 
$$
\widehat{\DDDD}(\widehat{\tau}_k,\CCC)= \max_{i\in \widehat{\tau}_k} \Big \{ \widehat{\mathbf{\Sigma}}_{ii} - \widehat{\mathbf{\Sigma}}_{i {\CCC}} \widehat{\mathbf{\Sigma}}_{{\CCC}{\CCC}}^{-1} \widehat{\mathbf{\Sigma}}_{ {\CCC}i} - \widehat{\OOO}_{ii}^{-1} \Big \},
$$
where $\widehat{\mathbf{\Sigma}}_{ii} - \widehat{\mathbf{\Sigma}}_{i {\CCC}} \widehat{\mathbf{\Sigma}}_{{\CCC}{\CCC}}^{-1} \widehat{\mathbf{\Sigma}}_{ {\CCC}i}$ is the estimated conditional variance for node $i\in \widehat\tau_k$ given nodes in $\CCC$, and $\widehat{\OOO}_{ii}^{-1}$ is the estimated variance for node $i\in\widehat\tau_k$ given its parent chain components. It is thus clear that $\widehat{\DDDD}(\widehat{\tau}_k,\CCC)$ shall be close to 0 if $\CCC$ consists of all upper chain components of $\widehat\tau_k$. We start with $\widehat{\cal D}(\widehat{\tau}_k, \emptyset)= \max_{i\in \widehat{\tau}_k} \big \{ \widehat{\mathbf{\Sigma}}_{ii} - \widehat{\OOO}_{ii}^{-1} \big \}$ for each chain component $\widehat{\tau}_k$, and select the first chain component by $\widehat{\pi}_1 = \argmin_{l\in [\widehat m]}\widehat{\cal D}(\widehat{\tau}_l,\emptyset)$.
Suppose the first $s$ chain components $\widehat{\tau}_{\widehat{\pi}_1},...,\widehat{\tau}_{\widehat{\pi}_s}$ have been selected, let $\widehat{\cal C}_s = \cup_{k=1}^s \widehat{\tau}_{\widehat{\pi}_k}$ and $\widehat{\pi}_{s+1} = \argmin_{l\in [\widehat m] \setminus \cup_{k=1}^s \widehat{\pi}_k}\widehat{\cal D}(\widehat{\tau}_l, \widehat{\cal C}_s)$.
We repeat this procedure until the causal orderings of all $\widehat{\tau}_k$'s are determined, which are denoted as $\widehat{\boldsymbol\pi}=(\widehat{\pi}_1,...,\widehat{\pi}_{\widehat{m}})$.

Finally, we estimate $\BB^*$ based on the  estimated causal ordering $\widehat{\boldsymbol\pi}$ of the chain components. Particularly, $\widehat{\cal C}_{k-1}$ consists of all the chain components in front of $\widehat{\pi}_k$, and then the directed edges, if present, can only point from nodes in $\widehat{\cal C}_{k-1}$ to nodes in $\widehat{\pi}_k$. Thus, we first give an intermediate estimate $\widehat{\BB}^{\text{reg}}$, whose submatrix $\widehat{\BB}^{\text{reg}}_{\widehat{\tau}_{\widehat{\pi}_k}, \widehat{\cal C}_{k-1}}$ is obtained via a multivariate regression of $\xx_{\widehat{\tau}_{\widehat{\pi}_k}}$ on $\xx_{\widehat{\cal C}_{k-1}}$. Given $\widehat{\BB}^{\text{reg}}$, we conduct singular value decomposition (SVD) as $\widehat{\BB}^{\text{reg}}=\widehat{\UU}^{\text{reg}}\widehat{\DD}^{\text{reg}}(\widehat{\VV}^{\text{reg}})^\top$ with $\widehat{\DD}^{\text{reg}}=\diag(\widehat{\sigma}_1^{\text{reg}},...,\widehat{\sigma}_p^{\text{reg}})$, and then truncate the small singular values to obtain $\widehat{\BB}^{\text{svd}}=\widehat{\UU}^{\text{reg}}\widehat{\DD}^{\text{svd}}(\widehat{\VV}^{\text{reg}})^\top$, where   $\widehat{\DD}^{\text{svd}}=\diag(\widehat{\sigma}_1^{\text{svd}},...,\widehat{\sigma}_p^{\text{svd}})$ with $\widehat{\sigma}_j^{\text{svd}}=0$ if $\widehat{\sigma}_j^{\text{reg}}\leq\kappa_n$ and $\widehat{\sigma}_j^{\text{svd}}=\widehat{\sigma}_j^{\text{reg}}$ if $\widehat{\sigma}_j^{\text{reg}}>\kappa_n$, for some pre-specified $\kappa_n>0$. 
The final estimate $\widehat{\BB}=(\widehat{\beta}_{ij})_{p\times p}$ is obtained by truncating the diagonal and upper triangular blocks to 0, and conducting a hard thresholding to the lower triangular blocks with some pre-specified $\nu_n>0$, where $\widehat{\beta}_{ij}=0$ if $|\widehat{\beta}_{ij}^{\text{svd}}|\leq \nu_n$ and $\widehat{\beta}_{ij}=\widehat{\beta}_{ij}^{\text{svd}}$ if $|\widehat{\beta}_{ij}^{\text{svd}}|> \nu_n$. 
The nonzero elements of $\widehat{\BB}$ then lead to the estimated directed edges. 

The proposed learning algorithm is computational efficient, whose computational complexity is roughly of polynomial order of $p$. Specifically, in estimating $\OOO^*$, the optimization \eqref{eq:opti1} is convex and can be solved in polynomial time. The causal ordering of chain components is recovered via iteratively computing $\widehat{\cal D}(\widehat{\tau}_k, {\cal C})$ and searching for the chain component with minimal $\widehat{\cal D}(\widehat{\tau}_k, {\cal C})$. It is clear that the above procedure involves computing conditional variances, and finding maximum and minimum, whose complexity is of polynomial order in $p$. Finally, the complexity of  multivariate regressions coupled with truncated SVD to reconstruct the directed edges is also of polynomial order in $p$.

\section{Asymptotic theory}\label{sec:theory}

This section quantifies the asymptotic behavior of $(\widehat\OOO,\widehat\BB)$, and establishes its consistency for reconstructing the chain graph ${\cal G}^* = ({\cal N}, {\cal E}^*)$.
Let $\TTT^* = (\mathbf{I}_p-\BB^*)^\top \OOO^* (\mathbf{I}_p-\BB^*),$ and the Fisher information matrix takes the form 
$$
\III^* = - \mathbb{E} \Big[\frac{\partial^2 l (\mathbf{\Theta}^*)}{\partial \mathbf{\Theta}^2} \Big]  =  (\mathbf{\Theta}^*)^{-1} \otimes (\mathbf{\Theta}^*)^{-1},
$$
where $\otimes$ denotes the Kronecker product. For a linear subspace $\MMM,$ let ${\cal P}_{{\cal M}}$ denote the projection onto ${\cal M},$ and ${\cal M}^{\perp}$ denote the orthogonal complement of ${\cal M}$. We further define two linear operators $F : {\cal S}(\OOO^*) \times {\cal T}(\LL^*) \rightarrow {\cal S}(\OOO^*) \times {\cal T}(\LL^*)$ and $F^\perp : {\cal S}(\OOO^*) \times {\cal T}(\LL^*) \rightarrow {\cal S}(\OOO^*)^\perp \times {\cal T}(\LL^*)^\perp$ such that 
\begin{align*}
	F(\DDD_{\OOO},\DDD_{\LL}) &= ({\cal P}_{{\cal S}(\OOO^*)} ( \III^*v(\DDD_{\OOO} + \DDD_{\LL})), {\cal P}_{{\cal T}(\LL^*)} ( \III^* v(\DDD_{\OOO} + \DDD_{\LL}))), \\
	F^\perp(\DDD_{\OOO},\DDD_{\LL}) &= ({\cal P}_{{\cal S}(\OOO^*)^{\perp}} ( \III^*v(\DDD_{\OOO} + \DDD_{\LL})), {\cal P}_{{\cal T}(\LL^*)^\perp} ( \III^*v(\DDD_{\OOO} + \DDD_{\LL}))).
\end{align*}
According to Assumption \ref{ass:tangent} and Lemma 1 in the Supplementary Files, $F$ is invertible and thus $F^{-1}$ is well defined. 

Let $g_{\gamma}(\OOO,\LL)=\max\{\|\OOO\|_{\max}, \|\LL\|_2/\gamma\},$ where $\gamma>0$ is the same as that in \eqref{eq:opti1}. Let $\LL^* = \UU_1^*\DD_1^*(\UU_1^*)^\top$ be the eigen decomposition of $\LL^*.$


\begin{Assumption}\label{ass:incor}
	$g_{\gamma}(F^{\perp} F^{-1}(\sign(\OOO^*), \gamma \UU_1^*\sign(\DD_1^*) (\UU_1^*)^\top)) < 1$.
\end{Assumption}

Assumption~\ref{ass:incor} is essential for establishing the selection and rank consistency of $(\widehat\OOO,\widehat\LL)$ through penalties in \eqref{eq:opti1}. For example, in a special case when $\TTT^* = \II_p$ and ${\cal S}(\OOO^*) \perp {\cal T}(\LL^*)$, Assumption~\ref{ass:incor} simplifies to $\max\{\gamma\|\UU_1^*\sign(\DD_1^*) (\UU_1^*)^\top\|_{\max},\|\sign(\OOO^*)\|_2/\gamma\}<1$, implying that $\UU_1^*$ is not sparse and $\sign(\OOO^*)$ is not a low-rank matrix. Similar technical conditions have also been assumed in the literature \cite{Chandrasekaran2011, chen2016fused}.

\begin{theorem}\label{thm:Omega}
	Suppose $(\OOO^*,\BB^*)\in\PA$ and Assumption~\ref{ass:incor} holds. Let $\lambda_n = n^{-1/2+\eta}$ with a sufficiently small positive constant $\eta$. Then, with probability approaching 1, \eqref{eq:opti1} has a unique solution $(\widehat\OOO,\widehat\LL).$ Furthermore, we have
	\begin{equation}\label{eq:consis}
		\begin{aligned}
			&\|\widehat\OOO - \OOO^*\|_{\max}  \lesssim_P n^{-\frac{1}{2}+2\eta},~~~~ &&\Pr(\sign(\widehat\OOO) = \sign(\OOO^*)) \to 1, \\
			&\|\widehat\LL - \LL^*\|_{\max} \lesssim_Pn^{-\frac{1}{2}+2\eta},~~~~&&\Pr(\rank(\widehat\LL) = \rank(\LL^*)) \to 1.
		\end{aligned}
	\end{equation}
\end{theorem}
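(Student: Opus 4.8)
The plan is to follow the primal--dual witness (PDW) strategy adapted to the low-rank plus sparse setting of \cite{Chandrasekaran2011}, working in the fixed-$p$, $n\to\infty$ regime so that the driving sampling fluctuation is $\|\widehat\SG - (\TTT^*)^{-1}\|_{\max} = O_P(n^{-1/2})$, where $\SG^* := (\TTT^*)^{-1}$. First I would record the subgradient (KKT) optimality conditions for the convex program \eqref{eq:opti1}. Writing $\TTT = \OOO + \LL$ and using $\nabla l(\TTT) = \TTT^{-1} - \widehat\SG$, a strictly feasible pair (so that the positive-definiteness constraints are inactive and carry no multiplier) is optimal if and only if $\TTT^{-1} - \widehat\SG = \lambda_n Z_\OOO = \lambda_n\gamma Z_\LL$, where $Z_\OOO \in \partial\|\OOO\|_{1,\off}$ has zero diagonal, equals $\sign(\omega_{ij})$ on $\supp(\OOO)$ and lies in $[-1,1]$ off the support, while $Z_\LL = \UU_1\sign(\DD_1)\UU_1^\top + \WW$ with $\mathcal P_{\mathcal T(\LL)}\WW = 0$ and $\|\WW\|_2 \le 1$.

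Second, I would construct the witness $(\widetilde\OOO,\widetilde\LL)$ as the minimizer of \eqref{eq:opti1} restricted to the correct model, namely $\widetilde\OOO - \OOO^* \in \mathcal S(\OOO^*)$ and $\widetilde\LL - \LL^* \in \mathcal T(\LL^*)$, with subgradients matched to the true signs on $\supp(\OOO^*)$ and to the true eigenspace $\UU_1^*$. Linearizing $\widetilde\TTT^{-1} - (\TTT^*)^{-1}$ about $\TTT^*$ produces precisely the Fisher operator $\III^* = (\TTT^*)^{-1}\otimes(\TTT^*)^{-1}$, so the stationarity condition projected onto $\mathcal S(\OOO^*)\times\mathcal T(\LL^*)$ reads, to first order, $F(\DDD_\OOO,\DDD_\LL) = \mathcal P\big(v(\widehat\SG - \SG^*)\big) - \lambda_n\big(\sign(\OOO^*),\,\gamma\UU_1^*\sign(\DD_1^*)(\UU_1^*)^\top\big) + R$, where $(\DDD_\OOO,\DDD_\LL)=(\widetilde\OOO-\OOO^*,\widetilde\LL-\LL^*)$ and $R$ is the second-order Taylor remainder. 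Since $F$ is invertible by Assumption~\ref{ass:tangent} and Lemma~1 of the Supplement, I can solve for $(\DDD_\OOO,\DDD_\LL)$; using $\|\widehat\SG-\SG^*\|_{\max}=O_P(n^{-1/2})$, $\lambda_n=n^{-1/2+\eta}$, and the fact that $R$ is quadratic in the deviations (hence negligible after a self-bounding argument), this yields $\|\DDD_\OOO\|_{\max},\|\DDD_\LL\|_2 \lesssim_P n^{-1/2+2\eta}$. Assumption~\ref{ass:diff} (distinct eigenvalues) and a Davis--Kahan bound \cite{yu2015useful} then ensure the eigenspace $\widetilde\UU_1$ of $\widetilde\LL$ stays within $O_P(n^{-1/2+2\eta})$ of $\UU_1^*$, so that $\mathcal T(\widetilde\LL)$ does not jump and the witness genuinely has rank $K$.

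Third, and this is the crux, I would verify \emph{strict} dual feasibility on the complementary subspaces, i.e.\ that the residual $(\widetilde\TTT^{-1}-\widehat\SG)/\lambda_n$ has off-support $\ell_\infty$-value strictly below $1$ and off-tangent spectral value strictly below $1$. The deterministic prediction of these complementary components is exactly $F^\perp F^{-1}\big(\sign(\OOO^*),\,\gamma\UU_1^*\sign(\DD_1^*)(\UU_1^*)^\top\big)$, whose $g_\gamma$-value is strictly below $1$ by Assumption~\ref{ass:incor}. It then remains to show the stochastic and nonlinear perturbations do not erode this fixed margin: as those perturbations are $O_P(n^{-1/2+2\eta})$ while the gap $1 - g_\gamma(\cdot)>0$ is a constant, with probability tending to one the witness satisfies every KKT condition strictly, so it is a global optimum with $\supp(\widehat\OOO)\subseteq\supp(\OOO^*)$ and $\rank(\widehat\LL)\le K$. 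Uniqueness follows from strict concavity of $\log\det$ in $\TTT$ together with the transversality of $\mathcal S(\OOO^*)$ and $\mathcal T(\LL^*)$ in Assumption~\ref{ass:tangent}, which pins down the decomposition of $\widehat\TTT$ uniquely.

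Finally, the stated conclusions fall out. The max-norm bounds $\|\widehat\OOO-\OOO^*\|_{\max},\|\widehat\LL-\LL^*\|_{\max}\lesssim_P n^{-1/2+2\eta}$ are the deviation bounds from the second step. For sign consistency, the smallest nonzero magnitude $\min_{(i,j)\in\supp(\OOO^*)}|\omega^*_{ij}|$ is a positive constant that eventually dominates the $n^{-1/2+2\eta}$ deviation, so combined with the support containment from the PDW no true edge is dropped and no false edge is added, giving $\Pr(\sign(\widehat\OOO)=\sign(\OOO^*))\to1$. For rank consistency, the $K$-th eigenvalue of $\LL^*$ is a positive constant, so $\|\widehat\LL-\LL^*\|_2\to0$ with $\rank(\widehat\LL)\le K$ forces $\rank(\widehat\LL)=K$ with probability tending to one. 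I expect the third step to be the main obstacle: propagating the deterministic incoherence margin of Assumption~\ref{ass:incor} through the nonlinear inverse-map remainder $R$ and the eigenspace perturbation, so that strict dual feasibility survives on both complementary subspaces simultaneously with high probability.
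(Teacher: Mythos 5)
Your overall strategy --- an oracle/restricted estimator plus a primal--dual witness certified by Assumption~\ref{ass:incor}, followed by the min-signal and eigengap arguments for sign and rank consistency --- is the same one the paper uses: the paper's $\OS_2$-constrained minimizer is exactly your witness, its verification of \eqref{eq:orthogonal cond} is your strict dual feasibility step, and the rates, the role of $F^{-1}$, and the final consistency arguments all match.

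There is, however, a genuine gap in how you build the low-rank half of the witness. You restrict to the affine space $\widetilde\LL - \LL^* \in {\cal T}(\LL^*)$, but matrices of the form $\LL^* + \UU_1^*\YY + \YY^\top(\UU_1^*)^\top$ have column space contained in $\mathrm{span}(\UU_1^*) + \mathrm{span}(\YY^\top)$ and hence rank up to $2K$; Davis--Kahan does not rescue this, because it presupposes a rank-$K$ matrix whose leading eigenspace you can compare to $\UU_1^*$. Consequently your witness need not satisfy $\rank(\widetilde\LL)\le K$, the claimed conclusion ``$\rank(\widehat\LL)\le K$'' does not follow from the KKT verification, and the rank-consistency argument in your last paragraph loses its premise. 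Relatedly, the nuclear-norm subgradient at the witness is $\UU_1\sign(\DD_1)\UU_1^\top + \WW$ with $\WW$ orthogonal to the tangent space \emph{at $\widetilde\LL$}, so strict dual feasibility must be checked on ${\cal T}(\widetilde\LL)^\perp$, not on ${\cal T}(\LL^*)^\perp$ where Assumption~\ref{ass:incor} is stated. The paper resolves both issues at once: it first minimizes $h$ over the rank-$\le K$ variety (the set $\OS_2$), identifies that minimizer with the minimizer over the linear space ${\cal S}(\OOO^*)\times D(\widehat\UU_{\OS_2})$ spanned by the \emph{estimated} eigenvectors via a contraction/fixed-point argument (Lemmas 4--5 of the Supplement), separately bounds $\|\widehat\UU_{\OS_2}-\UU_1^*\|_{\max}$ through the quadratic-form lower bound involving $\III^*$, and only then transfers Assumption~\ref{ass:incor} from ${\cal T}(\UU_1^*)^\perp$ to ${\cal T}(\widehat\UU_{\OS_2})^\perp$ using the Lipschitz continuity of the projection. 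Your proof would need this (or the equivalent machinery of \cite{Chandrasekaran2011, chen2016fused}) inserted between your second and third steps; as written, the ``main obstacle'' you flag is not merely a technical nuisance but the place where the argument is currently incomplete.
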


Theorem \ref{thm:Omega} shows that $\widehat\OOO$ has both estimation and sign consistency, which implies that the undirected edges in ${\cal G}^*$ could be exactly reconstructed with high probability. It can also be shown that $\widehat\BB$ attains both estimation and selection consistency, implying the exact recovery of the directed edges in ${\cal G}^*$. Furthermore, given $(\widehat\OOO,\widehat\BB)$, we reconstruct the chain graph as $\widehat{\cal G}=({\cal N}, \widehat{\cal E})$, where $(i\rightarrow j) \in \widehat{\cal E}$ if and only if $\widehat{\beta}_{ji} \neq 0$, and $(i - j) \in \widehat{\cal E}$ if and only if $\widehat{\omega}_{ij} \neq 0$. The following Theorem~\ref{thm:Graph} establishes the consistency of $\widehat{\cal G}$.

\begin{theorem}\label{thm:Graph}
	Suppose all the conditions in Theorem \ref{thm:Omega} are satisfied, and we set $\kappa_n = n^{-1/2+\eta}$ and $\nu_n = n^{-1/2+2\eta}$ with a sufficiently small positive constant $\eta$. Then, we have 
 \begin{equation*}
		\begin{aligned}
			&\|\widehat\BB - \BB^*\|_{\max}  \lesssim_P n^{-\frac{1}{2}+\eta},~~~~ &&\Pr(\sign(\widehat\BB) = \sign(\BB^*)) \to 1.
		\end{aligned}
  \end{equation*}
 Furthermore, it also holds true that
 $$\Pr ( \widehat{\cal G} = {\cal G}^* ) \to 1 \ \text{as} \ n\to \infty.$$
\end{theorem}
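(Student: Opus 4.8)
The plan is to carry the consistency of $(\widehat\OOO,\widehat\LL)$ from Theorem~\ref{thm:Omega} through the three stages of the algorithm --- chain-component recovery, causal-ordering recovery, and directed-edge estimation --- and then read off the graph. Throughout I work on the (asymptotically probability-one) event on which the conclusions of Theorem~\ref{thm:Omega} hold, and I use that for fixed $p$ the sample covariance concentrates, $\|\widehat\SG-\SG^*\|_{\max}\lesssim_P n^{-1/2}$, so that all quantities below (conditional variances, OLS coefficients, $\widehat\OOO^{-1}$) are smooth, hence locally Lipschitz, functions of $\widehat\SG$ and $\widehat\OOO$ near the invertible population objects $\SG^*,\OOO^*$. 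The first step is immediate: since $\sign(\widehat\OOO)=\sign(\OOO^*)$ with probability tending to one, the recovered undirected edges, and therefore the estimated chain components $\{\widehat\tau_k\}$, coincide with the truth $\{\tau_k\}$ with probability tending to one; I condition on this event for the remainder.

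For the causal ordering, introduce the population analogue ${\cal D}(\tau_k,\CCC)=\max_{i\in\tau_k}\{\SG^*_{ii}-\SG^*_{i\CCC}(\SG^*_{\CCC\CCC})^{-1}\SG^*_{\CCC i}-(\OOO^{*-1})_{ii}\}$. Using the factorization $\xx_{\tau_k}\mid\xx_{\pa(\tau_k)}\sim N(\cdot,(\OOO^*_{\tau_k,\tau_k})^{-1})$ together with the block-diagonality of $\OOO^*$ across chain components --- which gives $(\OOO^{*-1})_{ii}=(\OOO^*_{\tau_k,\tau_k})^{-1}_{ii}=\var(x_i\mid\xx_{\pa(\tau_k)})$ --- I will establish the dichotomy that ${\cal D}(\tau_k,\CCC)=0$ whenever $\CCC\supseteq\pa(\tau_k)$ consists of non-descendants of $\tau_k$, and ${\cal D}(\tau_k,\CCC)\ge\delta$ for a fixed $\delta>0$ whenever some parent of $\tau_k$ is absent from $\CCC$. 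The Lipschitz bound then yields $\max|\widehat{\cal D}(\tau_k,\CCC)-{\cal D}(\tau_k,\CCC)|\lesssim_P n^{-1/2+2\eta}\to0$ uniformly over the finitely many pairs, so the greedy rule selects, at each step, only a component all of whose parents are already chosen. An induction on the selection order (the selected set stays ancestrally closed, so every unselected component has all already-selected predecessors among its non-descendants) shows $\widehat{\boldsymbol\pi}$ is a valid topological ordering with $\widehat{\cal C}_{k-1}\supseteq\pa(\widehat\tau_{\widehat\pi_k})$, with probability tending to one.

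Given a valid ordering, the population regression of $\xx_{\tau_k}$ on $\xx_{{\cal C}_{k-1}}$ returns exactly $\BB^*_{\tau_k,{\cal C}_{k-1}}$ (zero on non-parent columns), because $\xx_{\tau_k}=\BB^*_{\tau_k,\pa(\tau_k)}\xx_{\pa(\tau_k)}+\eee_{\tau_k}$ with $\eee_{\tau_k}$ independent of the upper components; hence $\|\widehat\BB^{\text{reg}}-\BB^*\|_{\max}\lesssim_P n^{-1/2}$. Weyl's inequality then controls the singular values of $\widehat\BB^{\text{reg}}$: those matching a zero singular value of $\BB^*$ are $O_P(n^{-1/2})<\kappa_n$, while the nonzero ones are bounded below by a constant $>\kappa_n$, so the truncation recovers $\rank(\BB^*)$ and inflates the max-norm error by at most $O(\kappa_n)$, giving $\|\widehat\BB^{\text{svd}}-\BB^*\|_{\max}\lesssim_P n^{-1/2+\eta}$. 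Zeroing the diagonal and upper-triangular blocks is exact since the ordering is valid, and hard-thresholding the lower blocks at $\nu_n=n^{-1/2+2\eta}$ keeps every true signal (magnitude $\ge\beta_{\min}>\nu_n$) and kills every spurious entry (magnitude $\le n^{-1/2+\eta}<\nu_n$), yielding both $\|\widehat\BB-\BB^*\|_{\max}\lesssim_P n^{-1/2+\eta}$ and $\sign(\widehat\BB)=\sign(\BB^*)$ with probability tending to one. Finally, $(i\rightarrow j)\in\widehat{\cal E}\iff\widehat\beta_{ji}\ne0$ and $(i-j)\in\widehat{\cal E}\iff\widehat\omega_{ij}\ne0$, so the sign consistency of $\widehat\BB$ and of $\widehat\OOO$ together classify every edge correctly, and a union bound over the finitely many candidate edges gives $\Pr(\widehat{\cal G}={\cal G}^*)\to1$.

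The main obstacle is the strict separation $\delta>0$ in the second paragraph: I must rule out the degenerate possibility that an omitted parent's effect on some node of $\tau_k$ is exactly absorbed by the conditioning set $\CCC$, which would make the conditional variance fail to increase. Establishing this requires a non-degeneracy argument --- that $\var(x_i\mid\xx_\CCC)$ strictly exceeds $\var(x_i\mid\xx_{\pa(\tau_k)})$ when a parent with nonzero coefficient is dropped --- leveraging $\beta_{ij}\ne0$ for $j\in\pa(i)$ and the positive definiteness of $\SG^*$, and then taking $\delta$ to be the minimum of the resulting finitely many positive gaps. Propagating this gap correctly through the induction while simultaneously tracking the estimation events is the delicate part; the remaining steps are routine perturbation and thresholding arguments.
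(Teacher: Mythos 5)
Your proposal is correct and follows essentially the same route as the paper: recover the chain components from the sign consistency of $\widehat\OOO$, recover the ordering via the population gap in ${\cal D}(\tau_k,\CCC)$ plus uniform convergence of $\widehat{\cal D}$ (the paper's Proposition~\ref{prop:LayerOrder}), and then bound $\|\widehat\BB^{\text{svd}}-\BB^*\|_{\max}$ by the regression error plus the $\kappa_n$ truncation error before thresholding at $\nu_n$. The one step you explicitly leave open --- the strict positivity of ${\cal D}(\tau_k,\CCC)$ when a parent is omitted --- is exactly the paper's Lemma~3, which writes $\Cov(\xx_{\tau_k}\mid\xx_{\CCC})=\OOO^{-1}_{\tau_k,\tau_k}+\sum_{j}\A_{\tau_k,\tau_j}\OOO^{-1}_{\tau_j,\tau_j}\A_{\tau_k,\tau_j}^\top$ with $\A=(\II_p-\BB)^{-1}$ and avoids the cancellation you worry about by choosing an omitted parent $t$ of some $l\in\tau_k$ with no other directed path from $t$ to $l$, so that the relevant total causal effect equals $\beta_{lt}\neq0$.
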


Theorem \ref{thm:Graph} shows that $\widehat\BB$ achieves both estimation and sign consistency, which further leads to exact recovery of ${\cal G}^*$ with high probability. This is in sharp contrast to the existing methods only recovering some Markov equivalent class of the chain graph \citep{Studeny1997, Pena2014, Ma2008, Sonntag2012, Javidian2018, Pena20142, Javidian2020}.

\section{Numerical experiments}\label{sec:experiment}

\subsection{Simulated examples}

We examine the numerical performance of the proposed method \footnote{An R package ``LearnCG'' has been developed and can be downloaded from https://github.com/Ruixuan-Zhao/LearnCG.} and compare it against existing structural learning methods for chain graph, including the decomposition-based algorithm (LCD, \cite{Javidian2020}) and PC like algorithm (PC-like, \cite{Pena20142, Javidian2020}), as well as the PC algorithm for DAG (PC, \cite{Kalisch2007}). 
The implementations of LCD and PC-like are available at \url{https://github.com/majavid/AMPCGs2019}. We implement PC algorithm through R packages \texttt{pcalg} and then convert the resulted partial DAG to DAG by \texttt{pdag2dag}. The significance level of all tests in LCD, PC-like and PC is set as $\alpha = 0.05$. For the proposed method, we set $\eta=1/8$, $\gamma=2$ and $\lambda_n, \kappa_n$ and $\nu_n$ following the results in Theorems \ref{thm:Omega} and \ref{thm:Graph}.  Note that the numerical performance of the proposed method can be further refined if these tuning parameters are properly tuned via some data-adaptive scheme, at the cost of increased computational expense. 



We evaluate the numerical performances of all four methods in terms of the estimation accuracy of undirected edges, directed edges and the overall chain graph. Specifically, 
we report recall, precision and Matthews correlation coefficient (MCC) as evaluation metrics for the estimated undirected edges and directed edges, respectively. Furthermore, we employ Structural Hamming Distance (SHD) \cite{Tsamardinos2006, Javidian2020} to evaluate the estimated chain graph, which is the number of edge insertions, deletions or flips to change the estimated chain graph to the true one. Note that large values of recall, precision and MCC and small values of SHD indicate good estimation performance.


{\bf Example 1.} We consider a classic two-layer Gaussian graphical model  \cite{Lin2016, Mccarter2014} with two layers ${\cal A}_1 = \{1,..., \lceil 0.1p\rceil \}$ and ${\cal A}_2 = \{\lceil 0.1p\rceil +1, ..., p\}$, 
whose structure is illustrated in Figure \ref{SimulExample}(a). 
Within each layer, we randomly connect each pair of nodes by an undirected edge with probability $0.02$, and note that one layer may contain multiple chain components. Then, we generate the directed edges from nodes in ${\cal A}_1$ to nodes in ${\cal A}_2$ with probability $0.8$. Furthermore, the non-zero values of $\omega_{ij}$ and $\beta_{ij}$ are uniformly generated from $[-1.5,-0.5]\cup[0.5,1.5]$. To guarantee the positive definiteness of $\OOO$, each diagonal element is set as $\omega_{ii}=\sum_{j=1,j\neq i}^p \omega_{ji}+0.1$.

{\bf Example 2.} The structure of the second chain graph is illustrated in Figure \ref{SimulExample}(b). Particularly, we  randomly connect each pair of nodes by an undirected edge with probability $0.03$, and read off multiple chain components $\{\tau_1,..., \tau_m\}$ from the set of undirected edges. Then, we set the causal ordering of the chain components as $(\pi_1,...,\pi_m) = (1,...,m)$. For each chain component $\tau_k$, we randomly select the nodes as hubs with probability $0.2$, and let each hub node points to the nodes in $\cup_{i=k+1}^m \tau_i$ with probability $0.8$. Similarly, the non-zero values of $\omega_{ij}$ and $\beta_{ij}$  are uniformly generated from $[-1.5,-0.5]\cup[0.5,1.5]$, and $\omega_{ii}=\sum_{j=1,j\neq i}^p \omega_{ji}+0.1$.

\begin{figure}[!htb]
	\centering
	\begin{minipage}[b]{0.55\textwidth}
		\centering
		\includegraphics[width=1\textwidth]{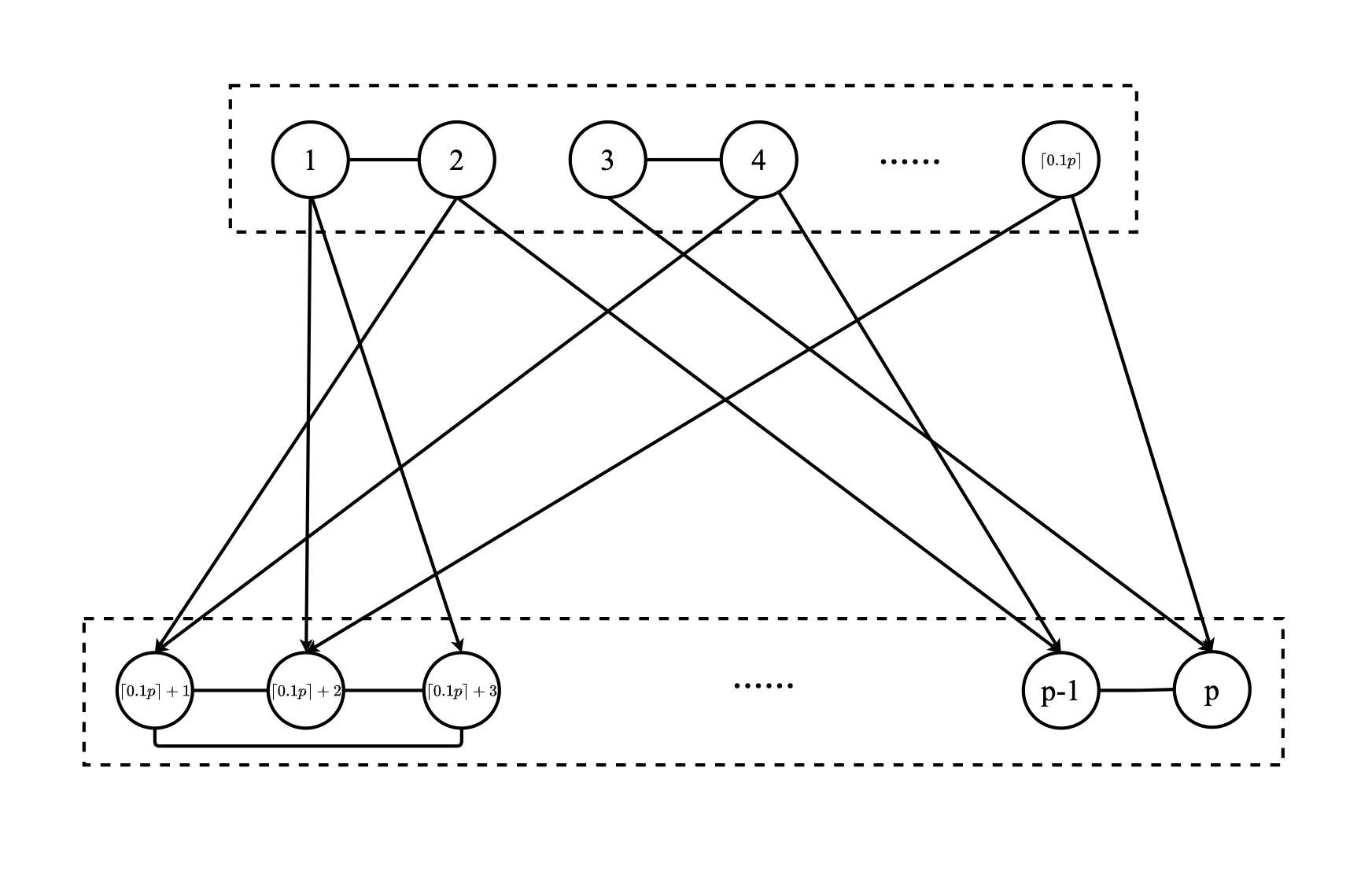}
		\subcaption{}
	\end{minipage}
	\begin{minipage}[b]{0.44\textwidth}
		\centering
		\includegraphics[width=1\textwidth]{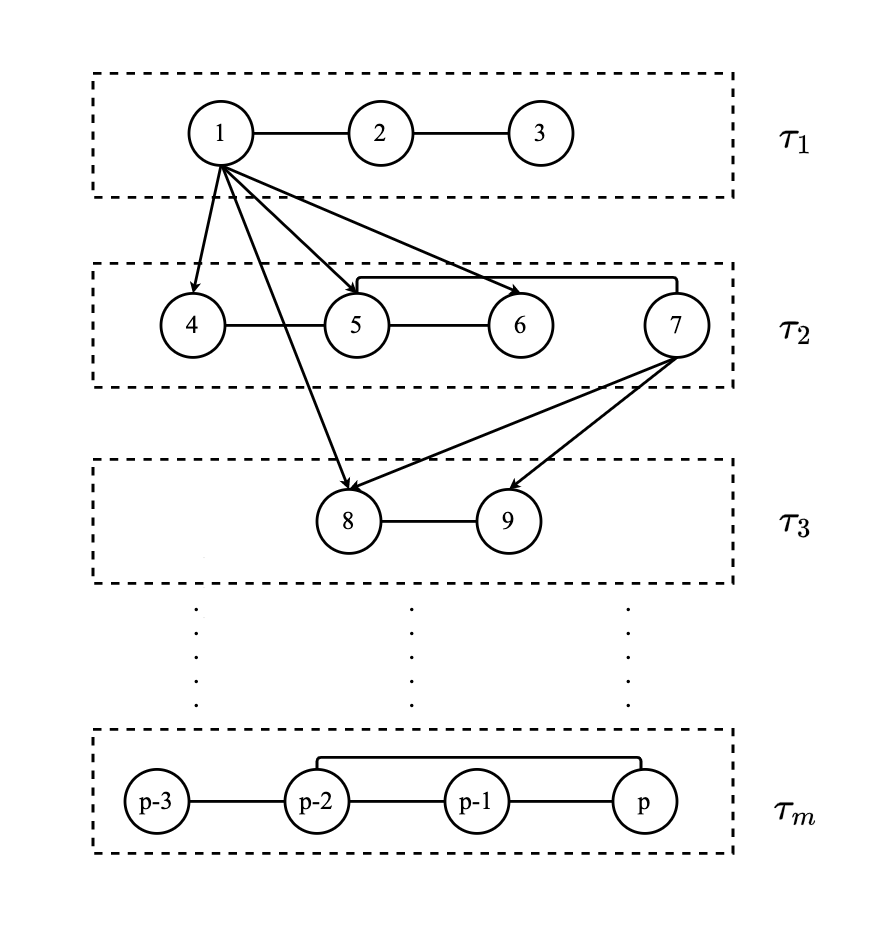}
		\subcaption{}
	\end{minipage}
	\caption{The chain graph structures in Examples 1 and 2.}\label{SimulExample}
\end{figure}

For each example, we consider four cases with $(p,n) = (50,500), (50,1000), (100,500)$ and $(100,1000)$, and the averaged performance of all four methods over 50 independent replications are summarized in Tables \ref{tab:ex1} and \ref{tab:ex2}. As PC only outputs the DAGs with no undirected edges, its evaluation metrics on $\widehat{\OOO}$ are all NA.

\begin{table}[!htb]
	\caption{The averaged evaluation metrics of all the methods in Example 1 together with
		their standard errors in parentheses.}
	\tiny
	\centering
	\begin{tabular}{ccccccccc}
		\hline 
		$ (p, n)$& Method & Recall($\widehat{\Omega}$) & Precision($\widehat{\Omega}$) & MCC($\widehat{\Omega}$) & Recall($\widehat{B}$) & Precision($\widehat{B}$) &  MCC($\widehat{B}$)   & SHD   \\\hline  
		$(50,500)$ & Proposed &\makecell[c]{\textbf{0.6478}  (0.0254)} & \makecell[c]{\textbf{0.8454}  (0.0196)} & \makecell[c]{\textbf{0.7306}  (0.0202)}  & \makecell[c]{\textbf{0.2800}  (0.0086)} & \makecell[c]{0.4963  (0.0139)} & \makecell[c]{\textbf{0.3365}  (0.0104)}  & \makecell[c]{\textbf{187.8400}  (2.5402)} \\
		& LCD  &\makecell[c]{0.1000  (0.0120)} & \makecell[c]{0.1174  (0.0128)} & \makecell[c]{0.0934  (0.0118)}  & \makecell[c]{0.1350  (0.0047)} & \makecell[c]{\textbf{0.5862}  (0.0125)} & \makecell[c]{0.2579  (0.0077)}  & \makecell[c]{192.7600  (1.5514)} \\
		&PC-like &\makecell[c]{0.0441  (0.0075)} & \makecell[c]{0.0424  (0.0084)} & \makecell[c]{0.0270  (0.0076)}  & \makecell[c]{0.0055  (0.0007)} & \makecell[c]{0.0679  (0.0090)} & \makecell[c]{-0.0006  (0.0025)}  & \makecell[c]{225.9800  (1.1569)} \\
		& PC & NA & NA & NA & \makecell[c]{0.0244  (0.0019)} & \makecell[c]{0.0835  (0.0059)} & \makecell[c]{0.0067  (0.0033)}  & \makecell[c]{238.1600  (1.1691)} \\
		\hline 
		$(50,1000)$ & Proposed &\makecell[c]{\textbf{0.6786}  (0.0230)} & \makecell[c]{\textbf{0.8309}  (0.0197)} & \makecell[c]{\textbf{0.7417}  (0.0181)}  & \makecell[c]{\textbf{0.3049}  (0.0081)} & \makecell[c]{0.4647  (0.0108)} & \makecell[c]{\textbf{0.3372}  (0.0089)}  & \makecell[c]{194.7000  (2.5856)} \\
		& LCD  &\makecell[c]{0.1174  (0.0139)} & \makecell[c]{0.1394  (0.0113)} & \makecell[c]{0.1122  (0.0115)}  & \makecell[c]{0.1869  (0.0054)} & \makecell[c]{\textbf{0.6762}  (0.0113)} & \makecell[c]{0.3324  (0.0078)}  & \makecell[c]{\textbf{181.1600}  (1.8179)} \\
		&PC-like &\makecell[c]{0.0391  (0.0067)} & \makecell[c]{0.0303  (0.0062)} & \makecell[c]{0.0161  (0.0063)}  & \makecell[c]{0.0076  (0.0009)} & \makecell[c]{0.0909  (0.0101)} & \makecell[c]{0.0054  (0.0029)}  & \makecell[c]{231.7400  (1.0909)} \\
		& PC & NA & NA & NA & \makecell[c]{0.0296  (0.0019)} & \makecell[c]{0.0898  (0.0051)} & \makecell[c]{0.0109  (0.0031)}  & \makecell[c]{242.6000  (1.3613)} \\
		\hline 
		$(100,500)$ & Proposed &\makecell[c]{\textbf{0.3491}  (0.0090)} & \makecell[c]{\textbf{0.9979}  (0.0012)} & \makecell[c]{\textbf{0.5845}  (0.0078)}  & \makecell[c]{\textbf{0.3493}  (0.0047)} & \makecell[c]{0.5391  (0.0060)} & \makecell[c]{\textbf{0.3996}  (0.0052)}  & \makecell[c]{\textbf{732.3400}  (5.2053)} \\
		& LCD  &\makecell[c]{0.0279  (0.0028)} & \makecell[c]{0.0860  (0.0089)} & \makecell[c]{0.0396  (0.0048)}  & \makecell[c]{0.0751  (0.0019)} & \makecell[c]{\textbf{0.5693}  (0.0093)} & \makecell[c]{0.1882  (0.0043)}  & \makecell[c]{794.2800  (3.0063)} \\
		&PC-like &\makecell[c]{0.0167  (0.0020)} & \makecell[c]{0.0411  (0.0058)} & \makecell[c]{0.0152  (0.0034)}  & \makecell[c]{0.0011  (0.0001)} & \makecell[c]{0.0307  (0.0039)} & \makecell[c]{-0.0084  (0.0008)}  & \makecell[c]{859.3400  (2.4988)} \\
		& PC & NA & NA & NA & \makecell[c]{0.0057  (0.0004)} & \makecell[c]{0.0421  (0.0030)} & \makecell[c]{-0.0114  (0.0011)}  & \makecell[c]{885.5600  (2.5778)} \\
		\hline
		$(100,1000)$ & Proposed &\makecell[c]{\textbf{0.4076}  (0.0102)} & \makecell[c]{\textbf{0.9982}  (0.0010)} & \makecell[c]{\textbf{0.6322}  (0.0083)}  & \makecell[c]{\textbf{0.3684}  (0.0054)} & \makecell[c]{0.4918  (0.0067)} & \makecell[c]{\textbf{0.3877}  (0.0062)}  & \makecell[c]{\textbf{770.3600}  (7.0042)} \\
		& LCD  &\makecell[c]{0.0236  (0.0022)} & \makecell[c]{0.0780  (0.0082)} & \makecell[c]{0.0340  (0.0040)}  & \makecell[c]{0.0900  (0.0022)} & \makecell[c]{\textbf{0.6349}  (0.0088)} & \makecell[c]{0.2210  (0.0045)}  & \makecell[c]{779.8200  (3.0436)} \\
		&PC-like &\makecell[c]{0.0193  (0.0020)} & \makecell[c]{0.0349  (0.0037)} & \makecell[c]{0.0131  (0.0026)}  & \makecell[c]{0.0016  (0.0002)} & \makecell[c]{0.0377  (0.0042)} & \makecell[c]{-0.0072  (0.0009)}  & \makecell[c]{872.3000  (2.4148)} \\
		& PC & NA & NA & NA & \makecell[c]{0.0077  (0.0005)} & \makecell[c]{0.0500  (0.0032)} & \makecell[c]{-0.0090  (0.0013)}  & \makecell[c]{896.0600  (2.5596)} \\
		\hline
	\end{tabular}
	\label{tab:ex1}
\end{table}

\begin{table}[!h]
	\caption{ The averaged evaluation metrics of all the  methods in Example 2 together with
		their standard errors in parentheses. Here ** denotes the fact that the corresponding methods take too long to produce any results.}
	\tiny
	\centering
	\begin{tabular}{ccccccccc}
		\hline 
		$ (p, n)$& Method & Recall($\widehat{\Omega}$) & Precision($\widehat{\Omega}$) & MCC($\widehat{\Omega}$) & Recall($\widehat{B}$) & Precision($\widehat{B}$) &  MCC($\widehat{B}$)   & SHD   \\\hline   
		$(50,500)$ & Proposed &\makecell[c]{\textbf{0.5289}  (0.0271)} & \makecell[c]{\textbf{0.7741}  (0.0161)} & \makecell[c]{\textbf{0.6244}  (0.0219)}  & \makecell[c]{\textbf{0.3230}  (0.0188)} & \makecell[c]{\textbf{0.5624}  (0.0264)} & \makecell[c]{\textbf{0.4021}  (0.0221)}  & \makecell[c]{\textbf{129.3000}  (7.3103)} \\
		& LCD  &\makecell[c]{0.4510  (0.0290)} & \makecell[c]{0.5184  (0.0261)} & \makecell[c]{0.4662  (0.0271)} & \makecell[c]{0.1646  (0.0102)} & \makecell[c]{0.5469  (0.0192)} & \makecell[c]{0.2764  (0.0108)} & \makecell[c]{132.3600  (7.9336)} \\
		&PC-like & \makecell[c]{0.4548  (0.0331)} & \makecell[c]{0.4296  (0.0257)} & \makecell[c]{0.4225  (0.0292)} & \makecell[c]{0.0231  (0.0024)} & \makecell[c]{0.1668  (0.0146)} & \makecell[c]{0.0439  (0.0052)}  & \makecell[c]{149.3600  (8.9086)} \\
		& PC & NA & NA & NA & \makecell[c]{0.1637  (0.0149)} & \makecell[c]{0.2490  (0.0140)} & \makecell[c]{0.1655  (0.0130)}  & \makecell[c]{160.1200  (8.2415)} \\
		\hline 
		$(50,1000)$ & Proposed & \makecell[c]{\textbf{0.5790}  (0.0301)} & \makecell[c]{\textbf{0.7467}  (0.0172)} & \makecell[c]{\textbf{0.6421}  (0.0237)}  & \makecell[c]{\textbf{0.3372}  (0.0215)} & \makecell[c]{0.5446  (0.0286)} & \makecell[c]{\textbf{0.4021}  (0.0246)}  & \makecell[c]{130.4200  (8.1150)} \\
		& LCD  &\makecell[c]{0.4723  (0.0301)} & \makecell[c]{0.4873  (0.0218)} &  \makecell[c]{0.4609  (0.0255)} & \makecell[c]{0.1885  (0.0110)} & \makecell[c]{\textbf{0.5721}  (0.0173)} &  \makecell[c]{0.3052  (0.0119)} & \makecell[c]{\textbf{128.7800}  (7.8959)} \\
		&PC-like &\makecell[c]{0.4584  (0.0309)} & \makecell[c]{0.4108  (0.0239)} & \makecell[c]{0.4138  (0.0269)} & \makecell[c]{ 0.0205  (0.0025)} & \makecell[c]{ 0.1529  (0.0181)} & \makecell[c]{0.0379  (0.0059)} & \makecell[c]{149.9800  (8.8576)} \\
		& PC & NA & NA & NA & \makecell[c]{0.1848  (0.0176)} & \makecell[c]{0.2599  (0.0142)} & \makecell[c]{0.1828  (0.0156)} & \makecell[c]{159.1600  (8.5205)} \\
		\hline 
		$(100,500)$ & Proposed &\makecell[c]{\textbf{0.3723}  (0.0094)} & \makecell[c]{ \textbf{0.9985}  (0.0009)} & \makecell[c]{\textbf{0.6015}  (0.0077)} & \makecell[c]{\textbf{0.4211}  (0.0260)} & \makecell[c]{\textbf{0.7569}  (0.0256)} & \makecell[c]{\textbf{0.5536}  (0.0253)} & \makecell[c]{\textbf{151.3800}  (4.3586)} \\
		& LCD  & ** & ** & ** & ** & ** & ** &  **  \\
		&PC-like & ** & ** & ** & ** & ** & ** &  **  \\
		& PC & NA & NA & NA & \makecell[c]{0.2710  (0.0269)} & \makecell[c]{0.1013  (0.0087)} & \makecell[c]{0.1472  (0.0125)} & \makecell[c]{231.6000  (5.1892)} \\
		\hline
		$(100,1000)$ & Proposed &\makecell[c]{\textbf{0.5515}  (0.0100)} & \makecell[c]{\textbf{0.9971}  (0.0012)} & \makecell[c]{\textbf{0.7349}  (0.0069)} & \makecell[c]{\textbf{0.6230}  (0.0254)} & \makecell[c]{\textbf{0.8984}  (0.0215)} & \makecell[c]{\textbf{0.7410}  (0.0227)} & \makecell[c]{\textbf{103.7400}  (4.0098)} \\
		& LCD  & ** & ** & ** & ** & ** & ** &  ** \\
		&PC-like & ** & ** & ** & ** & ** & ** &  **  \\
		& PC & NA & NA & NA & \makecell[c]{0.2875  (0.0264)} & \makecell[c]{0.1065  (0.0091)} & \makecell[c]{0.1562  (0.0127)} & \makecell[c]{228.1400  (4.9927)} \\
		\hline
	\end{tabular}
	\label{tab:ex2}
\end{table}

From Tables \ref{tab:ex1} and \ref{tab:ex2}, it is clear that the proposed method outperforms all competitors in most scenarios.  
In Example 1, the proposed method produces a much better estimation of the undirected edges than all other methods. For directed edges, the proposed method achieves the highest Recall$(\widehat{\BB})$ and MCC$(\widehat{\BB})$. It is interesting to note that LCD gets higher Precision$(\widehat{\BB})$ than the proposed method, possibly due to the fact that LCD tends to produce fewer estimated directed edges, resulting in large Precision$(\widehat{\BB})$ but small Recall$(\widehat{\BB})$. In Example 2, the proposed method outperforms all competitors in terms of almost all the evaluation metrics. Note that LCD and PC-like take too long to produce any results when $p=100$, due to their expensive computational cost when there exist many hub nodes.


\subsection{Standard \& Poor index data}

We apply the proposed method to study the relationships among stocks in the Standard \& Poor's 500 index, and analyze the impact of the COVID-19 pandemic on the stock market. Chain graph can accurately reveal various relationships among stocks, with undirected edges for symmetric competitive or cooperative relationship between stocks and directed edges for asymmetric causal relation from one stock to the another. 

To proceed, we select $p = 100$ stocks with the largest market sizes in the Standard \& Poor's 500 index, and retrieve their adjusted closing prices during the pre-pandemic period, August 2017-February 2020, and the post-pandemic period, March 2020-September 2022.  The data is publicly available on many finance websites and has been packaged in some standard softwares, such as the R package \texttt{quantmod}. For each period, we first calculate the daily returns of each stock based on its adjusted closing prices, and then apply the proposed method to construct the corresponding chain graph. 

Figure \ref{fig:financeUG} displays the undirected edges between stocks in both estimated chain graphs, which consist of $39$ and $21$ undirected edges in pre-pandemic and post-pandemic, respectively. It is clear that 
there are more estimated undirected edges in the 
pre-pandemic chain graph than in the post-pandemic one, which echos the empirical findings that business expansion were more active, company cooperations were closer and competition were fiercer before the COVID-19 pandemic. Furthermore, there are $13$ common undirected edges in both chain graphs, and all these 13 connected stock pairs are from the same sector, including VISA(V) and MASTERCARD(MA), JPMORGAN CHASE(JPM) and BANK OF AMERICA(BAC), MORGAN STANLEY(MS) and GOLDMAN SACHS(GS), and HOME DEPOT(HD) and LOWE'S(LOW).  All these pairs share the same type of business, and their competition or cooperation receive less impact by the COVID-19 pandemic. In Figure \ref{fig:financeUG}, it is also interesting to note that the number of the undirected edges between stocks from different sectors have reduced in the post-pandemic chain graph. This concurs with the fact that diversified business transactions between companies have been decreased and only essential business contacts have been maintained during the COVID-19 pandemic.

\begin{figure}
	\centering
	\includegraphics[width=1\textwidth]{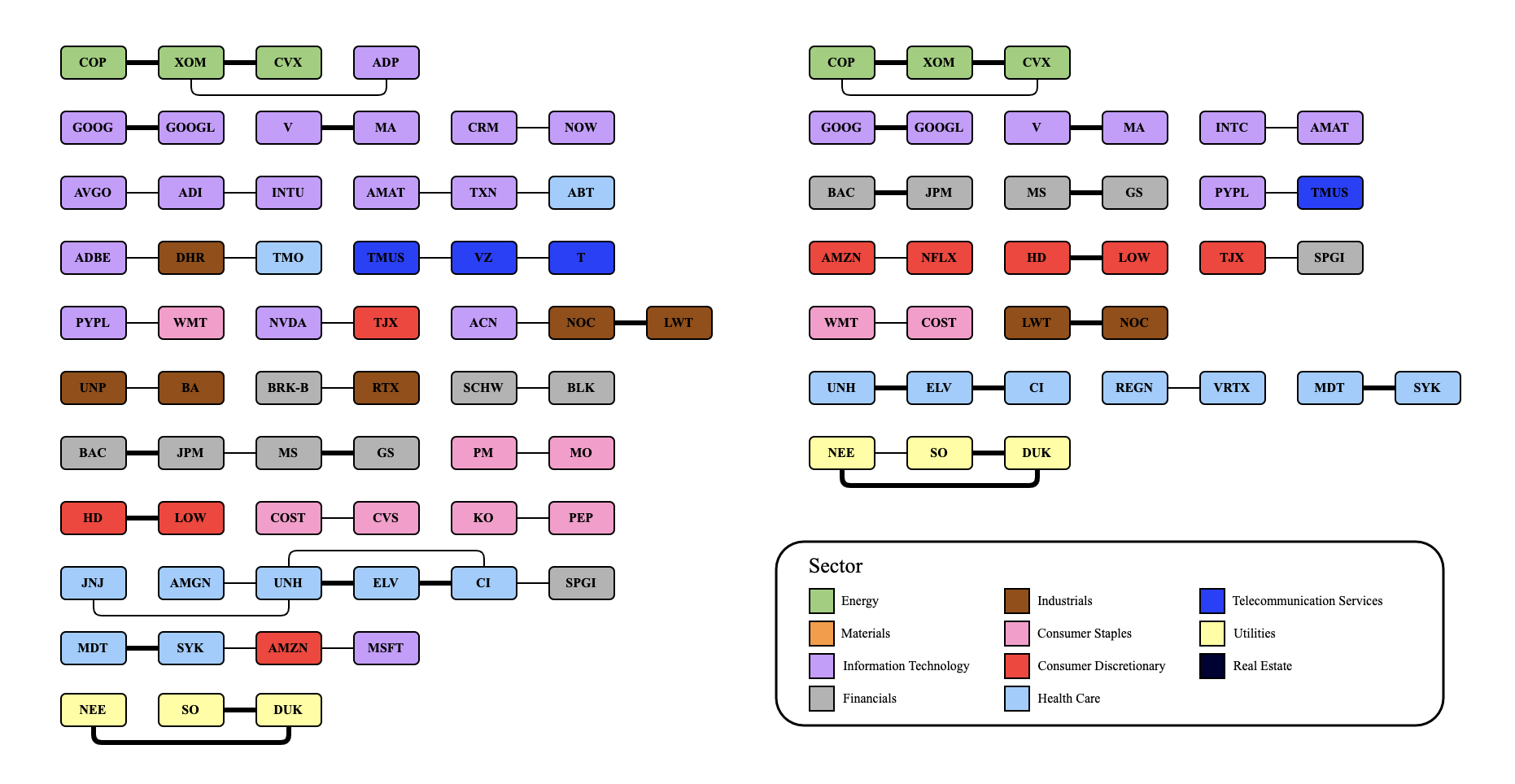}
	\caption{The left and right panel display all the estimated undirected edges for pre-pandemic and post-pandemic, respectively. Stocks from the same sectors are dyed with the same color, and the common undirected edges in both chain graphs are boldfaced.}\label{fig:financeUG}
\end{figure}

Figure \ref{fig:financeOrder} displays the boxplots of causal orderings of all stocks within each sector in both pre-pandemic and post-pandemic, where the causal ordering of a stock is set as that of the corresponding chain component. It is generally believed that causal ordering implies the imbalance of social demand and supply; that is, if a sector is getting more demanded, its causal ordering is inclined to move up to upstream. Evidently, Energy and Materials are always at the top of the causal ordering in both periods, as they are upstream industries and provide inputs for most other sectors. The median causal ordering of Telecommunication Services goes from downstream to upstream after the outbreak of the COVID-19 pandemic, since people travel less and rely more on telecommunication for business communication. The median causal ordering of Finances goes down during the pandemic, as commercial entities are more cautious about credit expansion and demand for financial services is likely to decline to battle the financial uncertainty. It is somewhat surprising that the median causal ordering of Healthcares appears invariant, but many pharmaceutical and biotechnology corporation in this section actually have changed from downstream to upstream, due to the rapid development vaccine and treatment during the pandemic. 

\begin{figure}[!htb]
	\centering
	\begin{minipage}[b]{0.49\textwidth}
		\centering
		\includegraphics[width=1\textwidth]{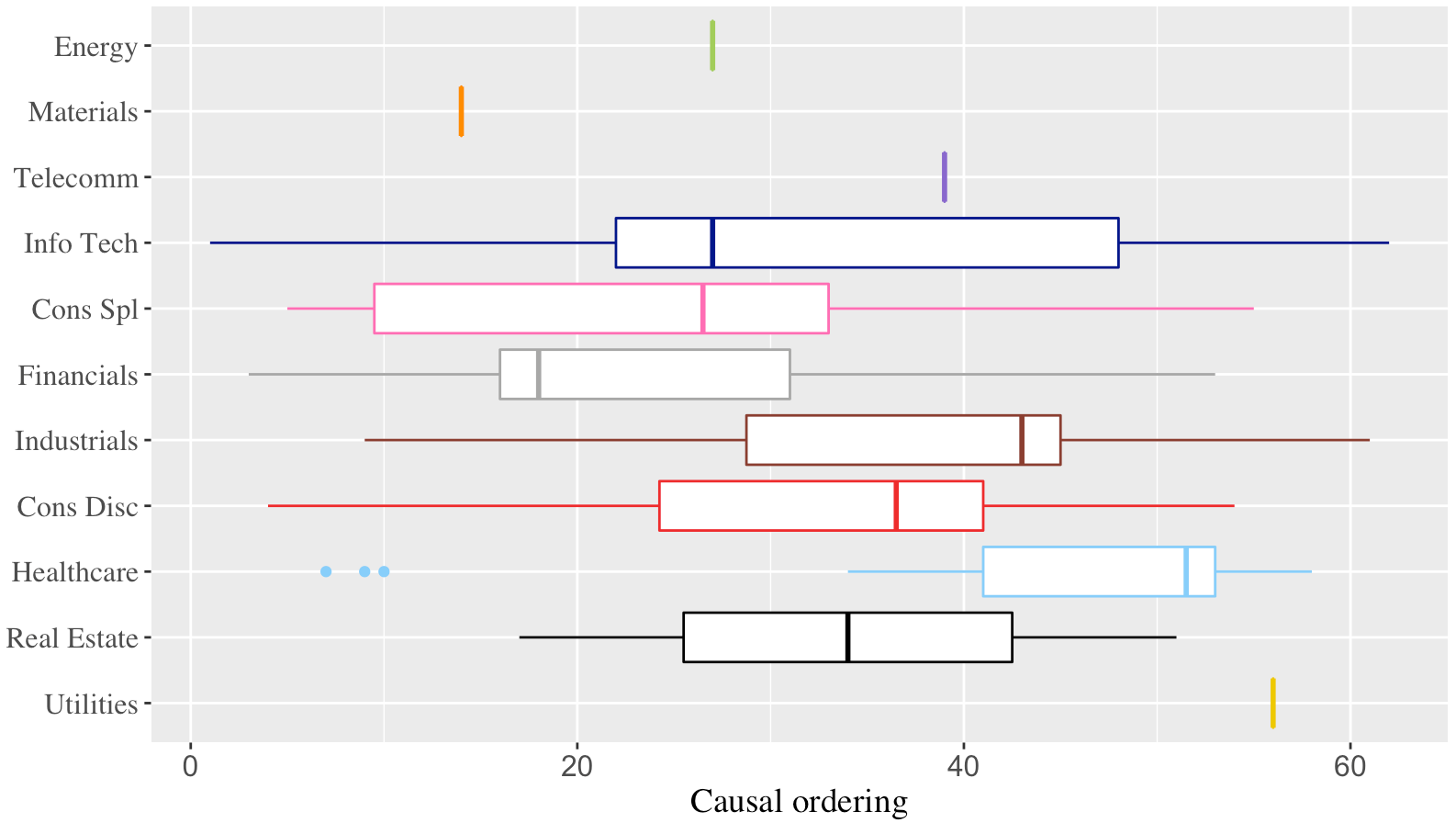}
	\end{minipage}
	\begin{minipage}[b]{0.49\textwidth}
		\centering
		\includegraphics[width=1\textwidth]{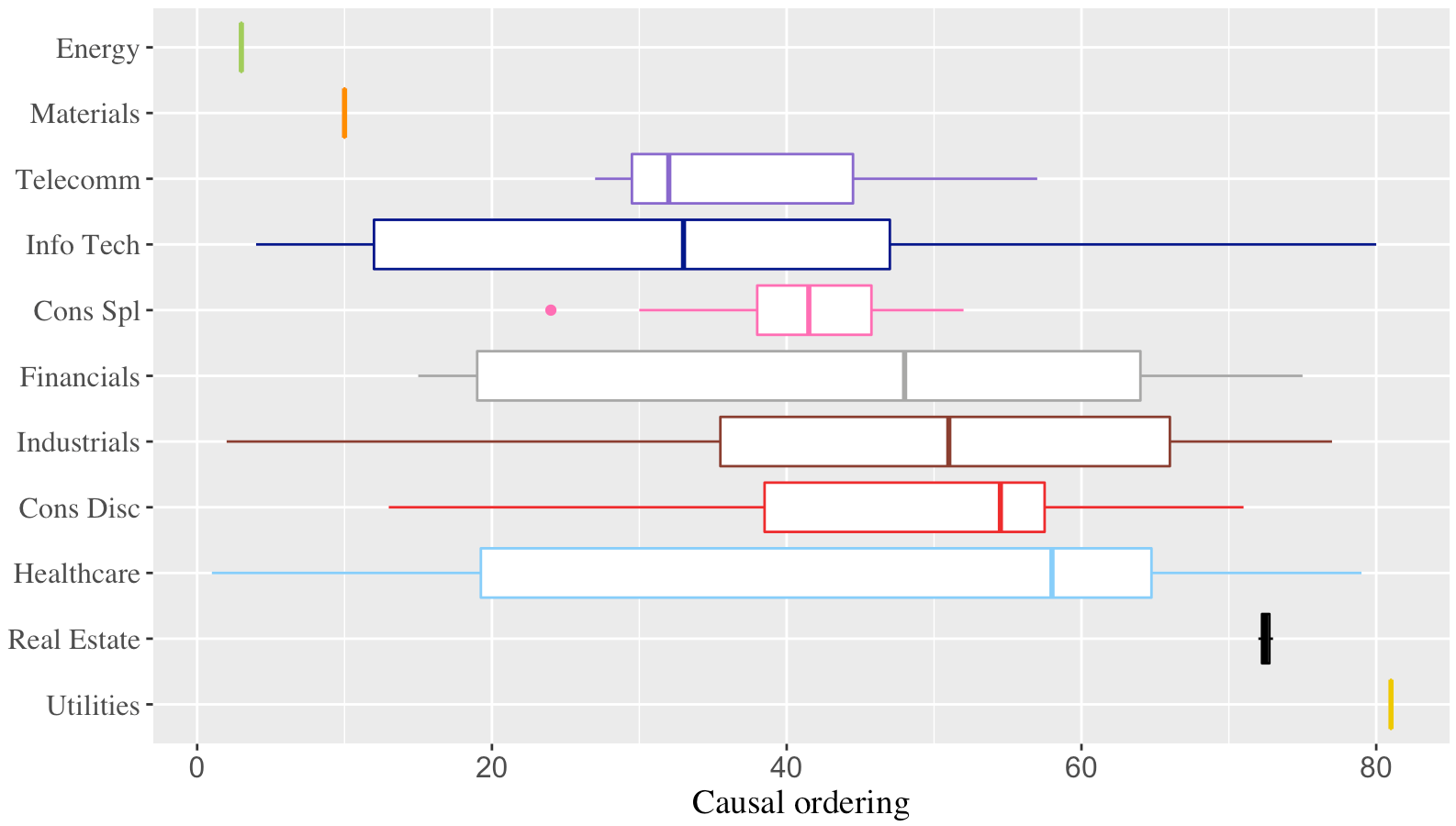}
	\end{minipage}
	\caption{The left and right panel display the boxplots of the estimated causal ordering of the top 100 stocks in each sector for pre-pandemic and post-pandemic, respectively. The sectors are ordered according to the median causal ordering of stocks in post-pandemic.}\label{fig:financeOrder}
\end{figure}

In addition, the estimated chain graphs in pre-pandemic and post-pandemic consist of $149$ and $190$ directed edges, respectively. While many directed edges remain unchanged, there are some stocks whose roles have changed dramatically in the chain graphs. In particular, some stocks with no child but multiple parents in the pre-pandemic chain graph become ones with no parent but multiple children in the post-pandemic chain graph, such as COSTCO(COST), APPLE(AAPL), ACCENTURE(ACN), INTUIT(INTU), AT\&T(T) and CHUBB(CB). This finding appears reasonable, as most of these stocks correspond to the high demanded industries during pandemic, such as COSTCO for stocking up groceries, AT\&T for remote communication, and APPLE for providing communication and online learning equipment. On the other hand, there are some other stocks with no parent but multiple children in the pre-pandemic chain component  becoming
ones with no child but multiple parents in the post-pandemic chain component, including TESLA(TSLA), TJX(TJX), BRISTOL-MYERS SQUIBB(BMY), PAYPAL(PYPL), AUTOMATIC DATA PROCESSING(ADP) and BOEING(BA). Many of these companies have been severely impacted during pandemic, such as BOEING due to minimized travels and TESLA due to shrunk consumer purchasing power.

\section{Conclusion}\label{sec:conclusion}

In this paper, we establish a set of novel identifiability conditions for the Gaussian chain graph model under AMP interpretation, exploiting a low rank plus sparse decomposition of the precision matrix.
An efficient learning algorithm is developed to recover the exact chain graph structure, including both undirected and directed edges. Theoretical analysis shows that the proposed method consistently reconstructs
the exact chain graph structure. Its advantage is also supported by various numerical experiments on both simulated and real examples. It is also interesting to extend the proposed identifiability conditions and learning algorithm to accommodate non-linear chain graph model with non-Gaussian noise.

\section*{Acknowledgment}
This work is supported in part by HK RGC Grants GRF-11304520, GRF-11301521 and GRF-11311022. The authors report there are no competing interests to declare.

\section*{Appendix: Proof of theorems}

In the sequel, we use $c$ and $C$ to denote generic positive constants whose values may vary according to context.  For a matrix $\MM\in\R^{p\times p},$ let $\OO(\MM)\in\R^{p\times p}$ denote the matrix with the same off-diagonal elements as $\MM$ but all diagonal elements being 0. 
Denote $\Lambda_{\max}(\MM)$ and $\Lambda_{\min}(\MM)$ as the maximal and minimal eigenvalue of a matrix $\MM$, respectively.  Let $\AAA$ be the matrix addition operator such that $\AAA(\A,\BB) = \A+\BB$ for two matrices $\A$ and $\BB$ with the same dimension.

By the definition of ${\cal T}(\LL)$ in \eqref{eq:tangent L}, it can be shown that ${\cal T}(\LL)$ is uniquely determined by $\UU_1\in\R^{p\times K},$ where $\LL = \UU_1 \DD_1 \UU_1^\top$ is the eigen decomposition of $\LL.$ With a slight abuse of notation, we denote ${\cal T}(\UU_1) = {\cal T}(\LL).$  Further, define $F_{\UU_1}:{{\cal S}(\OOO^*)}\times {\cal T}(\UU_1)\to {{\cal S}(\OOO^*)}\times {\cal T}(\UU_1)$ as
$$
F_{\UU_1}(\DDD_{\OOO},\DDD_{\LL}) = (\PPP_{{{\cal S}(\OOO^*)}}\III^*v(\DDD_{\OOO}+\DDD_{\LL}), \PPP_{{\cal T}(\UU_1)}\III^*v(\DDD_{\OOO}+\DDD_{\LL})),
$$ 
and $F_{\UU_1}^\perp:{{\cal S}(\OOO^*)}\times {\cal T}(\UU_1)\to {{\cal S}(\OOO^*)}^\perp\times {\cal T}(\UU_1)^\perp$ as
$$
F_{\UU_1}^\perp(\DDD_{\OOO},\DDD_{\LL}) = (\PPP_{{{\cal S}(\OOO^*)}^\perp}\III^*v(\DDD_{\OOO}+\DDD_{\LL}), \PPP_{{\cal T}(\UU_1)^\perp}\III^*v(\DDD_{\OOO}+\DDD_{\LL})).
$$ 
For any $\UU_1\in\R^{p\times K},$ define $D(\UU_1)= \{\UU_{1}\DD_1\UU_{1}^\top:\DD_1 \text{ is a } K\times K \text{ diagonal matrix}\}.$ Recall the operators $F$ and $F^{\perp}$ defined in Section~\ref{sec:theory}, and we have $F_{\UU_1^*} = F$ and $F_{\UU_1^*}^{\perp} = F^{\perp}.$

Let $\OS_0 = \{(\OOO,\LL)\in\R^{p\times 2p}:\OOO \succ 0,~~\OOO+\LL \succ0\}$, and define a localization set,
$$
\begin{aligned}
	\OS(\epsilon) = \{(\OOO,\LL) & \subset\OS_0: \OOO = \OOO^* +\DDD_{\OOO}, \DDD_{\OOO}\in\SSSS(\OOO^*), \|\DDD_{\OOO}\|_{\max}\leq\epsilon,\\
	&\LL = \UU_1\DD_1\UU_1^\top, \UU_1 \text{ is } p\times K \text{ orthogonal matrix with } \|\UU_1 - \UU_1^*\|_{\max} \leq \epsilon, \\
	&\DD_1 \text{ is } K\times K \text{ diagonal matrix with } \|\DD_1 - \DD_1^*\|_{\max} \leq \epsilon \}.
\end{aligned}
$$ 
We rewrite $l(\OOO+\LL)$ as $l(\OOO,\LL),$ and define $\bar l(\OOO,\LL) = \EEE l(\OOO+\LL)$. Proposition \ref{prop:ident} is a key intermediate result to the proof of Theorem \ref{thm:ident}.

\begin{proposition}\label{prop:ident}
	Suppose $(\OOO^*,\BB^*)\in\PA.$ Then, there exists a small constant $\epsilon>0$ such that $(\OOO^*,\LL^*)$ is the unique maximizer of $\bar l(\OOO,\LL)$ in $\OS(\epsilon).$
\end{proposition}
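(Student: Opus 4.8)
The plan is to exploit the fact that $\bar l(\OOO,\LL)$ depends on $(\OOO,\LL)$ only through the sum $\TTT=\OOO+\LL$, and to reduce the statement to a local unique-decomposition argument secured by Assumption~\ref{ass:tangent}. First I would compute $\bar l$ explicitly. Since $\EEE\widehat\SG=(\TTT^*)^{-1}$ with $\TTT^*=\OOO^*+\LL^*$, we have $\bar l(\OOO,\LL)=g(\OOO+\LL)$, where $g(\TTT)=-\text{tr}(\TTT(\TTT^*)^{-1})+\log\det(\TTT)$ is the expected Gaussian log-likelihood. The term $-\text{tr}(\TTT(\TTT^*)^{-1})$ is linear in $\TTT$ while $\log\det$ is strictly concave on the positive-definite cone, so $g$ is strictly concave there, and its gradient $-(\TTT^*)^{-1}+\TTT^{-1}$ vanishes only at $\TTT=\TTT^*$. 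Hence $g(\TTT)<g(\TTT^*)$ for every positive-definite $\TTT\neq\TTT^*$; that is, $\TTT^*$ is the unique global maximizer of $g$ over the positive-definite cone.

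Next, because $\OS(\epsilon)\subset\OS_0$ forces $\OOO+\LL\succ0$, and because $(\OOO^*,\LL^*)\in\OS(\epsilon)$ attains the value $g(\TTT^*)$, every $(\OOO,\LL)\in\OS(\epsilon)$ satisfies $\bar l(\OOO,\LL)=g(\OOO+\LL)\le g(\TTT^*)=\bar l(\OOO^*,\LL^*)$, with equality if and only if $\OOO+\LL=\TTT^*$. It therefore remains to show that, for $\epsilon$ small enough, the only pair in $\OS(\epsilon)$ with $\OOO+\LL=\TTT^*$ is $(\OOO^*,\LL^*)$. Writing $\OOO=\OOO^*+\DDD_{\OOO}$ with $\DDD_{\OOO}\in{\cal S}(\OOO^*)$, this constraint reads $\DDD_{\OOO}=\LL^*-\LL$, so I must prove that a rank-$K$ matrix $\LL$ near $\LL^*$ with $\LL^*-\LL\in{\cal S}(\OOO^*)$ must equal $\LL^*$.

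This last step is the main obstacle, precisely because $\LL^*-\LL$ is not exactly a tangent vector in ${\cal T}(\LL^*)$ — it only agrees with one to first order, so one cannot invoke Assumption~\ref{ass:tangent} directly. I would resolve it by a compactness/secant-line argument, reasoning by contradiction: if no such $\epsilon$ works, then for each $n$ there is a rank-$K$ matrix $\LL_n$ arising from $\OS(1/n)$ with $\LL_n\neq\LL^*$ and $\DDD_n:=\LL^*-\LL_n\in{\cal S}(\OOO^*)$, and necessarily $\LL_n\to\LL^*$. Since $\LL^*$ has rank exactly $K$ (with distinct nonzero eigenvalues by Assumption~\ref{ass:diff}), the matrices $\LL_n$ lie on the smooth manifold of rank-$K$ symmetric matrices whose tangent space at $\LL^*$ is ${\cal T}(\LL^*)$ as in \eqref{eq:tangent L}. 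Normalizing $\DDD_n/\|\DDD_n\|$ and passing to a convergent subsequence yields a unit-norm limit $\DDD$ that lies in the closed subspace ${\cal S}(\OOO^*)$ and, being an accumulation point of secant directions on the manifold, also lies in ${\cal T}(\LL^*)$.

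Assumption~\ref{ass:tangent} then forces $\DDD=\mathbf{0}$, contradicting $\|\DDD\|=1$. Hence $\LL=\LL^*$, whence $\DDD_{\OOO}=\mathbf{0}$ and $\OOO=\OOO^*$, which shows that $(\OOO^*,\LL^*)$ is the unique maximizer of $\bar l$ on $\OS(\epsilon)$. I expect the only delicate points to be the verification that the representation $\LL=\UU_1\DD_1\UU_1^\top$ in $\OS(\epsilon)$ sweeps out a genuine neighborhood of $\LL^*$ on the rank-$K$ manifold (so that the secant argument applies), and the standard but essential fact that limits of secant directions on a smooth manifold are tangent vectors; both are routine given Assumptions~\ref{ass:tangent} and \ref{ass:diff}.
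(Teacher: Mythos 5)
Your argument is correct, but it takes a genuinely different route from the paper's. The paper does not use the strict concavity of the population likelihood in $\TTT=\OOO+\LL$ at all: it takes an arbitrary maximizer $(\widetilde\OOO,\widetilde\LL)$ in $\OS(\epsilon)$, writes the projected first-order conditions on ${\cal S}(\OOO^*)$ and ${\cal T}(\widetilde\LL)$, splits $\widetilde\LL-\LL^*$ into a tangent component plus an $O(\epsilon^2)$ remainder, inverts the operator $F_{\widetilde\UU_1}$ (whose invertibility and bounded inverse are where Assumption~\ref{ass:tangent} enters, via Lemma~1), and bootstraps the resulting bound $\|(\widetilde\DDD_{\OOO},\widetilde\DDD_{\LL})\|_{\max}\le C\epsilon^2<\epsilon/2$ to force $(\widetilde\OOO,\widetilde\LL)=(\OOO^*,\LL^*)$. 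You instead observe that $\bar l(\OOO,\LL)=g(\OOO+\LL)$ with $g$ strictly concave and uniquely maximized at $\TTT^*$, so any maximizer in $\OS(\epsilon)$ must satisfy $\OOO+\LL=\TTT^*$ exactly, reducing the proposition to local uniqueness of the sparse-plus-low-rank decomposition of $\TTT^*$, which you settle by a compactness/secant-direction argument combined with the transversality in Assumption~\ref{ass:tangent}. Your route is cleaner in two respects: it sidesteps the paper's tacit assumption that the maximizer lies in the \emph{interior} of $\OS(\epsilon)$ (a maximizer over a compact set need not, and the stationarity conditions the paper writes down would otherwise fail), and it needs no quantitative operator bounds. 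What the paper's route buys is that the machinery it sets up ($F$, $F^{-1}$, the tangent/remainder splitting of $\DDD_{\LL}$) is reused verbatim in the proofs of Theorem~\ref{thm:Omega} and Lemmas~4--6, whereas your compactness argument is non-constructive and yields no explicit $\epsilon$. One small remark: the "delicate point" you flag about the parametrization $\LL=\UU_1\DD_1\UU_1^\top$ sweeping out a full neighborhood of $\LL^*$ on the rank-$K$ manifold is not actually needed — your secant argument only requires that each $\LL_n$ \emph{lies on} the manifold (i.e., that $\DD_1$ is invertible for small $\epsilon$, which follows since $\rank(\LL^*)=K$), not that every nearby rank-$K$ matrix is reached.
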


\begin{proof}[\bf Proof of Proposition~\ref{prop:ident}]
The proof is deferred to the supplement.
\end{proof}

\begin{proof}[\bf Proof of Theorem \ref{thm:ident}]

The proof proceeds in two steps, and we first show that $\OOO = \OOO^*$. By Proposition~\ref{prop:ident}, there exists an $\epsilon_0$ such that $(\OOO^*,\LL^*)$ is the unique maximizer of $\bar l(\OOO,\LL)$ in $\OS(\epsilon_0).$

Note that for any $\epsilon>0$ and
$(\OOO,\BB)\in\PA$ satisfying $\|\OOO-\OOO^*\|_{\max}<\epsilon$ and $\|\BB-\BB^*\|_{\max}<\epsilon,$ it implies that $\|\LL-\LL^*\|_{\max}\lesssim\epsilon.$ Then, by Weyl's inequality, we have $\|\DD_1-\DD_1^*\|_{\max}\lesssim\epsilon.$ 
By the Davis-Kahan theorem \cite[][]{yu2015useful} and Assumption~\ref{ass:diff}, we have $\|\UU_1-\UU_1^*\|_{\max}\lesssim\epsilon.$


Therefore, there exists a sufficiently small $\epsilon>0$ such that for any $(\OOO,\BB)\in\PA$ satisfying $\|\OOO-\OOO^*\|_{\max}<\epsilon$ and $\|\BB-\BB^*\|_{\max}<\epsilon,$ we have $(\OOO,\LL)\in\OS(\epsilon_0).$
If $(\mathbf{I}_p-\BB)^\top\OOO (\mathbf{I}_p-\BB)=(\mathbf{I}_p-\BB^*)^\top \OOO^* (\mathbf{I}_p-\BB^*)$, 
we have $\OOO+\LL = \OOO^*+\LL^*,$ implying that $\bar l(\OOO,\LL) = \bar l(\OOO^*,\LL^*).$ By the fact that $(\OOO^*,\LL^*)$ is the unique maximizer of $\bar l(\OOO,\LL)$ in $\OS(\epsilon_0),$ it must hold true that $(\OOO,\LL) = (\OOO^*,\LL^*)$.

Next we turn to show that $\BB = \BB^*$. Note that $\OOO=\OOO^*$ leads to the same set of chain components $\{\tau_k\}_{k=1}^m$, and the assumption that $(\mathbf{I}_p-\BB)^\top\OOO (\mathbf{I}_p-\BB)=(\mathbf{I}_p-\BB^*)^\top \OOO^* (\mathbf{I}_p-\BB^*)$ leads to $\mathbf{\Sigma}^*={\bf\Sigma}$. Furthermore, let
\begin{align}\label{criter-pop}
	{\cal D} (\tau_k,{\cal C}) :=& \max_{i\in \tau_k} \Big\{ \Var (x_{i} | \xx_{\cal C}) - \OOO_{ii}^{-1} \Big\}\nonumber\\
	=& \max_{i\in \tau_k} \Big\{ \mathbf{\Sigma}_{ii} - \mathbf{\Sigma}_{i{\cal C}} \mathbf{\Sigma}_{{\cal C}{\cal C}}^{-1} \mathbf{\Sigma}_{{\cal C}i} -  \OOO_{ii}^{-1} \Big\},	
\end{align}
and ${\cal D} (\tau_k,\emptyset) := \max_{i\in \tau_k} \Big\{ \mathbf{\Sigma}_{ii} - \OOO_{ii}^{-1} \Big\}$. The Lemma 3 in the supplement implies that ${\cal D}(\tau_{\pi_1},\emptyset)=0$ for a root chain component $\tau_{\pi_1}$ with $\pa(\tau_{\pi_1})=\emptyset$, and that ${\cal D}(\tau_{k},\emptyset)>0$ for any other chain component $\tau_k$ with $\pa(\tau_{k})\neq\emptyset$. Further, let ${\cal C}_j = \cup_{k=1}^j \tau_{\pi_k}$ for any $j \ge 1$, and the Lemma 3 in the supplement implies that ${\cal D}(\tau_{\pi_{j+1}},{\cal C}_{j})=0$ for the $(j+1)$-th chain component $\tau_{\pi_{j+1}}$, and ${\cal D}(\tau_{k},{\cal C}_{j})>0$ for any chain component $\tau_k$ with $\pa(\tau_{k})\nsubseteq {\cal C}_j$. Therefore, the causal ordering among $\{\tau_k\}_{k=1}^m$, denoted as $\{\pi_k\}_{k=1}^m$, can be recovered via mathematical induction.

Corresponding to $\{\tau_{\pi_1},...,\tau_{\pi_m}\}$, there exists a permutation matrix $\PP_\pi$ such that 
$$
\PP_\pi \OOO^* \PP_\pi^\top = \PP_\pi \OOO \PP_\pi^\top = \diag \big (\OOO_{\tau_{\pi_1},\tau_{\pi_1}},...,\OOO_{\tau_{\pi_m},\tau_{\pi_m}} \big ).
$$
Since $(\OOO,\BB), (\OOO^*,\BB^*)\in {\cal Q}$, $\PP_\pi \BB \PP_\pi^\top$ and $\PP_\pi \BB^* \PP_\pi^\top$ are block lower triangular matrices with $m$ zero diagonal blocks, which implies that the submatrices $\BB_{\tau_{\pi_k},\tau_{\pi_j}}$ and $\BB^*_{\tau_{\pi_k},\tau_{\pi_j}}$ are zeros for $k\geq j$.  Then,  for $(\OOO,\BB)\in {\cal Q}$, it follows from (\ref{eq:model}) that 
$$\xx_{\tau_{\pi_k}} = \BB_{\tau_{\pi_k}, {\cal C}_{k-1}} \xx_{{\cal C}_{k-1}} + \mathbf{\epsilon}_{\tau_{\pi_k}}$$ with $\xx_{{\cal C}_{k-1}} \indep \mathbf{\epsilon}_{\tau_k}$, which implies $ \BB_{\tau_{\pi_k}, {\cal C}_{k-1}} = \mathbf{\Sigma}_{\tau_{\pi_k}, {\cal C}_{k-1}} \mathbf{\Sigma}_{{\cal C}_{k-1}, {\cal C}_{k-1}}^{-1}$. Similarly, for $(\OOO^*,\BB^*) \in {\cal Q}$, we obtain $ \BB^*_{\tau_{\pi_k}, {\cal C}_{k-1}} =  \mathbf{\Sigma}^*_{\tau_{\pi_k}, {\cal C}_{k-1}} (\mathbf{\Sigma}^*_{{\cal C}_{k-1}, {\cal C}_{k-1}})^{-1}$. Since $\mathbf{\Sigma} = \mathbf{\Sigma}^*$, $\BB_{\tau_{\pi_k}, {\cal C}_{k-1}}  = \BB^*_{\tau_{\pi_k}, {\cal C}_{k-1}} $ for any $2\leq k\leq m$. Therefore, $\BB=\BB^*$, which completes the proof of Theorem \ref{thm:ident}.
\end{proof}

\begin{proof}[\bf Proof of Theorem \ref{thm:Omega}] Let $h(\OOO,\LL) = -l(\OOO+\LL) + \lambda_n\big(\|\OOO\|_{1,\off} + \gamma\|\LL\|_* \big),$ 
	$$
	\begin{aligned}
		\OS_1 = \{(\OOO,\LL)\subset\OS_0: ~&\OOO = \OOO^* +\DDD_{\OOO}, \DDD_{\OOO} \text{ is } p\times p \text{ symmetric matrix  with } \|\DDD_{\OOO}\|_{\max} \leq \lambda_n^{1-2\eta},\\
		&\LL = \UU\DD\UU^\top,  \UU \text{ is } p\times p \text{ orthogonal matrix with } \|\UU - \UU^*\|_{\max} \leq \lambda_n^{1-\eta}, \\
		& \DD \text{ is } p\times p \text{ diagonal matrix with } \|\DD-\DD^*\|_{\max} \leq \lambda_n^{1-2\eta} \}
	\end{aligned}
	$$ 
	with $\eta>0$, and $\OS_2 = \{ (\OOO,\LL)\in\OS_1: \OOO-\OOO^* \in {{\cal S}(\OOO^*)}, \rank(\LL)\leq K \}$. 
	
	By the compactness of $\OS_2$,  $h(\OOO,\LL)$ has a minimizer in $\OS_2$, denoted as $(\widehat\OOO_{\OS_2},\widehat\LL_{\OS_2})$, and let $\widehat\LL_{\OS_2} = \widehat\UU_{\OS_2}\widehat\DD_{\OS_2}\widehat\UU_{\OS_2}^\top$ be its eigen decomposition, where $\widehat\DD_{\OS_2}$ is a $K\times K$ diagonal matrix. Further, let $\widehat\OS = {{\cal S}(\OOO^*)} \times D(\widehat\UU_{\OS_2}).$ By Lemma 4, with probability approaching 1, $h(\OOO,\LL)$ has a unique minimizer in $\widehat\OS,$ denoted as $(\widehat\OOO_{\widehat\OS},\widehat\LL_{\widehat\OS})$, and we have $(\widehat\OOO_{\widehat\OS},\widehat\LL_{\widehat\OS}) \in \OS_2$. By the definition of $\widehat\OS,$ we also have $(\widehat\OOO_{\OS_2},\widehat\LL_{\OS_2}) \in \widehat\OS.$ Then, $h(\widehat\OOO_{\widehat\OS},\widehat\LL_{\widehat\OS}) = h(\widehat\OOO_{\OS_2},\widehat\LL_{\OS_2})$, which immediately implies that $(\widehat\OOO_{\OS_2},\widehat\LL_{\OS_2}) = (\widehat\OOO_{\widehat\OS},\widehat\LL_{\widehat\OS})$, 
	\begin{align}
		\label{eq:SD inactive} 
		\|\widehat\OOO_{\OS_2} - \OOO^*\|_{\max} <_P \lambda_n^{1-\eta}, \mbox{ and }
		\|\widehat\DD_{\OS_2} - \DD_1^*\|_{\max} <_P \lambda_n^{1-\eta}. 
	\end{align}
	
	We now turn to bound $\|\widehat\UU_{\OS_2} - \UU_1^*\|_{\max}$. For any $\DDD\in\R^{p\times p}$, let
	\begin{equation}
		\label{eq:taylor1}
		R_n(\DDD) = l(\TTT^*+\DDD) - \bar l(\TTT^*) + \frac{1}{2}v(\DDD)^\top\III^*v(\DDD).
	\end{equation}
	Then, as $n\to\infty$ and $\|\DDD\|_{\max}\to0,$ we have 
	\begin{equation}\label{eq:remainder1}
		\begin{aligned}
			|R_n(\DDD) - R_n(\textbf{0})| &= |l(\TTT^*+\DDD) - l(\TTT^*)+\frac{1}{2}v(\DDD)^\top\III^*v(\DDD)| \\
			&=  |\langle\nabla l(\TTT^*),\DDD\rangle| + O(\|\DDD\|^3_{\max})\\
			&= O_p(\|\DDD\|_{\max}/\sqrt{n}) + O(\|\DDD\|_{\max}^3),
		\end{aligned}
	\end{equation}
	where the last equality follows from the fact that $\|\nabla l(\TTT^*)\|_{\max} = \|\widehat\SG - (\TTT^*)^{-1}\|_{\max} = O_p(1/\sqrt{n})$ by the law of large number. Similarly, it can be verified that
	\begin{align}
		|\nabla R_n(\DDD)| &= O_p(1/\sqrt{n})+O(\|\DDD\|_{\max}^2), \label{eq:remainder2} \\
		|\nabla^2 R_n(\DDD)| &= O(\|\DDD\|_{\max}). \label{eq:remainder3}
	\end{align}
	Particularly, let $\DDD = \widehat\OOO_{\widehat\OS}+\widehat\LL_{\widehat\OS} - \OOO^* - \LL^*$, 
	\begin{equation*}
		\begin{aligned}
			\|\DDD\|_{\max} &= \|\widehat\OOO_{\widehat\OS}+\widehat\LL_{\widehat\OS} - \OOO^* - \LL^*\|_{\max} \geq \|\widehat\OOO_{\widehat\OS}- \OOO^*+\widehat\DDD_{\LL}\|_{\max} - \|\widehat\LL_{\widehat\OS} - \LL^*-\widehat\DDD_{\LL}\|_{\max}\\
			&\gtrsim \|\widehat\DDD_{\LL}\|_{\max} - \|\widehat\LL_{\widehat\OS} - \LL^*-\widehat\DDD_{\LL}\|_{\max} >_{P} c\|\widehat\UU_{\OS_2}-\UU_1^*\|_{\max} - c\lambda_n^{2(1-\eta)}.
		\end{aligned}
	\end{equation*}
	where $\widehat\DDD_{\LL} = \UU_1^*\DD_1^*(\widehat\UU_{\OS_2} - \UU_1^*)^\top+(\widehat\UU_{\OS_2}-\UU_1^*)\DD_1^*(\UU_1^*)^\top + \UU_1^*(\widehat\DD_{\widehat\OS} -\DD_1^*)(\UU_1^*)^\top$, the second inequality is due to Lemma 2, and the third inequality is due to Lemma 6. By the fact that $\III^*$ is positive definite, we have 
	$$
	\frac{1}{2}v(\DDD)^\top\III^*v(\DDD)\geq \frac{1}{2}\Lambda_{min}(\III^*) \|\DDD\|_{\max}^2 >_p c\|\widehat\UU_{\OS_2}-\UU_1^*\|_{\max}^2 - c\lambda_n^{4(1-\eta)}.
	$$ 
	On the other hand, we also have
	$$
	\begin{aligned}
		& \frac{1}{2}v(\DDD)^\top\III^*v(\DDD) = [R_n(\DDD)- R_n({\bf0})] - [l(\widehat\OOO_{\widehat\OS}+\widehat\LL_{\widehat\OS}) - l(\OOO^* +\LL^*)]\\
		&\leq [R_n(\DDD)- R_n({\bf0})] + [h(\widehat\OOO_{\widehat\OS},\widehat\LL_{\widehat\OS}) - h(\OOO^*,\LL^*)] + \lambda_n\left\{\left| \|\widehat\OOO_{\widehat\OS}\|_{1,\off} - \|\OOO^*\|_{1,\off}\right| + \left|\gamma\|\widehat\LL_{\widehat\OS}\|_* -  \gamma \|\LL^*\|_*\right|\right\} \\
		&\leq [R_n(\DDD)- R_n({\bf0})] + \lambda_n\left\{\left| \|\widehat\OOO_{\widehat\OS}\|_{1,\off} - \|\OOO^*\|_{1,\off}\right| + \left|\gamma\|\widehat\LL_{\widehat\OS}\|_* -  \gamma \|\LL^*\|_*\right|\right\} \\
		&\lesssim_P [R_n(\DDD)- R_n({\bf0})] + \lambda_n^{2-\eta} = O_p(\|\DDD\|_{\max}/\sqrt{n}) +O(\|\DDD\|_{\max}^3 + \lambda_n^{2-\eta}),
	\end{aligned}
	$$ where the last inequality is due to \eqref{eq:remainder1}.
 	Therefore, together with the fact that $\|\DDD\|_{\max}\lesssim_P\lambda_n^{1-\eta}$, we obtain 
	$$
	\|\widehat\UU_{\OS_2}-\UU_1^*\|_{\max}=O_p(n^{-\frac{1}{4}}\lambda_n^{\frac{1}{2}(1-\eta)}+\lambda_n^{1-\frac{\eta}{2}}) <_P\lambda_n^{1-\eta}.
	$$ 
	
	Combing the upper bounds for $\|\widehat\OOO_{\OS_2} - \OOO^*\|_{\max}$, 
	$\|\widehat\DD_{\OS_2} - \DD_1^*\|_{\max}$ and $\|\widehat\UU_{\OS_2}-\UU_1^*\|_{\max}$, we conclude that $(\widehat\OOO_{\OS_2},\widehat\LL_{\OS_2})$ lies in the interior of $\OS_2.$ Further, let $\DDD_{\OS_2} = \widehat\OOO_{\OS_2}+\widehat\LL_{\OS_2} - \OOO^* - \LL^*$, and $\VV^*_1\in\R^{K\times p}$ contain the right singular vectors of $\LL^*$, then $\VV_1^* =\UU_1^* \sign(\DD_1^*).$ Similar as Lemma 5 in \cite{chen2016fused}, we have that $(\widehat\OOO_{\OS_2},\widehat\LL_{\OS_2})$ is the unique minimizer of $h(\OOO,\LL)$ in $\OS_2$, and that
	\begin{equation}\label{eq:res decomp}
		\DDD_{\OS_2} = \lambda_n\AAA\left[ F_{\UU_1^*}^{-1}(\sign(\OO(\OOO^*)),\gamma\UU_1^*(\VV_1^*)^\top)\right] + o_p(\lambda_n).
	\end{equation}
	
	Next, we show that $(\widehat\OOO_{\OS_2},\widehat\LL_{\OS_2})$ is also a minimizer of $h(\OOO,\LL)$ in $\OS_1$ by verifying the first order condition. Since $(\widehat\OOO_{\OS_2},\widehat\LL_{\OS_2})$ is the minimizer of $h(\OOO,\LL)$ in $\OS_2$ and lies in its interior,  
	$$
	\0 \in \PPP_{{{\cal S}(\OOO^*)}}\partial_{\OOO} h(\widehat\OOO_{\OS_2},\widehat\LL_{\OS_2}), \text{~~and~~} \0 \in \PPP_{{\cal T}(\widehat\UU_{\OS_2})}\partial_{\LL} h(\widehat\OOO_{\OS_2},\widehat\LL_{\OS_2}).
	$$ 
	Then, it suffices to show that 
	\begin{equation}
		\label{eq:orthogonal cond}
		\0 \in \PPP_{{{\cal S}(\OOO^*)}^\perp}\partial_{\OOO} h(\widehat\OOO_{\OS_2},\widehat\LL_{\OS_2}), \text{~~and~~} \0 \in \PPP_{{\cal T}(\widehat\UU_{\OS_2})^\perp}\partial_{\LL} h(\widehat\OOO_{\OS_2},\widehat\LL_{\OS_2}).
	\end{equation}
	
	Taking derivatives on both sides of \eqref{eq:taylor1} and substituting $\DDD$ by $\DDD_{\OS_2}$, it follows from \eqref{eq:remainder2} and \eqref{eq:res decomp} that
	$$
	\nabla l(\widehat\OOO_{\OS_2}+\widehat\LL_{\OS_2}) = \lambda_n\III^*\left(\AAA\left[ F_{\UU_1^*}^{-1}(\sign(\OO(\OOO^*)),\gamma\UU_1^*(\VV_1^*)^\top)\right]\right) + o_p(\lambda_n),
	$$ 
	which further implies that 
	\begin{align*}
		\PPP_{{{\cal S}(\OOO^*)}^\perp}\nabla l(\widehat\OOO_{\OS_2}+\widehat\LL_{\OS_2}) &= \lambda_n\PPP_{{{\cal S}(\OOO^*)}^\perp}\III^*\left(\AAA\left[ F_{\UU_1^*}^{-1}(\sign(\OO(\OOO^*)),\gamma\UU_1^*(\VV_1^*)^\top)\right]\right) + o_p(\lambda_n), \\
		\PPP_{{\cal T}(\widehat\UU_{\OS_2})^\perp}\nabla l(\widehat\OOO_{\OS_2}+\widehat\LL_{\OS_2}) &= \lambda_n\PPP_{{\cal T}(\widehat\UU_{\OS_2})^\perp}\III^*\left(\AAA\left[ F_{\UU_1^*}^{-1}(\sign(\OO(\OOO^*)),\gamma\UU_1^*(\VV_1^*)^\top)\right]\right) + o_p(\lambda_n) \\
		&= \lambda_n\PPP_{{\cal T}(\UU_1^*)^\perp}\III^*\left(\AAA\left[ F_{\UU_1^*}^{-1}(\sign(\OO(\OOO^*)),\gamma\UU_1^*(\VV_1^*)^\top)\right]\right) + o_p(\lambda_n),
	\end{align*}
	where the last equality is due to the Lipchitz continuity of the projection operator and the fact that $\|\widehat\UU_{\OS_2} - \UU_1^*\|_{\max} <_P \lambda_n^{1-\eta}$. Therefore, by Assumption~\ref{ass:incor}, we have
	$$
	\begin{aligned}
		&g_{\gamma}\left(\PPP_{{{\cal S}(\OOO^*)}^\perp}\nabla l(\widehat\OOO_{\OS_2},\widehat\LL_{\OS_2}),  \PPP_{{\cal T}(\widehat\UU_{\OS_2})^\perp}\nabla l(\widehat\OOO_{\OS_2},\widehat\LL_{\OS_2})\right) \\
		=& \lambda_n g_{\gamma}(F_{\UU_1^*}^\perp F_{\UU_1^*}^{-1}(\sign(\OO(\OOO^*)),\gamma\UU_1^*(\VV_1^*)^\top) + o_p(\lambda_n).
	\end{aligned}
	$$ 
	Then the desired first order condition in \eqref{eq:orthogonal cond} holds in probability following Lemma 6 in \cite{chen2016fused},  which means $(\widehat\OOO_{\OS_2},\widehat\LL_{\OS_2})$ is a minimizer of $h(\OOO,\LL)$ in $\OS_1$. Its uniqueness can be shown following similar argument as for Lemma 7 in \cite{chen2016fused}.  
	
	Therefore, $h(\OOO,\LL)$ has a unique minimizer in $\OS_1\subset\OS_0,$ denoted as $(\widehat\OOO,\widehat\LL),$ which is in the interior of $\OS_1$ and also belongs to $\OS_2.$  As $h(\OOO,\LL)$ is a convex function, $\OS_0$ is a convex set, and $\OS_1$ contains an open neighborhood of $(\widehat\OOO,\widehat\LL),$ we conclude that $(\widehat\OOO,\widehat\LL)$ is also the unique minimizer of $h(\OOO,\LL)$ in $\OS_0.$ The estimation consistencies in \eqref{eq:consis} also hold due to the fact that $\lambda_n^{1-2\eta} < n^{-1/2+2\eta}.$ Furthermore, by definition of $\OS_2$, we have $\widehat\OOO-\OOO^*\in\SSSS(\OOO^*)$ and $\rank(\widehat\LL)\leq K$. Then, the sign consistency of $\widehat\OOO$ follows since $\min\{|\omega_{ij}^*|:\omega_{ij}^*\neq0\} \geq c>0$ for a constant $c$, and the rank consistency of $\widehat\LL$ follows since $$
	\begin{aligned}
		\sigma_K(\widehat\LL) \geq \sigma_K(\LL^*) - & \|\widehat\LL - \LL^*\|_2 \geq \sigma_K(\LL^*) - \|\widehat\LL - \LL^*\|_F \\
		&\geq \sigma_K(\LL^*) - p\|\widehat\LL - \LL^*\|_{\max}>_P \frac{1}{2}\sigma_K(\LL^*) > 0. 
	\end{aligned}
	$$
	This completes the proof of Theorem \ref{thm:Omega}. \end{proof}

\begin{proposition}\label{prop:LayerOrder}
	Under the same conditions of Theorem \ref{thm:Omega}, there holds $ 
	\Pr \big(\widehat{\boldsymbol\pi} \in \Pi^*\big) \to 1 \ \text{as} \  n\to \infty$,
	where $\Pi^*$ is the set of all possible true causal orderings of the chain components.
\end{proposition}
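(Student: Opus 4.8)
The plan is to transfer the population-level separation of the criterion ${\cal D}$, established in Lemma 3, to its empirical counterpart $\widehat{\cal D}$ via a uniform concentration argument, and then to run an induction on the greedy selection step to conclude that the recovered ordering is admissible. First I would restrict attention to the high-probability event on which the estimated chain components coincide with the true ones. This follows immediately from the sign consistency $\Pr(\sign(\widehat\OOO)=\sign(\OOO^*))\to1$ in Theorem~\ref{thm:Omega}: since $\{\widehat\tau_k\}$ are the connected components of the off-diagonal support of $\widehat\OOO$ and $\OOO$ is block diagonal across chain components, recovering $\supp(\OOO^*)$ recovers $\{\tau_k\}_{k=1}^m$ and forces $\widehat m = m$. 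On this event I may identify $\widehat\tau_k$ with $\tau_k$ and regard every conditioning set ${\cal C}$ arising in the algorithm as a union of true chain components, so that $\widehat\OOO_{ii}^{-1}$ estimates $(\OOO^*)^{-1}_{ii}=\Var(x_i\mid\xx_{\pa(\tau_k)})$ for $i\in\tau_k$.

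Next I would establish the uniform bound $\max_{k\in[m]}\max_{{\cal C}}\,\bigl|\widehat{\cal D}(\tau_k,{\cal C})-{\cal D}(\tau_k,{\cal C})\bigr|=:\epsilon_n=o_P(1)$, where ${\cal C}$ ranges over the at most $2^m$ downward-closed unions of chain components. The two ingredients are the covariance concentration $\|\widehat{\mathbf{\Sigma}}-\mathbf{\Sigma}^*\|_{\max}=O_P(n^{-1/2})$ from the law of large numbers, and the consistency $\|\widehat\OOO-\OOO^*\|_{\max}=o_P(1)$ of Theorem~\ref{thm:Omega}, which together with the positive definiteness of $\OOO^*$ gives $\|\widehat\OOO^{-1}-(\OOO^*)^{-1}\|_{\max}=o_P(1)$ and in particular $\max_i|\widehat\OOO_{ii}^{-1}-(\OOO^*)^{-1}_{ii}|=o_P(1)$. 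The estimated conditional variance $\widehat{\mathbf{\Sigma}}_{ii}-\widehat{\mathbf{\Sigma}}_{i{\cal C}}\widehat{\mathbf{\Sigma}}_{{\cal C}{\cal C}}^{-1}\widehat{\mathbf{\Sigma}}_{{\cal C}i}$ is a smooth function of the covariance entries, and since $\Lambda_{\min}(\mathbf{\Sigma}^*_{{\cal C}{\cal C}})\ge\Lambda_{\min}(\mathbf{\Sigma}^*)=1/\Lambda_{\max}(\TTT^*)>0$ uniformly in ${\cal C}$ by eigenvalue interlacing, the matrix-inverse perturbation inequality makes it concentrate around the population conditional variance uniformly; taking the maximum over $i\in\tau_k$ preserves this.

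Finally I would exploit the strict population gap from Lemma~3. Define $\delta:=\min{\cal D}(\tau_k,{\cal C}_j)$, the minimum taken over all downward-closed sets ${\cal C}_j$ and all $k$ with $\pa(\tau_k)\nsubseteq{\cal C}_j$; as a minimum of finitely many strictly positive quantities this is a fixed constant $\delta>0$, while ${\cal D}(\tau_k,{\cal C}_j)=0$ whenever $\pa(\tau_k)\subseteq{\cal C}_j$. Arguing by induction, suppose the first $j$ selections form a downward-closed set, so $\widehat{\cal C}_j$ equals some admissible ${\cal C}_j$. On the event $\{\epsilon_n<\delta/2\}$, whose probability tends to one, every admissible candidate has $\widehat{\cal D}(\tau_k,\widehat{\cal C}_j)\le\epsilon_n<\delta/2$ while every inadmissible one has $\widehat{\cal D}(\tau_k,\widehat{\cal C}_j)\ge\delta-\epsilon_n>\delta/2$, so $\widehat\pi_{j+1}=\argmin_k\widehat{\cal D}(\tau_k,\widehat{\cal C}_j)$ must be admissible and the hypothesis is preserved. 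Iterating through all $m$ steps yields $\widehat{\boldsymbol\pi}\in\Pi^*$ with probability approaching one; the fact that several candidates may be admissible simultaneously (all whose parents are already included) is exactly why the conclusion is membership in the \emph{set} $\Pi^*$ rather than recovery of a single ordering. The main obstacle I anticipate is the uniform control in Step~2 of the inverted block $\widehat{\mathbf{\Sigma}}_{{\cal C}{\cal C}}^{-1}$ across all relevant conditioning sets, which is resolved by the uniform lower bound on $\Lambda_{\min}(\mathbf{\Sigma}^*_{{\cal C}{\cal C}})$ inherited from the positive definiteness of $\TTT^*$ together with the standard perturbation bound for matrix inverses.
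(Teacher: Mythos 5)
Your proposal is correct and follows essentially the same route as the paper's proof: restrict to the high-probability event that the chain components are exactly recovered (via Theorem~\ref{thm:Omega}), invoke the population dichotomy of Lemma~3 (${\cal D}=0$ for admissible candidates, ${\cal D}\geq c>0$ for inadmissible ones), transfer it to $\widehat{\cal D}$ by consistency of $\widehat{\bfSigma}$ and $\widehat\OOO$, and induct on the greedy selection step. The only cosmetic difference is that you take a single uniform concentration bound over all downward-closed conditioning sets up front with a global gap $\delta$, whereas the paper controls the error pairwise at each induction step; since $p$ and $m$ are fixed, the two are equivalent.
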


\begin{proof}[\bf Proof of Proposition~\ref{prop:LayerOrder}]
The proof is deferred to the supplement.
\end{proof}


\begin{proof}[\bf Proof of Theorem \ref{thm:Graph}]
	Let ${\cal E}={\cal E}_u \cup {\cal E}_d$, where ${\cal E}_u$ is the set of undirected edges and ${\cal E}_d$ is the set of directed edges. Then, 
	\begin{align*}
		\Pr(\widehat{\cal G} \neq {\cal G}^*) &= \Pr(\{\widehat{\cal E}_u \neq {\cal E}_u^*\} \cup \{\widehat{\cal E}_d \neq {\cal E}_d^*\} \cup \{\widehat{\boldsymbol\pi} \notin \Pi^*\})\\
		& \leq \Pr (\widehat{\cal E}_u \neq {\cal E}_u^*) + \Pr (\widehat{\boldsymbol\pi} \notin \Pi^*, \widehat{\cal E}_u = {\cal E}_u^*) + \Pr(\widehat{\cal E}_d \neq {\cal E}_d^*,  \widehat{\cal E}_u = {\cal E}_u^*, \widehat{\boldsymbol\pi} \in \Pi^*)\\
		&\leq \Pr (\widehat{\cal E}_u \neq {\cal E}_u^*) + \Pr (\widehat{\boldsymbol\pi} \notin \Pi^* | \widehat{\cal E}_u = {\cal E}_u^*) + \Pr(\widehat{\cal E}_d \neq {\cal E}_d^* | \widehat{\cal E}_u = {\cal E}_u^*, \widehat{\boldsymbol\pi} \in \Pi^*).
	\end{align*}
	By Theorem \ref{thm:Omega} and Proposition \ref{prop:LayerOrder}, we have $\Pr (\widehat{\cal E}_u \neq {\cal E}_u^*) \to 0$ and $\Pr (\widehat{\boldsymbol\pi} \notin \Pi^* | \widehat{\cal E}_u = {\cal E}_u^*) \to 0$, respectively. It then suffices to bound $\Pr(\widehat{\cal E}_d \neq {\cal E}_d^* | \widehat{\cal E}_u = {\cal E}_u^*, \widehat{\boldsymbol\pi} \in \Pi^*)$. 
	
	Note that $\{\sign(\widehat{\BB}) = \sign(\BB^*)\} \subseteq  \{\widehat{\cal E}_d = {\cal E}_d^*\}$, where $\widehat{\BB}$ is obtained by truncating $\widehat{\BB}^{\text{svd}}$ with a thresholding value $\nu_n$. Therefore, we turn to bound  
	\begin{align} \label{eq::thm3::decomB}
		\|\widehat{\BB}^{\text{svd}} - \BB^* \|_{\max} 
		\leq \|\widehat{\BB}^{\text{reg}} - \BB^*\|_{\max}	+ \| \widehat{\BB}^{\text{svd}} -  \widehat{\BB}^{\text{reg}}\|_{\max}.
	\end{align}
	To bound the first term of \eqref{eq::thm3::decomB}, we assume the true causal ordering of the chain components is $\boldsymbol{\pi}^*=(\pi_1^*,...,\pi_{m^*}^*)$. Since   $\widehat{\BB}^{\text{reg}}_{\tau_{\pi_k^*}^* {\cal C}^*_{k-1}}=\widehat{\bfSigma}_{\tau_{\pi_k^*}^* {\cal C}^*_{k-1}} (\widehat{\bfSigma}_{{\cal C}^*_{k-1} {\cal C}^*_{k-1}})^{-1}$ and $\BB_{\tau_{\pi_k^*}^* {\cal C}^*_{k-1}}^* = {\bfSigma}^*_{\tau_{\pi_k^*}^* {\cal C}^*_{k-1}} ({\bfSigma}^*_{{\cal C}^*_{k-1} {\cal C}^*_{k-1}})^{-1}$, we have
	$$
	\|\widehat{\BB}^{\text{reg}} - \BB^*\|_{\max}\leq \max_{2 \le k \le m^*} \Big \| \widehat{\bfSigma}_{\tau_{\pi_k^*}^* {\cal C}^*_{k-1}} (\widehat{\bfSigma}_{{\cal C}^*_{k-1} {\cal C}^*_{k-1}})^{-1} - {\bfSigma}^*_{\tau_{\pi_k^*}^* {\cal C}^*_{k-1}} ({\bfSigma}^*_{{\cal C}^*_{k-1} {\cal C}^*_{k-1}})^{-1} \Big \|_{\max}.
	$$
	It follows from the triangle inequality that
	\begin{align*} 
		&\Big \| \widehat{\bfSigma}_{\tau_{\pi_k^*}^* {\cal C}^*_{k-1}} (\widehat{\bfSigma}_{{\cal C}^*_{k-1} {\cal C}^*_{k-1}})^{-1} - {\bfSigma}^*_{\tau_{\pi_k^*}^* {\cal C}^*_{k-1}} ({\bfSigma}^*_{{\cal C}^*_{k-1} {\cal C}^*_{k-1}})^{-1} \Big \|_{\max}\nonumber\\
		= & \Big \|  \widehat{\bfSigma}_{\tau_{\pi_k^*}^* {\cal C}^*_{k-1}} \Big( (\widehat{\bfSigma}_{{\cal C}^*_{k-1} {\cal C}^*_{k-1}})^{-1} - ({\bfSigma}^*_{{\cal C}^*_{k-1} {\cal C}^*_{k-1}})^{-1} \Big) + \Big(  \widehat{\bfSigma}_{\tau_{\pi_k^*}^* {\cal C}^*_{k-1}} -  {\bfSigma}^*_{\tau_{\pi_k^*}^* {\cal C}^*_{k-1}}\Big) ({\bfSigma}^*_{{\cal C}^*_{k-1} {\cal C}^*_{k-1}})^{-1} \Big\|_{\max}\nonumber\\
		\leq & \Big \|  \widehat{\bfSigma}_{\tau_{\pi_k^*}^* {\cal C}^*_{k-1}} \Big\|_2 \Big\| (\widehat{\bfSigma}_{{\cal C}^*_{k-1} {\cal C}^*_{k-1}})^{-1} - ({\bfSigma}^*_{{\cal C}^*_{k-1} {\cal C}^*_{k-1}})^{-1} \Big\|_2 + \Big\|  \widehat{\bfSigma}_{\tau_{\pi_k^*}^* {\cal C}^*_{k-1}} -  {\bfSigma}^*_{\tau_{\pi_k^*}^* {\cal C}^*_{k-1}}\Big\|_2 \Big \| ({\bfSigma}^*_{{\cal C}^*_{k-1} {\cal C}^*_{k-1}})^{-1} \Big\|_{2}.
	\end{align*}
	By Corollary 2.4.5 of \cite{Golub2013}, we have $\big \|  \widehat{\bfSigma}_{\tau_{\pi_k^*}^* {\cal C}^*_{k-1}} \big\|_2 - \big \|  {\bfSigma}^*_{\tau_{\pi_k^*}^* {\cal C}^*_{k-1}} \big\|_2 
	\leq   \big \|  \widehat{\bfSigma}_{\tau_{\pi_k^*}^* {\cal C}^*_{k-1}} - {\bfSigma}^*_{\tau_{\pi_k^*}^* {\cal C}^*_{k-1}} \big\|_2 
	\leq  \big \| \widehat{\bfSigma}_{{\cal C}^*_{k} {\cal C}^*_{k}} - {\bfSigma}^*_{{\cal C}^*_{k} {\cal C}^*_{k}} \big \|_2$. Also, it follows from Lemma 1 of \cite{Ravikumar2011} that 
	$$
	\| \widehat{\bfSigma}_{{\cal C}_{k-1}^* {\cal C}_{k-1}^*} - \bfSigma_{{\cal C}_{k-1}^* {\cal C}_{k-1}^*}^* \|_{\max} \leq \| \widehat{\bfSigma} - \bfSigma^* \|_{\max}\lesssim_P n^{-\frac{1}{2}+\eta}.
	$$
	Since $|{\cal C}_{k-1}^*|$ and $\Lambda_{\min}( \bfSigma_{{\cal C}_{k-1}^* {\cal C}_{k-1}^*}^*)$ are fixed, we have 
	$n^{-\frac{1}{2}+\eta} \leq \frac{\Lambda_{\min}( \bfSigma_{{\cal C}_{k-1}^* {\cal C}_{k-1}^*}^*)}{2|{\cal C}_{k-1}^*|}$ when $n$ is sufficiently large. It then follows from Lemma 7 that 
	$$
	\| (\widehat{\bfSigma}_{{\cal C}^*_{k-1}  {\cal C}^*_{k-1} })^{-1} - (\bfSigma^*_{{\cal C}^*_{k-1} {\cal C}^*_{k-1} })^{-1}\|_{\max}	\lesssim_P \frac{2|{\cal C}^*_{k-1}|}{\Lambda_{\min}^2 (\bfSigma_{{\cal C}_{k-1}^* {\cal C}_{k-1}^*}^*)}n^{-\frac{1}{2} + \eta}.
	$$
	
	Combing all these results, we have
	\begin{align}\label{eq::thm3:B1}
		& \|\widehat{\BB}^{\text{reg}} - \BB^*\|_{\max}\nonumber\\
		\leq & \max_{2 \le k \le m^*} \Big \{ \Big(\Big \|  {\bfSigma}^*_{\tau_{\pi_k^*}^* {\cal C}^*_{k-1}} \Big\|_2 +  \Big \| \widehat{\bfSigma}_{{\cal C}^*_{k} {\cal C}^*_{k}} - {\bfSigma}^*_{{\cal C}^*_{k} {\cal C}^*_{k}} \Big \|_2\Big) \Big\| (\widehat{\bfSigma}_{{\cal C}^*_{k-1} {\cal C}^*_{k-1}})^{-1} - ({\bfSigma}^*_{{\cal C}^*_{k-1} {\cal C}^*_{k-1}})^{-1} \Big\|_2\nonumber\\
		&+ \Big\|  \widehat{\bfSigma}_{\tau_{\pi_k^*}^* {\cal C}^*_{k-1}} -  {\bfSigma}^*_{\tau_{\pi_k^*}^* {\cal C}^*_{k-1}}\Big\|_2 \Big \| ({\bfSigma}^*_{{\cal C}^*_{k-1} {\cal C}^*_{k-1}})^{-1} \Big\|_{2} \Big \}\nonumber\\
		\leq & \max_{2 \le k \le m^*} \Big \{ \Big( \Lambda_{\max}(\bfSigma^*) +  \Big \| \widehat{\bfSigma}_{{\cal C}^*_{k} {\cal C}^*_{k}} - {\bfSigma}^*_{{\cal C}^*_{k} {\cal C}^*_{k}} \Big \|_2\Big) \Big\| (\widehat{\bfSigma}_{{\cal C}^*_{k-1} {\cal C}^*_{k-1}})^{-1} - ({\bfSigma}^*_{{\cal C}^*_{k-1} {\cal C}^*_{k-1}})^{-1} \Big\|_2\nonumber\\
		&+ \Big\|  \widehat{\bfSigma}_{\tau_{\pi_k^*}^* {\cal C}^*_{k-1}} -  {\bfSigma}^*_{\tau_{\pi_k^*}^* {\cal C}^*_{k-1}}\Big\|_2 \frac{1}{\Lambda_{\min}(\bfSigma^*)} \Big \}\nonumber\\
		\lesssim&_P \max_{vm^*} \Big \{ \Big( \Lambda_{\max}(\bfSigma^*) + n^{-\frac{1}{2}+\eta} \Big) \frac{2|{\cal C}^*_{k-1}|}{\Lambda_{\min}^2 (\bfSigma_{{\cal C}_{k-1}^* {\cal C}_{k-1}^*}^*)}n^{-\frac{1}{2} + \eta} + \frac{1}{\Lambda_{\min}(\bfSigma^*)} n^{-\frac{1}{2}+\eta}  \Big \}\nonumber\\
		\leq &  \Big ( \frac{2\Lambda_{\max}(\bfSigma^*)p}{\Lambda_{\min}^2(\bfSigma^*)} + \frac{1}{\Lambda_{\min}(\bfSigma^*)} + \frac{2p}{\Lambda_{\min}^2(\bfSigma^*)} n^{-\frac{1}{2}+\eta} \Big ) n^{-\frac{1}{2}+\eta} \lesssim_P \ n^{-\frac{1}{2} + \eta}.
	\end{align}
	
	To bound the second term of \eqref{eq::thm3::decomB}, 
	we have
	\begin{align*}\label{eq::thm3::B2}
		\| \widehat{\BB}^{\text{svd}} - \widehat{\BB}^{\text{reg}} \|_{\max} &= \| \widehat{\UU}^{\text{svd}} (\widehat{\DD}^{\text{svd}} - \widehat{\DD}^{\text{reg}} ) (\widehat{\VV}^{\text{svd}})^\top \|_{\max}\nonumber\\
		& \leq \| \widehat{\UU}^{\text{svd}} \|_2 \| \widehat{\DD}^{\text{svd}} - \widehat{\DD}^{\text{reg}} \|_2 \| \widehat{\VV}^{\text{svd}}\|_2 =  \| \widehat{\DD}^{\text{svd}} - \widehat{\DD}^{\text{reg}} \|_{\max}  \leq \kappa_n.
	\end{align*}
	Therefore, $\| \widehat{\BB}^{\text{svd}} - \BB^* \|_{\max} \lesssim_P n^{-\frac{1}{2}+\eta}$. Then, since $\nu_n = n^{-\frac{1}{2}+2\eta}$, we obtain $$
	\Pr(\sign(\widehat{\BB}) = \sign(\BB^*)\mid \widehat{\cal E}_u = {\cal E}_u^*, \widehat{\boldsymbol\pi} \in \Pi^*) \to 1.
	$$ We complete the proof since $\{\sign(\widehat{\BB}) = \sign(\BB^*)\} \subseteq  \{\widehat{\cal E}_d = {\cal E}_d^*\}$.
\end{proof}

\section*{Supplement: Auxiliary lemmas}

Similar to $F_{\UU_1},$ we denote $G_{\UU_1}:{{\cal S}(\OOO^*)}\times (D(\UU_1)-\LL^*)\to {{\cal S}(\OOO^*)}\times D(\UU_1)$ as 
$$
G_{\UU_1}(\DDD_{\OOO},\DDD_{\LL}) = (\PPP_{{{\cal S}(\OOO^*)}}\III^*v(\DDD_{\OOO}+\DDD_{\LL}),\PPP_{D(\UU_1)}\III^*v(\DDD_{\OOO}+\DDD_{\LL})).
$$ 

\begin{lemma}\label{lem:invertible}
	For $(\OOO^*,\BB^*)\in\PA,$ both $F_{\UU_1^*}$ and $G_{\UU_1^*}$ are invertible. Furthermore, there exists a small $\epsilon>0$ such that for any $\UU_1\in\R^{p\times K}$ satisfying $\|\UU_1-\UU_1^*\|_{\max}<\epsilon,$ it holds true that $
	\|F^{-1}_{\UU_1}\|\leq C$ and $\|G_{\UU_1}^{-1}\|\leq C,$ where the norm is the operator norm induced by $g_{\gamma}$.
\end{lemma}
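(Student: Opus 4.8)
The plan is to handle the two claims in turn: first I would prove that $F_{\UU_1^*}$ and $G_{\UU_1^*}$ are invertible at the true $\UU_1^*$, and then obtain the uniform bound on the inverses over a neighborhood by a perturbation argument. Since $F_{\UU_1^*}$ maps ${\cal S}(\OOO^*)\times{\cal T}(\UU_1^*)$ into itself, a space of the same finite dimension, it suffices to show injectivity. Suppose $F_{\UU_1^*}(\DDD_{\OOO},\DDD_{\LL})=(\0,\0)$ with $\DDD_{\OOO}\in{\cal S}(\OOO^*)$ and $\DDD_{\LL}\in{\cal T}(\UU_1^*)$, and set $\DDD=\DDD_{\OOO}+\DDD_{\LL}$. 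Using that $\PPP_{{\cal S}(\OOO^*)}$ and $\PPP_{{\cal T}(\UU_1^*)}$ are self-adjoint orthogonal projections fixing $v(\DDD_{\OOO})$ and $v(\DDD_{\LL})$, the two vanishing components yield
\[
\langle v(\DDD),\III^* v(\DDD)\rangle=\langle v(\DDD_{\OOO}),\PPP_{{\cal S}(\OOO^*)}\III^* v(\DDD)\rangle+\langle v(\DDD_{\LL}),\PPP_{{\cal T}(\UU_1^*)}\III^* v(\DDD)\rangle=0.
\]
As $\III^*=(\TTT^*)^{-1}\otimes(\TTT^*)^{-1}\succ0$, this forces $\DDD=\0$, so $\DDD_{\OOO}=-\DDD_{\LL}\in{\cal S}(\OOO^*)\cap{\cal T}(\UU_1^*)=\{\0\}$ by Assumption~\ref{ass:tangent}; hence $\DDD_{\OOO}=\DDD_{\LL}=\0$ and $F_{\UU_1^*}$ is invertible.

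For $G_{\UU_1^*}$ I would run the identical computation after two remarks. First, $D(\UU_1^*)\subseteq{\cal T}(\UU_1^*)$, because $\UU_1^*\DD_1(\UU_1^*)^\top=\UU_1^*\YY+\YY^\top(\UU_1^*)^\top$ with $\YY=\tfrac12\DD_1(\UU_1^*)^\top$, so that ${\cal S}(\OOO^*)\cap D(\UU_1^*)=\{\0\}$ as well; this is exactly what the argument needs once $\PPP_{{\cal T}(\UU_1^*)}$ is replaced by $\PPP_{D(\UU_1^*)}$. Second, since $\LL^*\in D(\UU_1^*)$ and $D(\UU_1^*)$ is a linear subspace, the affine domain $D(\UU_1^*)-\LL^*$ equals $D(\UU_1^*)$, so domain and codomain coincide and injectivity again gives invertibility. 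For a general $\UU_1$ in the neighborhood I would identify the affine set $D(\UU_1)-\LL^*$ with the linear space $D(\UU_1)$ by translation, since $G_{\UU_1}$ is linear in its argument and invertibility is governed by its linear part.

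For the uniform bound, the key fact is that for $\UU_1$ near $\UU_1^*$ the spaces ${\cal T}(\UU_1)$ and $D(\UU_1)$ retain constant dimension, being the ranges of $\YY\mapsto\UU_1\YY+\YY^\top\UU_1^\top$ and $\DD_1\mapsto\UU_1\DD_1\UU_1^\top$, which have locally constant rank; hence the orthogonal projections $\PPP_{{\cal T}(\UU_1)}$ and $\PPP_{D(\UU_1)}$ vary continuously in operator norm with $\UU_1$. Representing $F_{\UU_1}$ and $G_{\UU_1}$ as matrices with respect to Frobenius-orthonormal bases of the relevant subspaces that themselves vary continuously with $\UU_1$, these representations are continuous in $\UU_1$ and invertible at $\UU_1^*$. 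Because matrix inversion is continuous on the open set of invertible matrices, there exist $\epsilon>0$ and a constant $C$ such that $\|\UU_1-\UU_1^*\|_{\max}<\epsilon$ implies $\|F_{\UU_1}^{-1}\|,\|G_{\UU_1}^{-1}\|\le C$ in the Frobenius-induced operator norm; since $g_\gamma$ is a fixed norm, its equivalence with the Frobenius norm has constants independent of $\UU_1$ and transfers the bound to the $g_\gamma$-induced operator norm.

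The main obstacle is bookkeeping: the domains and codomains of $F_{\UU_1}$ and $G_{\UU_1}$ move with $\UU_1$, so ``continuous dependence'' and ``uniform bound'' must be made precise through a continuously varying orthonormal frame for ${\cal T}(\UU_1)$ and $D(\UU_1)$, or equivalently through an extension of each operator to the fixed ambient space $\R^{p^2}\times\R^{p^2}$ whose invertibility matches that of the restriction. Verifying the constant-rank property and the continuity of these projections is the technical crux; the invertibility at $\UU_1^*$ rests only on the positive-definiteness of $\III^*$ and Assumption~\ref{ass:tangent}, and the remaining steps are standard perturbation theory.
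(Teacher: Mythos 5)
Your proposal is correct and follows essentially the same route as the paper, which simply cites Lemma~1 of \cite{chen2016fused} for the invertibility of $F_{\UU_1^*}$ and $G_{\UU_1^*}$ and then asserts that transversality of ${\cal S}(\OOO^*)$ and ${\cal T}(\UU_1)$ persists for $\UU_1$ near $\UU_1^*$, from which boundedness of the inverses follows. Your injectivity argument via the quadratic form $\langle v(\DDD),\III^*v(\DDD)\rangle$ together with Assumption~\ref{ass:tangent} is exactly the standard mechanism behind the cited lemma, and your continuity-of-inversion treatment of the perturbation step is a legitimate (and more explicit) way to obtain the uniform constant $C$ that the paper leaves implicit.
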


\begin{proof}[Proof of Lemma~\ref{lem:invertible}]
	The invertibility of $F_{\UU_1^*}$ and $G_{\UU_1^*}$ is shown by Lemma 1 in \cite{chen2016fused}. Further, it is not difficult to show that for there exists a $\epsilon>0$ such that if $\|\UU_1-\UU_1^*\|_{\max}<\epsilon,$ then $\SSSS(\OOO^*)\cap\TT(\UU_1)=\emptyset,$ which implies the invertibility of $F_{\UU_1}$ and $G_{\UU_1}.$ The boundedness of their inverses follows. 
\end{proof}

\begin{lemma}\label{lem:norm ratio}
	For $(\OOO^*,\BB^*)\in\PA,$ it holds true that $$
	\inf_{\substack{(\OOO,\LL) \in {{\cal S}(\OOO^*)}\times {\cal T}(\LL^*) \\ \LL \neq \textbf{0}}} \frac{\|\OOO+\LL\|_{\max}}{\|\LL\|_{\max}} >0.
	$$
\end{lemma}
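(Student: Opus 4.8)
The plan is to exploit the scale invariance of the ratio, reduce the joint infimum over $(\OOO,\LL)$ to a single distance-to-subspace problem, and then close the argument with compactness together with the transversality Assumption~\ref{ass:tangent}.

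First I would note that the ratio $\|\OOO+\LL\|_{\max}/\|\LL\|_{\max}$ is invariant under the joint scaling $(\OOO,\LL)\mapsto(t\OOO,t\LL)$ for $t>0$, so it suffices to control the infimum over the normalized set where $\|\LL\|_{\max}=1$. For each fixed such $\LL$, I would minimize over $\OOO\in{\cal S}(\OOO^*)$ first; since ${\cal S}(\OOO^*)$ is a linear subspace (so $-\OOO$ ranges over it as well), this inner minimization is exactly the $\max$-norm distance from $\LL$ to ${\cal S}(\OOO^*)$,
\[
d(\LL):=\inf_{\OOO\in{\cal S}(\OOO^*)}\|\OOO+\LL\|_{\max}=\mathrm{dist}_{\max}\big(\LL,{\cal S}(\OOO^*)\big).
\]
The lemma is therefore equivalent to the statement $\inf_{\LL\in{\cal T}(\LL^*),\,\|\LL\|_{\max}=1} d(\LL)>0$. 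Note that this reformulation also disposes of the apparent difficulty that $\OOO$ ranges over an unbounded set: the distance to a finite-dimensional subspace is always attained, so no compactness is needed for the inner infimum.

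Next I would argue by compactness of the outer index set. The set $\Gamma=\{\LL\in{\cal T}(\LL^*):\|\LL\|_{\max}=1\}$ is the unit sphere, in the $\max$ norm, of the finite-dimensional space ${\cal T}(\LL^*)$, hence closed and bounded, hence compact. The map $\LL\mapsto d(\LL)$ is $1$-Lipschitz with respect to $\|\cdot\|_{\max}$ (distance to a fixed set is nonexpansive), so it is continuous and attains its minimum on $\Gamma$ at some $\LL_0$. Since ${\cal S}(\OOO^*)$ is a closed subspace, $d(\LL_0)=0$ would force $\LL_0\in{\cal S}(\OOO^*)$, giving $\LL_0\in{\cal S}(\OOO^*)\cap{\cal T}(\LL^*)$ with $\LL_0\neq\mathbf{0}$, which contradicts Assumption~\ref{ass:tangent}. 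Hence $d(\LL_0)>0$, and this positive minimum is precisely the desired lower bound.

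The only delicate point is the reduction of the joint infimum over $(\OOO,\LL)$ to the single infimum of $d(\cdot)$ over the compact sphere $\Gamma$; once that reduction is in place the remainder rests on the elementary facts that a continuous function attains its minimum on a compact set and that a closed subspace is at positive distance from any point outside it. I would expect no genuine obstacle here, since the entire content of the lemma is carried by the transversality condition ${\cal S}(\OOO^*)\cap{\cal T}(\LL^*)=\{\mathbf{0}\}$ supplied by Assumption~\ref{ass:tangent}.
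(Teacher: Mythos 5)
Your proof is correct: the reduction of the joint infimum to the distance function $d(\LL)=\mathrm{dist}_{\max}(\LL,{\cal S}(\OOO^*))$, the compactness of the unit sphere of the finite-dimensional subspace ${\cal T}(\LL^*)$, and the appeal to Assumption~\ref{ass:tangent} to rule out $d(\LL_0)=0$ are all sound, and every step (homogeneity of $d$, its $1$-Lipschitz continuity, closedness of ${\cal S}(\OOO^*)$) is justified. The paper itself omits the proof and defers to Lemma~3 of \cite{chen2016fused}, whose argument is exactly this normalization-plus-compactness-plus-transversality reasoning, so your approach is essentially the same as the intended one.
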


\begin{proof}[Proof of Lemma~\ref{lem:norm ratio}]
	The proof of Lemma~\ref{lem:norm ratio} is similar to Lemma 3 in \cite{chen2016fused}, and thus omitted here.
\end{proof}

\begin{lemma}\label{lem::identiOrder}
	Suppose that $\bx = (x_1,...,x_p)^\top\in \mathbb{R}^p$ is generated from the linear SEM model \eqref{eq:model} with parameters $(\OOO,\BB)\in \PA$. For any $0\leq s \leq m-2$ and $k \in [m] \setminus  \cup_{j=1}^s \pi_j$, it holds true that
	\begin{equation}
		{\cal D}(\tau_k, {\cal C}_s) = \max_{i\in \tau_k} \Big ( \Var ({x}_{i} | \mathbf{x}_{{\cal C}_s}) - \OOO_{ii}^{-1} \Big ) \left \{
		\begin{aligned}
			= & 0, & ~~\text{if}~~\pa(\tau_k) \subseteq {\cal C}_s, \\
			> & 0, & ~~\text{if}~~ \pa(\tau_k) \nsubseteq {\cal C}_s,
		\end{aligned}
		\right.
	\end{equation}
	where ${\cal C}_0 = \emptyset$ and ${\cal C}_s = \cup_{j=1}^s \tau_{\pi_j}$.
\end{lemma}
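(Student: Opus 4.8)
The plan is to analyze the quantity $\Var(x_i\mid\xx_{{\cal C}_s}) - \OOO_{ii}^{-1}$ directly through the structural equation and to reduce the whole statement to a single conditional-covariance computation. Here $\OOO_{ii}^{-1}$ denotes the $i$-th diagonal entry of $\OOO^{-1}$; because $\OOO$ is block diagonal across chain components, this equals $[\OOO_{\tau_k,\tau_k}^{-1}]_{ii}$, which by the factorization \eqref{eq:factorization} is exactly $\Var(x_i\mid\xx_{\pa(\tau_k)})$ for $i\in\tau_k$. For each $i\in\tau_k$ I would write $x_i = \mu_i + [\eee_{\tau_k}]_i$ with $\mu_i := [\BB_{\tau_k,\pa(\tau_k)}\xx_{\pa(\tau_k)}]_i$. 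The first ingredient is an independence fact: since $\OOO$ is block diagonal, $\eee_{\tau_k}\indep\eee_{\tau_{k'}}$ for $k'\neq k$; and since every node in ${\cal C}_s$ and in $\pa(\tau_k)$ lies in a chain component strictly earlier than $\tau_k$ in the causal ordering, the pair $(\xx_{{\cal C}_s},\xx_{\pa(\tau_k)})$ is a measurable function of noise vectors other than $\eee_{\tau_k}$, so $\eee_{\tau_k}\indep(\xx_{{\cal C}_s},\xx_{\pa(\tau_k)})$.

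Conditioning on $\xx_{{\cal C}_s}$ then keeps $[\eee_{\tau_k}]_i$ independent of $\mu_i$ and leaves its variance equal to $\OOO_{ii}^{-1}$, giving the key decomposition
$$
\Var(x_i\mid\xx_{{\cal C}_s}) - \OOO_{ii}^{-1} = \Var(\mu_i\mid\xx_{{\cal C}_s}) \ge 0 .
$$
This already yields the first case: if $\pa(\tau_k)\subseteq{\cal C}_s$, then $\mu_i$ is a deterministic linear function of $\xx_{{\cal C}_s}$, so $\Var(\mu_i\mid\xx_{{\cal C}_s})=0$ for every $i\in\tau_k$, hence ${\cal D}(\tau_k,{\cal C}_s)=0$.

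For the second case, suppose $\pa(\tau_k)\nsubseteq{\cal C}_s$ and split $\pa(\tau_k)=P_1\cup P_2$ with $P_1=\pa(\tau_k)\cap{\cal C}_s$ and $P_2=\pa(\tau_k)\setminus{\cal C}_s\neq\emptyset$. Conditioning on $\xx_{{\cal C}_s}$ fixes $\xx_{P_1}$, so the conditional covariance matrix of the vector $\mu_{\tau_k}=\BB_{\tau_k,\pa(\tau_k)}\xx_{\pa(\tau_k)}$ satisfies
$$
\Var(\mu_{\tau_k}\mid\xx_{{\cal C}_s}) = \BB_{\tau_k,P_2}\,\Var(\xx_{P_2}\mid\xx_{{\cal C}_s})\,\BB_{\tau_k,P_2}^\top .
$$
The main obstacle is to rule out cancellation, i.e. to show this matrix is not identically zero even though $\mu_{i^*}$ is a linear combination of several parents. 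Two facts resolve this: first, $\Var(\xx_{P_2}\mid\xx_{{\cal C}_s})$ is positive definite, being a Schur complement of the full-rank covariance $\bfSigma=(\TTT^*)^{-1}\succ0$ together with $P_2\cap{\cal C}_s=\emptyset$; second, $\BB_{\tau_k,P_2}\neq\0$, since each $j\in P_2\subseteq\pa(\tau_k)$ is a parent of some node in $\tau_k$ and thus forces a nonzero entry $\beta_{ij}$.

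Choosing a row $i^*$ of $\BB_{\tau_k,P_2}$ that is nonzero, positive-definiteness of $\Var(\xx_{P_2}\mid\xx_{{\cal C}_s})$ gives $[\BB_{\tau_k,P_2}\Var(\xx_{P_2}\mid\xx_{{\cal C}_s})\BB_{\tau_k,P_2}^\top]_{i^*i^*}>0$, i.e. $\Var(\mu_{i^*}\mid\xx_{{\cal C}_s})>0$, so that ${\cal D}(\tau_k,{\cal C}_s)\ge\Var(\mu_{i^*}\mid\xx_{{\cal C}_s})>0$. I expect the independence bookkeeping (via block-diagonality of $\OOO$ and the causal ordering) to be routine, and the positive-definiteness argument that defeats the potential cancellation among multiple parents to be the substantive step.
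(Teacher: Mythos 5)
Your proof is correct, but it takes a genuinely different route from the paper's. The paper works entirely at the level of covariance algebra: it writes $\SG=\MM\MM^\top$ with $\MM=\A\OOO^{-1/2}$ and $\A=(\II_p-\BB)^{-1}$, uses the block structure ($\A_{\CCC_s,\CCC_s^c}=\0$ and $\OOO^{-1/2}_{\CCC_s,\CCC_s^c}=\0$) to reduce the conditional covariance to $\Cov(\xx_{\tau_k}\mid\xx_{\CCC_s})=\OOO^{-1}_{\tau_k,\tau_k}+\sum_j\A_{\tau_k,\tau_j}\OOO^{-1}_{\tau_j,\tau_j}\A_{\tau_k,\tau_j}^\top$, and then, for strict positivity, must locate a node $l\in\tau_k$ and a parent $t$ joined to $l$ by \emph{no other directed path}, so that the total-effect entry $[\A_{\tau_k,\tau_i}]_{lt}=\beta_{lt}$ is guaranteed nonzero without cancellation among paths. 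You instead decompose $x_i=\mu_i+[\eee_{\tau_k}]_i$ probabilistically, use $\eee_{\tau_k}\indep(\xx_{\CCC_s},\xx_{\pa(\tau_k)})$ to get $\Var(x_i\mid\xx_{\CCC_s})-\OOO_{ii}^{-1}=\Var(\mu_i\mid\xx_{\CCC_s})\ge0$, and settle positivity by noting that $\Var(\xx_{P_2}\mid\xx_{\CCC_s})$ is a positive-definite Schur complement of $\SG\succ0$ while $\BB_{\tau_k,P_2}$ has a nonzero row. This buys you two things: the nonnegativity of ${\cal D}(\tau_k,\CCC_s)$ is immediate rather than read off a sum of PSD terms, and you avoid the path-uniqueness selection entirely because you work with the direct-effect matrix $\BB$ rather than the total-effect matrix $\A$, letting positive definiteness of the conditional parent covariance absorb any potential cancellation. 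The paper's computation, in exchange, yields the explicit closed form of $\Cov(\xx_{\tau_k}\mid\xx_{\CCC_s})$ in terms of $\A$ and $\OOO$, which is reused in spirit elsewhere (e.g.\ the lower bound ${\cal D}(\widehat\tau_i,\widehat{\cal C}_k)\ge\sum_{t\in\pa(l)\setminus\widehat{\cal C}_k}\beta_{lt}^2\Var(x_t\mid\xx_{\widehat{\cal C}_k})$ quoted in the proof of Proposition~\ref{prop:LayerOrder}); your argument also delivers such a quantitative bound via $[\BB_{\tau_k,P_2}\Var(\xx_{P_2}\mid\xx_{\CCC_s})\BB_{\tau_k,P_2}^\top]_{i^*i^*}$, so nothing downstream is lost. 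The only cosmetic slip is writing $\SG=(\TTT^*)^{-1}$ where the lemma concerns generic $(\OOO,\BB)\in\PA$ rather than the true parameters; replace $\TTT^*$ by $\TTT=(\II_p-\BB)^\top\OOO(\II_p-\BB)$, which is positive definite since $\OOO\succ0$ and $\II_p-\BB$ is invertible.
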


\begin{proof}[Proof of Lemma \ref{lem::identiOrder}]

It follows from \eqref{eq:model} that 
\begin{align}\label{eq::lem3::cov}
\Cov(\XX) = \SG = \MM \MM^\top,
\end{align}
where $\MM = \A \OOO^{-1/2}$ and $\A = (\II_p - \BB)^{-1} = (a_{ij})_{p\times p}$. Note that the off-diagonal element of $\A$ is called the total causal effect \cite{ChenW2019} from one node to another. Particularly, $a_{ij}$ is the sum of the effects of all directed paths from node $j$ to node $i$, where the effect of each directed path is the product of the coefficient of all directed edges, and $a_{ii} = 1$ for $i\in [p]$. Then,  it follows from \eqref{eq::lem3::cov} that the submatrix of $\SG$ corresponding to the node set $ \CCC_s \cup \tau_k$ is 
\begin{align}\label{eq::lem3::subcov}
\SG_{ \CCC_s \cup \tau_k,  \CCC_s \cup \tau_k} = \MM_{ \CCC_s \cup \tau_k, \CCC_s\cup \CCC_s^c} \MM_{ \CCC_s \cup \tau_k, \CCC_s\cup \CCC_s^c}^\top,
\end{align}
where $\CCC_s^c = [p]\backslash \CCC_s$ and
\begin{equation*}
	\MM_{ \CCC_s \cup \tau_k, \CCC_s\cup \CCC_s^c}  = 
	\begin{pmatrix}
		{\MM}_{\CCC_s,\CCC_s} & 	{\MM}_{\CCC_s,\CCC_s^c} \\
		{\MM}_{\tau_k, \CCC_s} & {\MM}_{\tau_k, \CCC_s^c} \\
	\end{pmatrix}.
\end{equation*}
Note that 
\begin{align}\label{eq::lem3::AOOO}
		\MM_{ \CCC_s \cup \tau_k, \CCC_s\cup \CCC_s^c}  = \A_{\CCC_s \cup \tau_k, \CCC_s\cup \CCC_s^c} \OOO^{-1/2}_{\CCC_s\cup \CCC_s^c, \CCC_s\cup \CCC_s^c},
\end{align}
where 
\begin{equation*}
	\A_{ \CCC_s \cup \tau_k, \CCC_s\cup \CCC_s^c}  = 
	\begin{pmatrix}
		{\A}_{\CCC_s,\CCC_s} & 	{\A}_{\CCC_s,\CCC_s^c} \\
		{\A}_{\tau_k, \CCC_s} & {\A}_{\tau_k, \CCC_s^c} \\
	\end{pmatrix}
\end{equation*}
and
\begin{equation*}
	\OOO^{-1/2}_{ \CCC_s \cup \CCC_s^c, \CCC_s\cup \CCC_s^c}  = 
	\begin{pmatrix}
	\OOO^{-1/2}_{\CCC_s,\CCC_s} & \OOO^{-1/2}_{\CCC_s,\CCC_s^c} \\
	\OOO^{-1/2}_{\CCC_s^c, \CCC_s} & \OOO^{-1/2}_{\CCC_s^c, \CCC_s^c} \\
	\end{pmatrix}.
\end{equation*}
It follows from the fact that there exists no directed edge from node in lower chain components to nodes in upper chain components that $\A_{\CCC_s,\CCC_s^c}={\bf 0}_{|\CCC_s|\times |\CCC_s^c|}$. Further, since there is no undirected edge across different chain components,  we have $\OOO^{-1/2}_{\CCC_s,\CCC_s^c} = (\OOO^{-1/2}_{\CCC_s^c, \CCC_s} )^\top = {\bf 0}_{|\CCC_s|\times |\CCC_s^c|}$. Thus, there holds
\begin{align*}
{\MM}_{\CCC_s,\CCC_s^c} = 	{\A}_{\CCC_s,\CCC_s} \OOO^{-1/2}_{\CCC_s,\CCC_s^c}  + {\A}_{\CCC_s,\CCC_s^c} \OOO^{-1/2}_{\CCC_s^c, \CCC_s^c} = {\bf 0}_{|\CCC_s|\times |\CCC_s^c|}.
\end{align*}
Then, we have 
\begin{align}
\Cov (\xx_{\tau_k} | \xx_{\CCC_s}) &= \SG_{\tau_k, \tau_k} - \SG_{\tau_k, \CCC_s}(\SG_{\CCC_s, \CCC_s})^{-1} \SG_{\CCC_s, \tau_k}\nonumber\\
&= \MM_{\tau_k, \CCC_s} \MM_{\tau_k, \CCC_s}^\top +  \MM_{\tau_k, \CCC_s^c} \MM_{\tau_k, \CCC_s^c}^\top -  \MM_{\tau_k, \CCC_s} \MM_{\CCC_s, \CCC_s}^\top ( \MM_{\CCC_s, \CCC_s} \MM_{\CCC_s, \CCC_s}^\top)^{-1}  \MM_{\CCC_s, \CCC_s} \MM_{\tau_k, \CCC_s}^\top \nonumber\\
& =  \MM_{\tau_k, \CCC_s^c} \MM_{\tau_k, \CCC_s^c}^\top \nonumber\\
& = \A_{\tau_k, \CCC_s^c} \OOO^{-1}_{\CCC_s^c, \CCC_s^c} \A_{\tau_k, \CCC_s^c}^\top \nonumber\\
& = \OOO_{\tau_k,\tau_k}^{-1} + \sum_{j\in[m] \backslash ((\cup_{j=1}^s \pi_j) \cup \{k\})}  \A_{\tau_k, \tau_j} \OOO^{-1}_{\tau_j, \tau_j} \A_{\tau_k, \tau_j}^\top, \label{eq::condi}
\end{align}
where the second equality follows from \eqref{eq::lem3::subcov} and the fourth equality follows from \eqref{eq::lem3::AOOO}.

If $\pa(\tau_k) \subseteq {\cal C}_s$, it follows from the fact that $\A_{\tau_k, \tau_j} = {\bf 0}_{|\tau_k|\times |\tau_j|}$ for $j\in[m] \backslash ((\cup_{j=1}^s \pi_j) \cup \{k\})$ that \eqref{eq::condi} can be rewritten as 
$\Cov (\xx_{\tau_k} | \xx_{{\cal C}_s})= \OOO_{\tau_k,\tau_k}^{-1}$, which implies that ${\cal D}(\tau_k, {\cal C}_s) = 0$.

If $\pa(\tau_k) \nsubseteq {\cal C}_s$, it follows from no semi-directed cycle assumption that there exists a node $l\in \tau_k$ and  another node $t\in \tau_i \subseteq \CCC_s^c \backslash \tau_k$ such that $t\in \pa (l)$ and no other directed path from node $t$ to node $l$. Then, we have 
\begin{align}
\Var (x_l | \xx_{\CCC_s}) &= e_l^\top \Cov (\xx_{\tau_k} | \xx_{\CCC_s}) e_l \nonumber\\
&= \OOO_{ll}^{-1} +   \sum_{j\in[m] \backslash ((\cup_{j=1}^s \pi_j) \cup \{k\})}  e_l^\top\A_{\tau_k, \tau_j} \OOO^{-1}_{\tau_j, \tau_j} \A_{\tau_k, \tau_j}^\top e_l\nonumber\\
& \geq \OOO_{ll}^{-1} +  (\A_{\tau_k, \tau_i}^\top e_l)^\top \OOO^{-1}_{\tau_i, \tau_i} \A_{\tau_k, \tau_i}^\top e_l, \label{eq::lem3::sandwich}
\end{align}
where $e_l\in\{0,1\}^{|\tau_k|}$ and only the element of $e_l$ corresponding to node $l$ is equal to $1$, and the last inequality follows from the fact that $\OOO_{\tau_j,\tau_j}^{-1}$ is positive definite. Next, denote $[\A_{\tau_k, \tau_i}]_{lt}$ as the element of $\A_{\tau_k, \tau_i}$ corresponding to node $l\in\tau_k$ and $t\in \tau_i$. Since $t\in \pa (l)$ and there is no other directed path from node $t$ to node $l$, we have $[\A_{\tau_k, \tau_i}]_{lt} = \beta_{lt}\neq 0$ which implies $\A_{\tau_k, \tau_i}^\top e_l \neq {\bf 0}_{|\tau_i|}$. Then, it follows from \eqref{eq::lem3::sandwich} and the fact that $\OOO^{-1}_{\tau_i, \tau_i}$ is positive definite that 
\begin{align*}
\Var (x_l | \xx_{\CCC_s}) > \OOO_{ll}^{-1}.
\end{align*}
Thus, there holds
\begin{align*}
		{\cal D}(\tau_k, {\cal C}_s) =  \max_{i\in \tau_k} \Big\{ \Var (x_{i} | \xx_{{\cal C}_s}) - \OOO_{ii}^{-1} \Big\}  \geq \Var (x_{l} | \xx_{{\cal C}_s}) - \OOO_{ll}^{-1} > 0.
\end{align*}
This completes the proof. \end{proof}


\begin{lemma}\label{lem:hat Theta}
	Let $\widehat\OS = {{\cal S}(\OOO^*)} \times D(\widehat\UU_{\OS_2}).$
	Then, with probability approaching 1, $h(\OOO,\LL)$ has a unique minimizer in $\widehat\OS$, denoted as $(\widehat\OOO_{\widehat\OS},\widehat\LL_{\widehat\OS})$. Let $\widehat\LL_{\widehat\OS} = \widehat\UU_{\OS_2}\widehat\DD_{\widehat\OS}\widehat\UU_{\OS_2}^\top$ be the eigen decomposition, then we have 
	$$
	\|\widehat\DD_{\widehat\OS} - \DD_1^*\|_{\max} \lesssim_P \lambda_n^{1-\eta}, \mbox{ and }
	\|\widehat\OOO_{\widehat\OS} - \OOO^*\|_{\max} \lesssim_P \lambda_n^{1-\eta}.
	$$
\end{lemma}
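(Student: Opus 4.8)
The plan is to exploit that, once the optimization is restricted to $\widehat\OS={\cal S}(\OOO^*)\times D(\widehat\UU_{\OS_2})$, the objective $h(\OOO,\LL)$ becomes a smooth strictly convex loss plus a \emph{separable} convex penalty, after which a standard basic-inequality argument delivers the rate. The starting observation is that $\widehat\UU_{\OS_2}$ is close to $\UU_1^*$: since the minimizer $(\widehat\OOO_{\OS_2},\widehat\LL_{\OS_2})$ lies in $\OS_2\subset\OS_1$, the definition of $\OS_1$ forces its $K$ leading eigenvectors to satisfy $\|\widehat\UU_{\OS_2}-\UU_1^*\|_{\max}\le\lambda_n^{1-\eta}\to0$, so $\widehat\UU_{\OS_2}$ lives in the neighborhood of $\UU_1^*$ on which Lemma~\ref{lem:invertible} guarantees ${\cal S}(\OOO^*)\cap D(\widehat\UU_{\OS_2})=\{\0\}$, with $F_{\widehat\UU_{\OS_2}}$ and $G_{\widehat\UU_{\OS_2}}$ invertible and their inverses uniformly bounded. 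I would first parametrize $\widehat\OS$ by $\OOO=\OOO^*+\DDD_{\OOO}$ with $\DDD_{\OOO}\in{\cal S}(\OOO^*)$ and $\LL=\widehat\UU_{\OS_2}\DD_1\widehat\UU_{\OS_2}^\top$ with $\DD_1$ a free $K\times K$ diagonal matrix; because $\widehat\UU_{\OS_2}$ is orthonormal, $\|\LL\|_*=\sum_{j}|(\DD_1)_{jj}|$, so the penalty collapses to $\lambda_n(\|\OOO^*+\DDD_{\OOO}\|_{1,\off}+\gamma\|\DD_1\|_1)$, which is convex and separable in the free parameters.

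For existence and uniqueness I would use that $-l(\TTT)=\tr(\TTT\widehat\SG)-\log\det\TTT$ is strictly convex on the positive-definite cone, and that the affine map $(\DDD_{\OOO},\DD_1)\mapsto\DDD_{\OOO}+\widehat\UU_{\OS_2}\DD_1\widehat\UU_{\OS_2}^\top$ is injective precisely because ${\cal S}(\OOO^*)\cap D(\widehat\UU_{\OS_2})=\{\0\}$. Composing a strictly convex function with an injective affine map yields an $h$ that is strictly convex in $(\DDD_{\OOO},\DD_1)$; with $\widehat\SG\succ0$ (valid w.p.\ $1$ for fixed $p$ and large $n$), the $-\log\det$ barrier at the boundary of $\OS_0$ together with coercivity of the penalty confines the minimizer to the interior and produces a unique minimizer $(\widehat\OOO_{\widehat\OS},\widehat\LL_{\widehat\OS})$ with probability tending to $1$.

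For the rate I would compare against the feasible point $(\OOO^*,\widehat\UU_{\OS_2}\DD_1^*\widehat\UU_{\OS_2}^\top)\in\widehat\OS$, whose precision $\TTT^{\mathrm{c}}$ satisfies $\|\TTT^{\mathrm{c}}-\TTT^*\|_{\max}\lesssim\|\widehat\UU_{\OS_2}-\UU_1^*\|_{\max}\lesssim\lambda_n^{1-\eta}$ since $\DD_1^*$ is bounded. Writing $\DDD=\widehat\OOO_{\widehat\OS}+\widehat\LL_{\widehat\OS}-\TTT^*$ and feeding the basic inequality $h(\widehat\OOO_{\widehat\OS},\widehat\LL_{\widehat\OS})\le h(\OOO^*,\widehat\UU_{\OS_2}\DD_1^*\widehat\UU_{\OS_2}^\top)$ into the second-order expansion \eqref{eq:taylor1}--\eqref{eq:remainder1}, the quadratic term is bounded below by $\tfrac12\Lambda_{\min}(\III^*)\|\DDD\|_{\max}^2$, the linear term is $O_P(\|\DDD\|_{\max}/\sqrt n)$ via $\|\nabla l(\TTT^*)\|_{\max}=O_P(n^{-1/2})$, the penalty differences contribute $O(\lambda_n\|\DDD\|_{\max})$ on the fixed support, and the comparison point injects the constant $O(\lambda_n^{2(1-\eta)})$. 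Solving the resulting quadratic inequality with $\lambda_n=n^{-1/2+\eta}$ shows the dominant contribution is the comparison discrepancy, giving $\|\widehat\OOO_{\widehat\OS}+\widehat\LL_{\widehat\OS}-\TTT^*\|_{\max}\lesssim_P\lambda_n^{1-\eta}$.

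The final and hardest step is to split this combined bound into the two stated estimates $\|\widehat\OOO_{\widehat\OS}-\OOO^*\|_{\max}\lesssim_P\lambda_n^{1-\eta}$ and $\|\widehat\DD_{\widehat\OS}-\DD_1^*\|_{\max}\lesssim_P\lambda_n^{1-\eta}$. The obstacle is that the increment lives in ${\cal S}(\OOO^*)\times D(\widehat\UU_{\OS_2})$ with the \emph{perturbed} eigenvectors, so I cannot directly invoke transversality at $\UU_1^*$. Instead I would use the lower bound on $\|\DDD_{\OOO}+\DDD_{\LL}\|_{\max}$ relative to $\|\DDD_{\OOO}\|_{\max}$ and $\|\DDD_{\LL}\|_{\max}$ supplied by Lemma~\ref{lem:norm ratio}, together with the uniformly bounded inverse of $G_{\widehat\UU_{\OS_2}}$ from Lemma~\ref{lem:invertible} (both valid since $\|\widehat\UU_{\OS_2}-\UU_1^*\|_{\max}\le\lambda_n^{1-\eta}$), and then transfer control of $\|\widehat\LL_{\widehat\OS}-\widehat\UU_{\OS_2}\DD_1^*\widehat\UU_{\OS_2}^\top\|_{\max}$ to $\|\widehat\DD_{\widehat\OS}-\DD_1^*\|_{\max}$ using the orthonormality of $\widehat\UU_{\OS_2}$. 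Tracking how the $\lambda_n^{1-\eta}$ perturbation of the eigenbasis propagates through these projection and inverse bounds is where the main care will be needed.
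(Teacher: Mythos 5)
Your proposal is correct in outline but takes a genuinely different route from the paper. The paper does not run a basic-inequality argument at all: it writes the first-order conditions of $h$ restricted to $\widehat\OS$, inverts the linearized system via $G_{\widehat\UU_{\OS_2}}^{-1}$ (Lemma~\ref{lem:invertible}), and shows that the resulting map $\CC_{\widehat\UU_{\OS_2}}$ is a self-map and a contraction on the ball $\BBB=\{(\OOO,\LL)\in\widehat\OS: g_{\gamma}(\OOO-\OOO^*,\LL-\LL^*)\le\lambda_n^{1-\eta}\}$ (Lemma~\ref{lem:contraction}); the fixed-point theorem then delivers existence, uniqueness (upgraded to all of $\widehat\OS$ by convexity), and---because the fixed point lies in $\BBB$ and $g_{\gamma}$ controls $\|\DDD_{\OOO}\|_{\max}$ and $\|\DDD_{\LL}\|_2$ separately---both stated rates at once, the bound on $\widehat\DD_{\widehat\OS}$ following from Weyl's inequality. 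Your route (strict convexity of $-\log\det$ composed with the injective affine parametrization of $\widehat\OS$, plus a basic inequality against the surrogate $(\OOO^*,\widehat\UU_{\OS_2}\DD_1^*\widehat\UU_{\OS_2}^\top)$) is more elementary and is sound for fixed $p$, but it pays for that in exactly the two places you partly anticipate. First, the expansion \eqref{eq:taylor1} with cubic remainder is only usable once $\|\DDD\|_{\max}$ is already known to be small, so the basic inequality must be localized (e.g., applied at the point where the segment from the surrogate to the minimizer exits a small ball, using convexity of $h$ along that segment) before the quadratic inequality can be solved; the contraction construction gets this localization for free. Second, the splitting of the combined bound into the two stated estimates is genuinely needed and is precisely what the paper's $g_{\gamma}$-ball sidesteps: your plan---decompose $\widehat\LL_{\widehat\OS}-\LL^*$ into a piece in $D(\widehat\UU_{\OS_2})$ plus a basis-perturbation piece of size $O(\lambda_n^{1-\eta})$, then invoke the transversality and bounded-inverse statements of Lemmas~\ref{lem:invertible} and \ref{lem:norm ratio} uniformly over $\|\UU_1-\UU_1^*\|_{\max}<\epsilon$---is workable, but it must be carried out rather than asserted, and note also that the $-\log\det$ barrier controls only $\OOO+\LL$, not $\OOO$ alone, so the existence claim needs the localization rather than coercivity at the $\OOO\succ0$ boundary. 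Neither issue is fatal; both are absorbed automatically by the paper's fixed-point argument.
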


\begin{proof}[Proof of Lemma~\ref{lem:hat Theta}]
	We first show the existence and uniqueness for the minimizer of $h(\OOO,\LL)$ in $\widehat \OS = {{\cal S}(\OOO^*)} \times D(\widehat\UU_{\OS_2}).$ This is equivalent to show that there exists a unique $(\OOO,\LL) \in \widehat\OS$ such that the following first order condition holds: 
	\begin{equation}\label{eq:1st cond}
		\0 \in \PPP_{{{\cal S}(\OOO^*)}}\partial_{\OOO} h(\OOO,\LL), \text{~~and~~} \0 \in \PPP_{D(\widehat\UU_{\OS_2})}\partial_{\LL} h(\OOO,\LL).
	\end{equation}
	Since $h(\OOO,\LL)$ is convex on the linear subspace $\widehat\OS,$ it suffices to show there exists a unique $(\OOO,\LL)\in \BBB$ such that \eqref{eq:1st cond} holds, where 
	$$
	\BBB = \{ (\OOO,\LL) \in \widehat\OS: g_{\gamma}(\OOO-\OOO^*,\LL-\LL^*) \leq \lambda_n^{1-\eta}\} 
	$$ 
	is a neighborhood of $(\OOO^*,\LL^*)$ in $\widehat\OS.$
	Note that for $(\OOO,\LL) \in\BBB,$ we have 
	\begin{equation*}
		\PPP_{{{\cal S}(\OOO^*)}}\partial_{\OOO} \|\OOO\|_{1,\off} = \sign(\OO(\OOO^*))~~\text{and}~~\PPP_{D(\widehat\UU_{\OS_2})}\partial_{\LL} \|\LL\|_* = \widehat\UU_{\OS_2}\widehat\VV_{\OS_2}^\top,
	\end{equation*}
	where $\widehat\VV_{\OS_2}$ contains the right singular vectors of $\widehat\LL_{\OS_2},$ which is the same as $\widehat\UU_{\OS_2}$ except that the columns corresponding to the negative eigenvalues are multiplied by a negative sign. 
	
	In the following, we denote $\DDD_{\OOO} = \OOO-\OOO^*,~\DDD_{\LL} = \LL-\LL^*$ and $\DDD = \DDD_{\OOO} + \DDD_{\LL}.$ By \eqref{eq:taylor1}, we have $$
	\PPP_{{{\cal S}(\OOO^*)}}\partial_{\OOO} h(\OOO,\LL) = \PPP_{{{\cal S}(\OOO^*)}} \III^*v(\DDD_{\OOO}+\DDD_{\LL}) +\PPP_{{{\cal S}(\OOO^*)}} \nabla R_n(\DDD_{\OOO}+\DDD_{\LL}) + \lambda_n \sign(\OO(\OOO^*))
	$$ $$
	\PPP_{D(\widehat\UU_{\OS_2})}\partial_{\LL} h(\OOO,\LL) = \PPP_{D(\widehat\UU_{\OS_2})} \III^*v(\DDD_{\OOO}+\DDD_{\LL}) +\PPP_{D(\widehat\UU_{\OS_2})} \nabla R_n(\DDD_{\OOO}+\DDD_{\LL}) + \gamma\lambda_n \widehat\UU_{\OS_2}\widehat\VV_{\OS_2}^\top.
	$$ 
	Then, the first order condition becomes $$
	G_{\widehat\UU_{\OS_2}}(\DDD_{\OOO},\DDD_{\LL}) = - \left(\PPP_{{{\cal S}(\OOO^*)}} \nabla R_n(\DDD)+\lambda_n \sign(\OO(\OOO^*)),\PPP_{D(\widehat\UU_{\OS_2})} \nabla R_n(\DDD)+\gamma\lambda_n \widehat\UU_{\OS_2}\widehat\VV_{\OS_2}^\top\right).
	$$ 
	By Lemma~\ref{lem:invertible}, the mapping $G_{\widehat\UU_{\OS_2}}$ is invertible and $G^{-1}_{\widehat\UU_{\OS_2}}$ is bounded. Define an operator on $\BBB \setminus (\OOO^*,\LL^*)$ as 
	$$
	\CC_{\widehat\UU_{\OS_2}}(\DDD_{\OOO},\DDD_{\LL}) = -G^{-1}_{\widehat\UU_{\OS_2}}\left(\PPP_{{{\cal S}(\OOO^*)}} \nabla R_n(\DDD)+\lambda_n \sign(\OO(\OOO^*)),\PPP_{D(\widehat\UU_{\OS_2})} \nabla R_n(\DDD)+\gamma\lambda_n \widehat\UU_{\OS_2}\widehat\VV_{\OS_2}^\top\right).
	$$ 
	
	Lemma \ref{lem:contraction} shows that, with probability approaching 1, $\CC_{\widehat\UU_{\OS_2}}$ is a contraction mapping on $\BBB.$ 
	By the fixed point theorem, there exists a unique solution $(\widehat\DDD_{\OOO},\widehat\DDD_{\LL})\in\BBB$ such that 
	$$
	\CC_{\widehat\UU_{\OS_2}}(\widehat\DDD_{\OOO},\widehat\DDD_{\LL}) = (\widehat\DDD_{\OOO},\widehat\DDD_{\LL}).
	$$ We complete the proof by letting $\widehat\OOO_{\widehat\OS} = \OOO^*+\widehat\DDD_{\OOO}$ and $\widehat\LL_{\widehat\OS} = \LL^*+\widehat\DDD_{\LL}.$
\end{proof}

\begin{lemma}\label{lem:contraction}
	With probability approaching 1, $\CC_{\widehat\UU_{\OS_2}}(\BBB)\subset\BBB.$ Furthermore, $\CC_{\widehat\UU_{\OS_2}}$ is a contraction mapping on $\BBB.$
\end{lemma}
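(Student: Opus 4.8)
The plan is to verify the two claims by the self-map-plus-contraction argument that powers the Banach fixed point theorem invoked in the proof of Lemma~\ref{lem:hat Theta}, following the same template as the corresponding lemmas in \cite{chen2016fused}. Throughout I would rely on three inputs: the uniform bound $\|G_{\widehat\UU_{\OS_2}}^{-1}\|\le C$ from Lemma~\ref{lem:invertible} in the operator norm induced by $g_{\gamma}$ (applicable because $\|\widehat\UU_{\OS_2}-\UU_1^*\|_{\max}<_P\lambda_n^{1-\eta}$), the remainder estimates \eqref{eq:remainder2} and \eqref{eq:remainder3}, and the elementary facts $\|\sign(\OO(\OOO^*))\|_{\max}\le 1$ and $\|\widehat\UU_{\OS_2}\widehat\VV_{\OS_2}^\top\|_2=1$. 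Since $\CC_{\widehat\UU_{\OS_2}}=-G_{\widehat\UU_{\OS_2}}^{-1}(\cdots)$, every bound on $\CC_{\widehat\UU_{\OS_2}}$ reduces, up to the factor $C$, to a $g_{\gamma}$-bound on the argument pair, where I also use that $\PPP_{{\cal S}(\OOO^*)}$ does not increase $\|\cdot\|_{\max}$ and $\PPP_{D(\widehat\UU_{\OS_2})}$ does not increase $\|\cdot\|_2$.

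For the self-map property $\CC_{\widehat\UU_{\OS_2}}(\BBB)\subset\BBB$, I would fix $(\OOO,\LL)\in\BBB$, set $\DDD=\DDD_{\OOO}+\DDD_{\LL}$ with $\|\DDD\|_{\max}\lesssim\lambda_n^{1-\eta}$, and bound the two coordinates of the argument of $G_{\widehat\UU_{\OS_2}}^{-1}$. The first has $\|\PPP_{{\cal S}(\OOO^*)}\nabla R_n(\DDD)+\lambda_n\sign(\OO(\OOO^*))\|_{\max}\le\|\nabla R_n(\DDD)\|_{\max}+\lambda_n$, and by \eqref{eq:remainder2} with $n^{-1/2}=\lambda_n n^{-\eta}$ and $\|\DDD\|_{\max}^2\lesssim\lambda_n^{2(1-\eta)}$ this equals $\lambda_n(1+o_p(1))$ for sufficiently small $\eta$. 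For the second, passing from max to spectral norm via $\|\MM\|_2\le p\|\MM\|_{\max}$ at the fixed dimension $p$ gives $\gamma^{-1}\|\PPP_{D(\widehat\UU_{\OS_2})}\nabla R_n(\DDD)+\gamma\lambda_n\widehat\UU_{\OS_2}\widehat\VV_{\OS_2}^\top\|_2\le\gamma^{-1}\|\nabla R_n(\DDD)\|_2+\lambda_n=\lambda_n(1+o_p(1))$. Hence $g_{\gamma}(\CC_{\widehat\UU_{\OS_2}}(\DDD_{\OOO},\DDD_{\LL}))\le C\lambda_n(1+o_p(1))$, and because $\lambda_n^{\eta}\to0$ the right-hand side is below $\lambda_n^{1-\eta}$ with probability approaching $1$, which is exactly membership in $\BBB$.

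For the contraction, the key observation is that the subgradient terms $\lambda_n\sign(\OO(\OOO^*))$ and $\gamma\lambda_n\widehat\UU_{\OS_2}\widehat\VV_{\OS_2}^\top$ do not depend on the argument, so for two elements of $\BBB$ with deviations $(\DDD_{\OOO}^{(1)},\DDD_{\LL}^{(1)})$ and $(\DDD_{\OOO}^{(2)},\DDD_{\LL}^{(2)})$ they cancel, and with $\DDD^{(i)}=\DDD_{\OOO}^{(i)}+\DDD_{\LL}^{(i)}$,
$$
\CC_{\widehat\UU_{\OS_2}}(\DDD_{\OOO}^{(1)},\DDD_{\LL}^{(1)})-\CC_{\widehat\UU_{\OS_2}}(\DDD_{\OOO}^{(2)},\DDD_{\LL}^{(2)})=-G_{\widehat\UU_{\OS_2}}^{-1}\big(\PPP_{{\cal S}(\OOO^*)}\Delta,\ \PPP_{D(\widehat\UU_{\OS_2})}\Delta\big),
$$
where $\Delta=\nabla R_n(\DDD^{(1)})-\nabla R_n(\DDD^{(2)})$. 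Writing $\Delta$ as the integral $\big(\int_0^1\nabla^2 R_n(\DDD^{(2)}+t(\DDD^{(1)}-\DDD^{(2)}))\,dt\big)\,v(\DDD^{(1)}-\DDD^{(2)})$ and applying \eqref{eq:remainder3}, every intermediate point has $\|\cdot\|_{\max}\lesssim\lambda_n^{1-\eta}$, so the Hessian has operator norm $O(\lambda_n^{1-\eta})$ and $\|\Delta\|_{\max}\lesssim\lambda_n^{1-\eta}\|\DDD^{(1)}-\DDD^{(2)}\|_{\max}$. Converting through $\|\DDD^{(1)}-\DDD^{(2)}\|_{\max}\le(1+\gamma)g_{\gamma}(\DDD_{\OOO}^{(1)}-\DDD_{\OOO}^{(2)},\DDD_{\LL}^{(1)}-\DDD_{\LL}^{(2)})$ and $\|\Delta\|_2\le p\|\Delta\|_{\max}$, the bound on $G_{\widehat\UU_{\OS_2}}^{-1}$ yields a Lipschitz factor of order $\lambda_n^{1-\eta}$, which is below $1$ with probability approaching $1$.

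The main obstacle I anticipate is entirely bookkeeping of norms: translating the remainder bounds \eqref{eq:remainder2} and \eqref{eq:remainder3}, which live in $\|\cdot\|_{\max}$, into the mixed metric $g_{\gamma}$ (max norm on the sparse coordinate, spectral norm on the low-rank coordinate) while routing everything through the two projections. Because $p$ is fixed, each such conversion costs only a constant factor, and the argument closes because $\lambda_n^{1-\eta}\to0$ (contraction) and $C\lambda_n^{\eta}\to0$ (self-map) eventually dominate those constants; one should also check that the $o_p(1)$ and $O(\cdot)$ terms are uniform over $\BBB$, which holds since $\BBB$ has $g_{\gamma}$-radius $\lambda_n^{1-\eta}\to0$.
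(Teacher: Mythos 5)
Your proposal is correct and follows essentially the same route as the paper's proof: the self-map property comes from the boundedness of $G_{\widehat\UU_{\OS_2}}^{-1}$ (Lemma~\ref{lem:invertible}) together with the remainder bound \eqref{eq:remainder2} showing the gradient term is $o_p(\lambda_n)$ while the subgradient term contributes exactly $\lambda_n\le\lambda_n^{1-\eta}$, and the contraction property comes from cancelling the argument-independent subgradient terms and applying a mean-value/Taylor bound with \eqref{eq:remainder3} to get a Lipschitz constant of order $\lambda_n^{1-\eta}\to0$. Your extra bookkeeping on the norm conversions between $\|\cdot\|_{\max}$, $\|\cdot\|_2$ and $g_\gamma$ is consistent with (and slightly more explicit than) the paper's use of $\lesssim$ at fixed $p$.
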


\begin{proof}[Proof of Lemma~\ref{lem:contraction}]
	For $(\DDD_{\OOO},\DDD_{\LL})\in\BBB,$ denote $\DDD = \DDD_{\OOO}+\DDD_{\LL}.$ Then, we have $$
	\begin{aligned}
		&~g_{\gamma}\left(\CC_{\widehat\UU_{\OS_2}}(\DDD_{\OOO},\DDD_{\LL})\right)\\
		& \leq~ g_{\gamma}\left(G^{-1}_{\widehat\UU_{\OS_2}}\left(\PPP_{{{\cal S}(\OOO^*)}} \nabla R_n(\DDD),\PPP_{D(\widehat\UU_{\OS_2})} \nabla R_n(\DDD)\right)\right) + g_{\gamma}\left(G^{-1}_{\widehat\UU_{\OS_2}}\left(\lambda_n \sign(\OO(\OOO^*)),\gamma\lambda_n \widehat\UU_{\OS_2}\widehat\VV_{\OS_2}^\top\right)\right)\\
		&\lesssim g_{\gamma}\left(\PPP_{{{\cal S}(\OOO^*)}} \nabla R_n(\DDD),\PPP_{D(\widehat\UU_{\OS_2})} \nabla R_n(\DDD)\right) + g_{\gamma}\left(\lambda_n \sign(\OO(\OOO^*)),\gamma\lambda_n \widehat\UU_{\OS_2}\widehat\VV_{\OS_2}^\top\right),
	\end{aligned}
	$$ where the second inequality is due to Lemma~\ref{lem:invertible}. By \eqref{eq:remainder2}, we have $$
	g_{\gamma}\left(\PPP_{{{\cal S}(\OOO^*)}} \nabla R_n(\DDD),\PPP_{D(\widehat\UU_{\OS_2})} \nabla R_n(\DDD)\right) = O_p\left(\frac{1}{\sqrt{n}}\right)+O(\lambda_n^{2(1-\eta)}) \lesssim_P \lambda_n^{1-\eta},
	$$ and it hold true that $
	g_{\gamma}\left(\lambda_n \sign(\OO(\OOO^*)),\gamma\lambda_n \widehat\UU_{\OS_2}\widehat\VV_{\OS_2}^\top\right) = \lambda_n \leq \lambda_n^{1-\eta}.$ Therefore, $\CC_{\widehat\UU_{\OS_2}}(\DDD_{\OOO},\DDD_{\LL}) \in \BBB.$ 
	
	For any other $(\widetilde\DDD_{\OOO},\widetilde\DDD_{\LL})\in\BBB,$ denote $\widetilde\DDD = \widetilde\DDD_{\OOO}+\widetilde\DDD_{\LL},$ we have $$
	\begin{aligned}
		&g_{\gamma}\left(\CC_{\widehat\UU_{\OS_2}}(\DDD_{\OOO},\DDD_{\LL})-\CC_{\widehat\UU_{\OS_2}}(\widetilde\DDD_{\OOO},\widetilde\DDD_{\LL})\right) \\
		&=g_{\gamma}\left(\PPP_{{{\cal S}(\OOO^*)}} \left[\nabla R_n(\DDD)-\nabla R_n(\widetilde\DDD)\right],\PPP_{D(\widehat\UU_{\OS_2})} \left[\nabla R_n(\DDD) - \nabla R_n(\widetilde\DDD)\right]\right)\\
		&=g_{\gamma}\left(\PPP_{{{\cal S}(\OOO^*)}} \left[\nabla R_n^2(\check\DDD)v(\DDD-\widetilde\DDD)\right],\PPP_{D(\widehat\UU_{\OS_2})} \left[\nabla R_n^2(\check\DDD)v(\DDD-\widetilde\DDD)\right]\right)\\
		&\lesssim \|\nabla R_n^2(\check\DDD)\|_{\max}\|\DDD-\widetilde\DDD\|_{\max}\\
		&\lesssim \|\nabla R_n^2(\check\DDD)\|_{\max}g_{\gamma}(\DDD_{\OOO}-\widetilde\DDD_{\OOO},\DDD_{\LL}-\widetilde\DDD_{\LL}),
	\end{aligned}
	$$ 
	where the second equality is due to Taylor's expansion. According to \eqref{eq:remainder3}, with probability approaching 1, $\CC_{\widehat\UU_{\OS_2}}$ is a contraction mapping on $\BBB.$ \end{proof}

\begin{lemma}\label{lem:tmp mat}
	Let $\widehat\DDD_{\LL} = \UU_1^*\DD_1^*(\widehat\UU_{\OS_2} - \UU_1^*)^\top+(\widehat\UU_{\OS_2}-\UU_1^*)\DD_1^*(\UU_1^*)^\top + \UU_1^*(\widehat\DD_{\widehat\OS} -\DD_1^*)(\UU_1^*)^\top$, then the followings hold:
	\begin{enumerate}
		\item $\widehat\DDD_{\LL} \in {\cal T}(\LL^*);$
		\item $\|\widehat\LL_{\widehat\OS} - \LL^* - \widehat\DDD_{\LL}\|_{\max} \lesssim_P \lambda_n^{2(1-\eta)};$
		\item there exists a constant $c>0$ such that
		$\|\widehat\DDD_{\LL}\|_{\max} >_P c\|\widehat\UU_{\OS_2} - \UU_1^*\|_{\max} - c\lambda_n^{2(1-\eta)}.$ 
	\end{enumerate}
\end{lemma}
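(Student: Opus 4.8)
The plan is to dispatch the three claims in sequence: the first is a direct algebraic identity, the second a routine remainder estimate, and the third — the genuinely delicate part — an eigenbasis computation that hinges on Assumption~\ref{ass:diff}. Throughout I abbreviate $\Delta\UU = \widehat\UU_{\OS_2} - \UU_1^*$ and $\Delta\DD = \widehat\DD_{\widehat\OS} - \DD_1^*$. The two crude inputs I would rely on are $\|\Delta\DD\|_{\max}\lesssim_P\lambda_n^{1-\eta}$ (Lemma~\ref{lem:hat Theta}) and $\|\Delta\UU\|_{\max}\lesssim_P\lambda_n^{1-\eta}$; the latter follows because $(\widehat\OOO_{\widehat\OS},\widehat\LL_{\widehat\OS})\in\OS_1$ supplies a full orthogonal eigenbasis within $\lambda_n^{1-\eta}$ of $\UU^*$ in max-norm, and Assumption~\ref{ass:diff} pins the nonzero-eigenvalue eigenvectors down to sign, so $\widehat\UU_{\OS_2}$ agrees with the first $K$ columns of that basis. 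It is important that only these crude bounds are used, since the refined bound on $\|\Delta\UU\|_{\max}$ is what the enclosing proof of Theorem~\ref{thm:Omega} later derives \emph{from} this lemma.

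For claim~1 I would simply exhibit $\YY = \DD_1^*\Delta\UU^\top + \tfrac12\Delta\DD(\UU_1^*)^\top \in \R^{K\times p}$; using that $\DD_1^*$ and $\Delta\DD$ are diagonal (hence symmetric), a one-line expansion gives $\UU_1^*\YY + \YY^\top(\UU_1^*)^\top = \widehat\DDD_{\LL}$, so $\widehat\DDD_{\LL}\in{\cal T}(\LL^*)$ by \eqref{eq:tangent L}. For claim~2 I would expand $\widehat\LL_{\widehat\OS} = (\UU_1^*+\Delta\UU)(\DD_1^*+\Delta\DD)(\UU_1^*+\Delta\UU)^\top$: the constant term is $\LL^*$, the three first-order terms are exactly $\widehat\DDD_{\LL}$, and the residual $\widehat\LL_{\widehat\OS}-\LL^*-\widehat\DDD_{\LL}$ collects the second-order terms $\UU_1^*\Delta\DD\Delta\UU^\top,\ \Delta\UU\DD_1^*\Delta\UU^\top,\ \Delta\UU\Delta\DD(\UU_1^*)^\top$ together with $\Delta\UU\Delta\DD\Delta\UU^\top$. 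Since $\DD_1^*,\Delta\DD$ are diagonal with only $K$ entries and $\UU_1^*$ has bounded entries, each product entry is a sum of $O(K)$ terms, each bounded by a product of two perturbation factors, so the whole residual is $\lesssim_P\lambda_n^{2(1-\eta)}$.

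Claim~3 is the main obstacle, and here I would pass to the full orthogonal eigenbasis $\UU^* = [\UU_1^*\mid\UU_0^*]$ of $\LL^*$, where $\UU_0^*\in\R^{p\times(p-K)}$ spans the null space. Since $\UU^*$ is orthogonal and $p$ is fixed, $\|\widehat\DDD_{\LL}\|_{\max}\gtrsim\|(\UU^*)^\top\widehat\DDD_{\LL}\UU^*\|_{\max}$. Writing $\Delta\UU = \UU_1^*A + \UU_0^*B$ with $A=(\UU_1^*)^\top\Delta\UU$ and $B=(\UU_0^*)^\top\Delta\UU$, a block computation gives
\[
(\UU^*)^\top\widehat\DDD_{\LL}\UU^* = \begin{pmatrix} \DD_1^*A^\top + A\DD_1^* + \Delta\DD & \DD_1^*B^\top \\ B\DD_1^* & \mathbf{0}\end{pmatrix}.
\]
Because $\DD_1^*$ is invertible with bounded inverse, the off-diagonal block at once yields $\|B\|_{\max}\lesssim\|\widehat\DDD_{\LL}\|_{\max}$, so the null-space part of the eigenvector perturbation is fully controlled by $\widehat\DDD_{\LL}$.

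The subtle point is the in-span part $A$, whose antisymmetric component is a priori first-order. Orthonormality of $\widehat\UU_{\OS_2}$ gives $(\II_K+A)^\top(\II_K+A)=\II_K-B^\top B$, hence $\|A+A^\top\|_{\max}\lesssim\|A\|_{\max}^2+\|B\|_{\max}^2\lesssim_P\lambda_n^{2(1-\eta)}$; this controls the diagonal, $|A_{kk}|\lesssim_P\lambda_n^{2(1-\eta)}$, but not the off-diagonal antisymmetric part. To recover it I would read off the $(k,l)$ entry of the top-left block, $d_kA_{lk}+d_lA_{kl}$ (with $d_k$ the $k$-th diagonal of $\DD_1^*$), which is $\lesssim\|\widehat\DDD_{\LL}\|_{\max}$, and pair it with the near-symmetry relation $|A_{kl}+A_{lk}|\lesssim_P\lambda_n^{2(1-\eta)}$. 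Solving this $2\times2$ linear system gives $A_{kl}=(t-d_ks)/(d_l-d_k)$ with $|s|\lesssim_P\lambda_n^{2(1-\eta)}$ and $|t|\lesssim\|\widehat\DDD_{\LL}\|_{\max}$; since Assumption~\ref{ass:diff} forces $d_k\neq d_l$, the determinant $d_l-d_k$ is bounded away from $0$, yielding $|A_{kl}|\lesssim\|\widehat\DDD_{\LL}\|_{\max}+\lambda_n^{2(1-\eta)}$. Combining $\|A\|_{\max}$ and $\|B\|_{\max}$ via $\|\Delta\UU\|_{\max}\lesssim\|A\|_{\max}+\|B\|_{\max}$ and rearranging produces $\|\widehat\DDD_{\LL}\|_{\max}>_P c\|\widehat\UU_{\OS_2}-\UU_1^*\|_{\max}-c\lambda_n^{2(1-\eta)}$. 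I expect essentially all the difficulty to concentrate in this recovery of the antisymmetric part of $A$: it is exactly where distinct eigenvalues are indispensable, since repeated eigenvalues would make $d_l-d_k$ vanish and leave the within-eigenspace rotation undetermined by $\widehat\DDD_{\LL}$.
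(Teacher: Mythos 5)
Your proof is correct, but it reaches the conclusion by a genuinely different route from the paper's. For parts 1 and 2 the paper simply cites Lemma 4 of \cite{chen2016fused}; you instead verify them directly, exhibiting $\YY = \DD_1^*(\widehat\UU_{\OS_2}-\UU_1^*)^\top + \tfrac12(\widehat\DD_{\widehat\OS}-\DD_1^*)(\UU_1^*)^\top$ for membership in ${\cal T}(\LL^*)$ and expanding $(\UU_1^*+\Delta\UU)(\DD_1^*+\Delta\DD)(\UU_1^*+\Delta\UU)^\top$ so that the residual is visibly the collection of second-order terms; both computations check out, and your identification of the crude inputs $\|\Delta\UU\|_{\max},\|\Delta\DD\|_{\max}\lesssim_P\lambda_n^{1-\eta}$ from membership in $\OS_1$ (rather than the refined bound that Theorem \ref{thm:Omega} later extracts \emph{from} this lemma) correctly avoids circularity. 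The real divergence is in part 3: the paper applies the Davis--Kahan theorem to $\widehat\LL_{\widehat\OS}-\LL^*$ to get $\|\widehat\LL_{\widehat\OS}-\LL^*\|_{\max}\gtrsim\|\widehat\UU_{\OS_2}-\UU_1^*\|_{\max}$ under Assumption \ref{ass:diff} and then transfers to $\widehat\DDD_{\LL}$ by the triangle inequality with part 2, whereas you work directly on the linearization $\widehat\DDD_{\LL}$: the block form of $(\UU^*)^\top\widehat\DDD_{\LL}\UU^*$ controls the out-of-span component $B$ through the off-diagonal block $\DD_1^*B^\top$, orthonormality controls the symmetric part of $A$ at second order, and the $2\times2$ system $d_lA_{kl}+d_kA_{lk}=t$, $A_{kl}+A_{lk}=s$ with determinant $d_l-d_k\neq0$ recovers the antisymmetric part — which is exactly where Assumption \ref{ass:diff} enters in both arguments. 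Your version is longer but self-contained and makes the sign-alignment and eigenvalue-gap dependence explicit, which the paper's one-line Davis--Kahan invocation leaves implicit; the paper's version is shorter at the cost of outsourcing the perturbation bound. The only cosmetic caveat is that your final rearrangement yields $\|\widehat\DDD_{\LL}\|_{\max}\geq C^{-1}\|\Delta\UU\|_{\max}-C'\lambda_n^{2(1-\eta)}$ with two generic constants rather than literally the same $c$ twice, but the paper's own proof has the same feature and its convention explicitly allows it.
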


\begin{proof}[Proof of Lemma~\ref{lem:tmp mat}]
	The first two results follow directly from Lemma 4 in \cite{chen2016fused}. For the third one, by the Davis-Kahan theorem \cite[][]{yu2015useful}, we obtain 
	$$
	\|\widehat\UU_{\OS_2} - \UU_1^*\|_2 \lesssim \frac{\|\widehat\LL_{\widehat\OS} - \LL^*\|_F}{\min_{1\leq k\leq K}\{\lambda_{k}(\LL^*)-\lambda_{k+1}(\LL^*)\}},
	$$ 
	where $\lambda_k(\LL^*)$ is the $k$-th largest eigenvalues of $\LL^*.$
	By Assumption~\ref{ass:diff}, it implies that $\|\widehat\LL_{\widehat\OS} - \LL^*\|_{\max} > c\|\widehat\UU_{\OS_2} - \UU_1^*\|_{\max}$ for a constant $c>0.$ Therefore, we obtain $\|\widehat\DDD_{\LL}\|_{\max} > \|\widehat\LL_{\widehat\OS} - \LL^*\|_{\max} - \|\widehat\LL_{\widehat\OS} - \LL^* - \widehat\DDD_{\LL}\|_{\max} >_P c\|\widehat\UU_{\OS_2} - \UU_1^*\|_{\max} - c\lambda_n^{2(1-\eta)}$, which completes the proof. \end{proof}

\begin{lemma}\label{lem::MatrixInverse}
	Suppose $\mathbf{A}\in \mathbb{R}^{p\times p}$ is a positive definite matrix with $\Lambda_{\min}(\mathbf{A}) \ge 2p \epsilon$, and $\mathbf{E}\in \mathbb{R}^{p\times p}$ is a symmetric error matrix with $\| \mathbf{E} \|_{\max} \leq \epsilon$, then $\mathbf{A}+ \mathbf{E}$ is invertible and 
	$$\| (\mathbf{A}+ \mathbf{E})^{-1} - \mathbf{A}^{-1} \|_{\max}\leq \frac{2p}{\Lambda_{\min}^2(\mathbf{A})}\epsilon.$$	
\end{lemma}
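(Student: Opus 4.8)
The plan is to treat this as a textbook first-order matrix perturbation bound, where the only mild subtlety is converting between the entrywise max norm appearing in the conclusion and the spectral norm that interacts naturally with inverses and eigenvalues.

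First I would control the perturbation in spectral norm by its max norm. Since $\mathbf{E}$ is a $p\times p$ matrix, $\|\mathbf{E}\|_2 \le \|\mathbf{E}\|_F \le p\|\mathbf{E}\|_{\max} \le p\epsilon$. Together with the hypothesis $\Lambda_{\min}(\mathbf{A}) \ge 2p\epsilon$, Weyl's inequality gives
$$
\Lambda_{\min}(\mathbf{A}+\mathbf{E}) \ge \Lambda_{\min}(\mathbf{A}) - \|\mathbf{E}\|_2 \ge \Lambda_{\min}(\mathbf{A}) - p\epsilon \ge \tfrac{1}{2}\Lambda_{\min}(\mathbf{A}) > 0,
$$
which simultaneously shows that $\mathbf{A}+\mathbf{E}$ is invertible and yields the operator-norm estimate $\|(\mathbf{A}+\mathbf{E})^{-1}\|_2 \le 2/\Lambda_{\min}(\mathbf{A})$, while $\|\mathbf{A}^{-1}\|_2 = 1/\Lambda_{\min}(\mathbf{A})$ by positive definiteness. (Both $\mathbf{A}$ and $\mathbf{A}+\mathbf{E}$ are symmetric, so the identification of the spectral norm of the inverse with the reciprocal of the smallest eigenvalue is valid.)

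Next I would invoke the resolvent identity $(\mathbf{A}+\mathbf{E})^{-1} - \mathbf{A}^{-1} = -(\mathbf{A}+\mathbf{E})^{-1}\mathbf{E}\mathbf{A}^{-1}$, obtained by writing the difference as $(\mathbf{A}+\mathbf{E})^{-1}[\mathbf{A} - (\mathbf{A}+\mathbf{E})]\mathbf{A}^{-1}$. To pass to the max norm, I would use that for any matrix $\mathbf{M}$ one has $\|\mathbf{M}\|_{\max} = \max_{i,j}|e_i^\top\mathbf{M}e_j| \le \|\mathbf{M}\|_2$. Then submultiplicativity of the spectral norm combines the three bounds above into
$$
\|(\mathbf{A}+\mathbf{E})^{-1} - \mathbf{A}^{-1}\|_{\max} \le \|(\mathbf{A}+\mathbf{E})^{-1}\|_2\,\|\mathbf{E}\|_2\,\|\mathbf{A}^{-1}\|_2 \le \frac{2}{\Lambda_{\min}(\mathbf{A})}\cdot p\epsilon \cdot \frac{1}{\Lambda_{\min}(\mathbf{A})} = \frac{2p\epsilon}{\Lambda_{\min}^2(\mathbf{A})},
$$
which is exactly the claimed bound.

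The main obstacle, such as it is, is purely the norm bookkeeping: ensuring the dimension factor $p$ enters exactly once (through $\|\mathbf{E}\|_2 \le p\|\mathbf{E}\|_{\max}$) and that the factor $\tfrac{1}{2}$ lost in the lower bound for $\Lambda_{\min}(\mathbf{A}+\mathbf{E})$ cancels the factor $2$ so that the two copies of $\Lambda_{\min}^{-1}(\mathbf{A})$ assemble into the stated constant. No compactness, fixed-point, or probabilistic machinery is required; this lemma is a self-contained deterministic estimate, used downstream in the proof of Theorem~\ref{thm:Graph} to convert the entrywise consistency of $\widehat{\bfSigma}$ into entrywise control of the inverses of its submatrices.
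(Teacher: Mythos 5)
Your proof is correct and follows essentially the same route as the paper's: the resolvent identity $(\mathbf{A}+\mathbf{E})^{-1}-\mathbf{A}^{-1}=-(\mathbf{A}+\mathbf{E})^{-1}\mathbf{E}\mathbf{A}^{-1}$, the comparison $\|\cdot\|_{\max}\le\|\cdot\|_2$, the bound $\|\mathbf{E}\|_2\le p\epsilon$, and the factor $\|(\mathbf{A}+\mathbf{E})^{-1}\|_2\le 2\|\mathbf{A}^{-1}\|_2$. The only difference is that you obtain this last bound via Weyl's inequality on the eigenvalues (using symmetry of $\mathbf{A}+\mathbf{E}$), whereas the paper writes $\mathbf{A}+\mathbf{E}=(\mathbf{I}+\mathbf{E}\mathbf{A}^{-1})\mathbf{A}$ and uses the Neumann-series estimate $\|(\mathbf{I}+\mathbf{E}\mathbf{A}^{-1})^{-1}\|_2\le(1-\|\mathbf{E}\mathbf{A}^{-1}\|_2)^{-1}\le 2$; both are valid and yield the identical constant.
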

\begin{proof}[Proof of Lemma~\ref{lem::MatrixInverse}]
	Note that Lemma \ref{lem::MatrixInverse} modifies the results in Lemma 5 in \cite{Harris2013} and Lemma 29 in \cite{Loh2014}.
	First, to prove that $\mathbf{A} + \mathbf{E}$ is inverible, we note that $\mathbf{A} + \mathbf{E} = (\mathbf{I} + \mathbf{E}\mathbf{A}^{-1}) \mathbf{A}$, and
	\begin{align}\label{eq::lem7::eq1}
		\|\mathbf{E}\mathbf{A}^{-1}\|_2 \leq \|\mathbf{E}\|_2 \|\mathbf{A}^{-1}\|_2 \leq p \|\mathbf{E}\|_{\max} \|\mathbf{A}^{-1}\|_2 \leq \frac{p}{\Lambda_{\min}(\mathbf{A})} \epsilon \leq \frac{1}{2},
	\end{align}
	which implies $\mathbf{I} + \mathbf{E}\mathbf{A}^{-1}$ is invertible, and thus $\mathbf{A} + \mathbf{E}$ is also invertible. 
	
	Moreover, we have
	\begin{align*}
		\| (\mathbf{A}+ \mathbf{E})^{-1} - \mathbf{A}^{-1} \|_{\max} &\leq 	\| (\mathbf{A}+ \mathbf{E})^{-1} - \mathbf{A}^{-1} \|_{2} = \| (\mathbf{A}+\mathbf{E})^{-1}(\mathbf{A}+\mathbf{E} - \mathbf{A}) \mathbf{A}^{-1}\|_2 \\
		& \leq  \| (\mathbf{A}+\mathbf{E})^{-1}\|_2 \|\mathbf{E}\mathbf{A}^{-1}\|_2 \leq \frac{\|\mathbf{A}^{-1}\|_2}{1-\|\mathbf{E}\mathbf{A}^{-1}\|_2} \|\mathbf{E}\mathbf{A}^{-1}\|_2 \\
		& \leq 2 \|\mathbf{A}^{-1}\|_2  \|\mathbf{E}\mathbf{A}^{-1}\|_2 \leq \frac{2p}{\Lambda_{\min}^2(\mathbf{A})}\epsilon,
	\end{align*}
	where the third inequality follows the inequality (5.8.2) in \cite{Horn2012}, and the last inequality follows from \eqref{eq::lem7::eq1}.
	
\end{proof}

\section*{Supplement: Proof of Propositions~\ref{prop:ident} and \ref{prop:LayerOrder}}

\begin{proof}[\bf Proof of Propositions~\ref{prop:ident}]
	First of all, it can be verified that $\partial_{\OOO} \bar l(\OOO^*,\LL^*) = \partial_{\LL} \bar l(\OOO^*,\LL^*) = {\bf0}.$ By the concavity of $\bar l(\OOO^*,\LL^*),$ this implies that $(\OOO^*,\LL^*)$ is a maximizer of $\bar l(\OOO^*,\LL^*).$
	
	For any small $\epsilon>0$, let $(\widetilde\OOO,\widetilde\LL)\in\OS(\epsilon)$ be a maximizer of $\bar l(\OOO,\LL)$ in the interior of $\OS(\epsilon)$. Then, the first order conditions hold: 
	\begin{equation}\label{eq:order pop}
		\PPP_{{{\cal S}(\OOO^*)}}\partial_{\OOO} \bar l(\widetilde\OOO,\widetilde\LL) = {\bf0}, ~~\text{and}~~\PPP_{{\cal T}(\widetilde\LL)}\partial_{\LL} \bar l(\widetilde\OOO,\widetilde\LL) = {\bf0}.
	\end{equation} 
	Let $\widetilde\LL = \widetilde\UU_1\widetilde\DD_1\widetilde\UU_1^\top$ be the eigen decomposition of $\widetilde\LL.$
	Define $\widetilde\DDD_{\OOO} = \widetilde\OOO-\OOO^*,\widetilde\DDD_{\LL} = \widetilde\LL-\LL^*$ and $\widetilde\DDD = \widetilde\DDD_{\OOO} + \widetilde\DDD_{\LL}.$ 
	Further let $\widetilde\DDD_{\LL} = \widetilde\DDD_{\LL,1} + \widetilde\DDD_{\LL,2},$ where 
	$$
	\widetilde\DDD_{\LL,1} = 
	\widetilde\UU_1\widetilde\DD_1(\widetilde\UU_1 - \UU_1^*)^\top+(\widetilde\UU_1-\UU_1^*)\widetilde\DD_1\widetilde\UU_1^\top + \widetilde\UU_1(\widetilde\DD_1 -\DD_1^*)\widetilde\UU_1^\top.
	$$ 
	It can be verified that $\widetilde\DDD_{\LL,1}\in\TT(\widetilde\LL)$ and $\|\widetilde\DDD_{\LL,2}\|_{\max}\leq C\epsilon^2$ for some constant $C$ \citep[][Lemma~4]{chen2016fused}.
	
	By Taylor's expansion, we have 
	\begin{equation}\label{eq:tayler pop}
		\begin{aligned}
			\bar l(\TTT^*+\widetilde\DDD) &= \bar l(\TTT^*) - \frac{1}{2}v(\widetilde\DDD)^\top\III^*v(\widetilde\DDD) + R(\widetilde\DDD),
		\end{aligned}
	\end{equation}
	where it can be verified that $\| R(\widetilde\DDD)\|_{\max} = O(\|\widetilde\DDD\|_{\max}^3)$ and $\|\nabla R(\widetilde\DDD)\|_{\max} = O(\|\widetilde\DDD\|_{\max}^2).$ Then, by \eqref{eq:order pop} and \eqref{eq:tayler pop}, we obtain $$
	\begin{aligned}
		{\bf0} = -\PPP_{\SSSS(\OOO^*)}\nabla\bar l(\TTT^*+\widetilde\DDD) =& \PPP_{\SSSS(\OOO^*)}(\III^*v(\widetilde\DDD)) - \PPP_{\SSSS(\OOO^*)}(\nabla R(\widetilde\DDD)) \\
		{\bf0} = -\PPP_{{\cal T}(\widetilde\LL)}\nabla\bar l(\TTT^*+\widetilde\DDD) =& \PPP_{{\cal T}(\widetilde\LL)}(\III^*v(\widetilde\DDD)) - \PPP_{{\cal T}(\widetilde\LL)}(\nabla R(\widetilde\DDD)),
	\end{aligned}
	$$ which leads that $$
	F_{\widetilde\UU_1}(\widetilde\DDD_{\OOO},\widetilde\DDD_{\LL,1}) = \left( \PPP_{{{\cal S}(\OOO^*)}} \nabla R(\widetilde\DDD),\PPP_{{\cal T}(\widetilde\LL)} \left(\nabla R(\widetilde\DDD) - \III^*v(\widetilde\DDD_{\LL,2}) \right) \right).
	$$ 
	
	When $\epsilon$ is sufficiently small, Lemma 1 implies that $F_{\widetilde\UU_1}$ is invertible and $F_{\widetilde\UU_1}^{-1}$ is bounded. Then, we have 
	$$
	(\widetilde\DDD_{\OOO},\widetilde\DDD_{\LL,1}) = -F_{\widetilde\UU_1}^{-1}\left( \PPP_{{{\cal S}(\OOO^*)}} \nabla R(\widetilde\DDD),\PPP_{{\cal T}(\widetilde\LL)} \left(\nabla R(\widetilde\DDD) - \III^*v(\widetilde\DDD_{\LL,2}) \right) \right).
	$$ 
	By Lemma 1 and the fact that $\|\nabla R(\widetilde\DDD)\|_{\max} = O(\|\widetilde\DDD\|_{\max}^2) = O(\|(\widetilde\DDD_{\OOO},\widetilde\DDD_{\LL,1})\|_{\max}^2 + \|\widetilde\DDD_{\LL,2}\|_{\max}^2 )$, we have $\|(\widetilde\DDD_{\OOO},\widetilde\DDD_{\LL,1})\|_{\max} \leq C\left[\|(\widetilde\DDD_{\OOO},\widetilde\DDD_{\LL,1})\|_{\max}^2 + \|\widetilde\DDD_{\LL,2}\|_{\max} \right]$. Then, $$
	\|(\widetilde\DDD_{\OOO},\widetilde\DDD_{\LL,1})\|_{\max} \leq C\|\widetilde\DDD_{\LL,2}\|_{\max} \leq C\epsilon^2
	$$ 
	for possibly different constants $C$, which further implies that 
	$$
	\|(\widetilde\DDD_{\OOO},\widetilde\DDD_{\LL})\|_{\max} \leq \|(\widetilde\DDD_{\OOO},\widetilde\DDD_{\LL,1})\|_{\max} + \|\widetilde\DDD_{\LL,2}\|_{\max} \leq C\epsilon^2,
	$$ 
	and hence that $(\widetilde\OOO,\widetilde\LL)\in\OS(C\epsilon^2)$. Then, when $\epsilon$ is sufficiently small, we have $C\epsilon^2<\epsilon/2$, and thus $(\widetilde\OOO,\widetilde\LL)$ must be a maximizer of $\bar l(\OOO,\LL)$ in the interior of $\OS(\epsilon/2)$. 
	
	Repeating the above derivation implies that $(\widetilde\OOO,\widetilde\LL) = (\OOO^*,\LL^*)$. Therefore, $(\OOO^*,\LL^*)$ is the unique maximizer of $\bar l(\OOO^*,\LL^*)$ in the interior of $\OS(\epsilon)$, which implies that 
	it is the unique maximizer of $\bar l(\OOO^*,\LL^*)$ in $\OS(\epsilon/2)$. This completes the proof of Proposition~\ref{prop:ident}. 
	\end{proof}

	\begin{proof}[\bf Proof of Proposition~\ref{prop:LayerOrder}]

	We restrict the following analysis on the event $\{\widehat m = m^*\}\cap\{\widehat{\tau}_i = \tau_i^* : i\in [m^*]\}$, which occurs with probability approaching 1 according to Theorem~\ref{thm:Omega}.
	We establish the results by mathematical induction. Specifically, suppose we have correctly determined the first $k$ chain components among $\widehat \tau_i$'s, so that $(\widehat \pi_1,\ldots,\widehat \pi_k) \in \Pi_k^*$, where $\Pi_k^*$ denotes the set of all possible true causal ordering of the first $k$ chain components among $\tau_l^*$'s. Let $\widehat{\cal C}_k = \cup_{s=1}^k \widehat \tau_{\widehat \pi_s}$, and we consider two chain components $\widehat \tau_i$ and $\widehat \tau_j$ that are not subsets of $\widehat{\cal C}_k$, but $
	\pa(\widehat \tau_j) \subseteq \widehat{\cal C}_k \mbox{ and } \pa(\widehat \tau_i) \nsubseteq \widehat{\cal C}_k.$
	It then suffices to show that $\widehat{\cal D}(\widehat \tau_i, \widehat{\cal C}_k) - \widehat{\cal D}(\widehat \tau_j, \widehat{\cal C}_k)>0$.
	
	Simple algebra yields that
	\begin{align}\label{eq::lem7::decom}
		&\widehat{\cal D}(\widehat{\tau}_i, \widehat{\cal C}_k) - \widehat{\cal D}(\widehat{\tau}_j, \widehat{\cal C}_k) \nonumber\\
		=& \big({\cal D}(\widehat{\tau}_i, \widehat{\cal C}_k) - 
		{\cal D}(\widehat{\tau}_j, \widehat{\cal C}_k)\big) + \big( \widehat{\cal D}(\widehat{\tau}_i, \widehat{\cal C}_k) - {\cal D}(\widehat{\tau}_i, \widehat{\cal C}_k)\big ) + \big( {\cal D}(\widehat{\tau}_j, \widehat{\cal C}_k) - \widehat{\cal D}(\widehat{\tau}_j, \widehat{\cal C}_k)\big ) \nonumber \\
		\geq & \big({\cal D}(\widehat{\tau}_i, \widehat{\cal C}_k) - 
		{\cal D}(\widehat{\tau}_j, \widehat{\cal C}_k)\big) - 2 \max_{\widehat{\tau}_i \nsubseteq \widehat{\cal C}_k} \big| \widehat{\cal D}(\widehat{\tau}_i, \widehat{\cal C}_k) - {\cal D}(\widehat{\tau}_i, \widehat{\cal C}_k)\big |.
	\end{align}
	By Lemma 3, and facts $\widehat m = m^*$ and $\widehat{\tau}_i = \tau_i^*~\text{for}~i\in [m^*]\}$, we have
	${\cal D}(\widehat{\tau}_j, \widehat{\cal C}_k)=0$. Furthermore, there exists a node $l\in \widehat{\tau}_i$ such that $\pa(l)\backslash\widehat{\cal C}_k \neq \emptyset$, and thus
	$$
	{\cal D}(\widehat{\tau}_i, \widehat{\cal C}_k)\geq \sum_{t\in \pa(l)\backslash \widehat{\cal C}_k} \beta_{lt}^2 \Var(x_t | \xx_{\widehat{\cal C}_k})>0,
	$$
	where the first inequality follows from (3) in Lemma 3. Therefore, there exists a positive constant $c$ such that ${\cal D}(\widehat{\tau}_i, \widehat{\cal C}_k) - {\cal D}(\widehat{\tau}_j, \widehat{\cal C}_k)\geq c > 0.$
	

	For the second term in \eqref{eq::lem7::decom}, let $\bfSigma^*(l, \widehat{\cal C}_k)={\mathbf{\Sigma}}^*_{ll} -{\mathbf{\Sigma}}^*_{l \widehat{\CCC}_{k}} ({\mathbf{\Sigma}}^*_{\widehat{\CCC}_{k}\widehat{\CCC}_{k}})^{-1} {\mathbf{\Sigma}}^*_{ \widehat{\CCC}_{k}l}$ and $\widehat{\bfSigma}(l, \widehat{\cal C}_k) = \widehat{\mathbf{\Sigma}}_{ll} - \widehat{\mathbf{\Sigma}}_{l \widehat{\CCC}_{k}} \widehat{\mathbf{\Sigma}}_{\widehat{\CCC}_{k}\widehat{\CCC}_{k}}^{-1} \widehat{\mathbf{\Sigma}}_{ \widehat{\CCC}_{k}l}$. 
	It holds that
	\begin{align*} 
		\big | \widehat{\cal D}(\widehat{\tau}_i, \widehat{\cal C}_k) -  {\cal D}(\widehat{\tau}_i, \widehat{\cal C}_k)\big | 
		=& \big | \max_{l\in \widehat{\tau}_i}  \big\{ \widehat{\bfSigma}(l, \widehat{\cal C}_k) - \widehat{\bfOmega}_{ll}^{-1} \big\}  - \max_{l\in \widehat{\tau}_i}  \big\{ {\bfSigma}^*(l, \widehat{\cal C}_k) - ({\bfOmega}^*)_{ll}^{-1} \big\}\big |\nonumber\\
		\leq & \max_{l\in \widehat{\tau}_i} \big |  \big\{ \widehat{\bfSigma}(l, \widehat{\cal C}_k) - \widehat{\bfOmega}_{ll}^{-1} \big\}  - \big\{ {\bfSigma}^*(l, \widehat{\cal C}_k) - ({\bfOmega}^*)_{ll}^{-1} \big\}\big | \nonumber\\
		\leq & \max_{l\in \widehat{\tau}_i} \big |   \widehat{\bfSigma}(l, \widehat{\cal C}_k) - {\bfSigma}^*(l, \widehat{\cal C}_k) \big | +  \max_{l\in \widehat{\tau}_i} \big | \widehat{\bfOmega}_{ll}^{-1} -   ({\bfOmega}^*)_{ll}^{-1} \big |,
	\end{align*}
	which converges to 0 in probability by Theorem~\ref{thm:Omega} and $\widehat\SG\overset{pr}{\to}\SG^*$. Therefore, $$
	\widehat{\cal D}(\widehat{\tau}_i,\widehat{\cal C}_k) - \widehat{\cal D}(\widehat{\tau}_j,\widehat{\cal C}_k) >_P 0.
	$$
	As a direct consequence, the selected chain component has to be the one whose parent nodes are contained in $\widehat{\cal C}_k$, and thus $(\widehat \pi_1,\ldots,\widehat \pi_{k+1}) \in \Pi_{k+1}^*$. The desired result then follows from mathematical induction. 
	
\end{proof}

\section*{Supplement: Computation details  of (\ref{eq:opti1})}

We first replace the positive definite constraint of $\mathbf{\Omega}$ in \eqref{eq:opti1} with a slightly stronger constraint $ \mathbf{\Omega} \succeq \delta \mathbf{I}_p$ for some small $\delta>0$ \cite{ZouH2012}, and then the optimization task in \eqref{eq:opti1} becomes
\begin{align} \label{convexOpt2}
	(\widehat{\mathbf{\Theta}}, \widehat{\mathbf{\Omega}}, \widehat{\mathbf{L}}) = &\argmin_{\mathbf{\Theta}, \mathbf{\Omega}, \mathbf{L} } -\log \det \mathbf{\Theta} + \text{tr}(\mathbf{\Theta} \widehat{\mathbf{\Sigma}})  + \lambda_n ( \|\mathbf{\Omega}\|_{1,\text{off}} + \gamma \|\mathbf{L}\|_*)	\\
	& \text{such that} \ \mathbf{\Theta} = \mathbf{\Omega} + \mathbf{L}, \mathbf{\Theta} \succ 0, \mathbf{\Omega} \succeq \delta \mathbf{I}_p. \nonumber
\end{align}
The augmented Lagrangian function of \eqref{convexOpt2} is defined as 
\begin{align*}
	L_{\mu} (\mathbf{\Theta}, \mathbf{\Omega}, \mathbf{L}, \mathbf{U}) :=	&-\log \det \mathbf{\Theta} + \text{tr}(\mathbf{\Theta} \widehat{\mathbf{\Sigma}})  + \lambda_n ( \|\mathbf{\Omega}\|_{1,\text{off}} + \gamma \|\mathbf{L}\|_*) \nonumber \\
	&+ \text{tr}(\mathbf{U}(\mathbf{\Theta}-\mathbf{\Omega}-\mathbf{L})) + \frac{\mu}{2} \| \mathbf{\Theta} - \mathbf{\Omega} - \mathbf{L}\|_F^2,
\end{align*}
where $\mathbf{U}\in {\R}^{p \times p}$ is a dual variable matrix and $\mu>0$ is a penalty parameter. The corresponding dual problem is 
\begin{align} \label{dualP}
	\max_{\mathbf{U}} \min_{\mathbf{\Theta} \succ 0, \mathbf{\Omega} \succeq \delta \mathbf{I}_p, \mathbf{L}}	L_{\mu} (\mathbf{\Theta}, \mathbf{\Omega}, \mathbf{L}, \mathbf{U}).
\end{align}
It can be solved by the alternative direction method of multipliers (ADMM), consisting of the following iteration steps,
\begin{align}
	&\mathbf{\Theta}\text{-step:} \quad  \mathbf{\Theta}^{k+1} := \argmin_{\mathbf{\Theta} \succ 0} L_{\mu} (\mathbf{\Theta}, \mathbf{\Omega}^k, \mathbf{L}^k, \mathbf{U}^k), \label{admm-1}\\
	&\mathbf{\Omega}\text{-step:} \quad \mathbf{\Omega}^{k+1}:= \argmin_{\mathbf{\Omega} \succeq \delta \mathbf{I}_p} L_{\mu} (\mathbf{\Theta}^{k+1}, \mathbf{\Omega}, \mathbf{L}^k, \mathbf{U}^k), \label{admm-2}\\
	&\mathbf{L}\text{-step:} \quad  \mathbf{L}^{k+1} := \argmin_{L} L_{\mu} (\mathbf{\Theta}^{k+1}, \mathbf{\Omega}^{k+1}, \mathbf{L}, \mathbf{U}^k),  \label{admm-3} \\
	&\mathbf{U}\text{-step:} \quad \mathbf{U}^{k+1} := \mathbf{U}^k + \mu(\mathbf{\Theta}^{k+1}-\mathbf{\Omega}^{k+1}-\mathbf{L}^{k+1}).
\end{align}

For the $\mathbf{\Theta}\text{-step}$, we follow the similar procedure in \cite{Ye2011}, leading to the explicit of \eqref{admm-1},
\begin{align*}
	\mathbf{\Theta}^{k+1} = \frac{\mathbf{R}_1^k + \sqrt{(\mathbf{R}_1^k)^2+ 4\mu \mathbf{I}_p}}{2\mu},
\end{align*}
where $\mathbf{R}_1^k = \mu(\mathbf{\Omega}^k + \mathbf{L}^k)-\widehat{\mathbf{\Sigma}}-\mathbf{U}^k$ and $\sqrt{\mathbf{A}}$ denotes the square root of a symmetric positive definite matrix $\mathbf{A}$. 

For the $\mathbf{\Omega}\text{-step}$, the sub-optimization problem in \eqref{admm-2} can be re-formulated as
\begin{align} \label{admm-2-1}
	\mathbf{\Omega}^{k+1} &= \argmin_{\mathbf{\Omega} \succeq \delta \mathbf{I}_p} \lambda_n \|\mathbf{\Omega}\|_{1,\text{off}} + \text{tr}(\mathbf{U}^k (\mathbf{\Theta}^{k+1} - \mathbf{\Omega} - \mathbf{L}^k)) + \frac{\mu}{2} \| \mathbf{\Theta}^{k+1} - \mathbf{\Omega} - \mathbf{L}^k \|_F^2 \nonumber \\
	&= \argmin_{\mathbf{\Omega} \succeq \delta \mathbf{I}_p} \frac{\lambda_n}{\mu} \|\mathbf{\Omega}\|_{1,\text{off}} + \frac{1}{2} \| \mathbf{\Omega} - \mathbf{R}_2^k \|_F^2,
\end{align}
where $\mathbf{R}_2^k = \mathbf{\Theta}^{k+1} - \mathbf{L}^k + \mu^{-1} \mathbf{U}^k$. Similar as \cite{ZouH2012}, we introduce a new variable $\mathbf{\Phi} \in {\R}^{p\times p}$, and thus
\begin{align}\label{admm-2-3}
	(\mathbf{\Phi}^{k+1}, \mathbf{\Omega}^{k+1}) = &\argmin_{\mathbf{\Phi}, \mathbf{\Omega}} \frac{\lambda_n}{\mu} \|\mathbf{\Omega}\|_{1,\text{off}} + \frac{1}{2} \| \mathbf{\Omega} - \mathbf{R}_2^k \|_F^2 \\
	& \text{s.t.} \ \mathbf{\Phi} = \mathbf{\Omega}, \mathbf{\Phi} \succeq \delta \mathbf{I}_p. \nonumber
\end{align}
The augmented Lagrangian function of \eqref{admm-2-3} is defined as 
\begin{align*}
	L_{\rho}(\mathbf{\Phi}, \mathbf{\Omega}, \mathbf{V}) := \frac{\lambda_n}{\mu} \| \mathbf{\Omega}\|_{1,\text{off}} + \frac{1}{2} \| \mathbf{\Omega} - \mathbf{R}_2^k \|_F^2 + \text{tr} (\mathbf{V}(\mathbf{\Phi}-\mathbf{\Omega})) + \frac{\rho}{2} \| \mathbf{\Phi} - \mathbf{\Omega} \|_F^2,
\end{align*}
where $\mathbf{V} \in {\R}^{p\times p}$ is a dual variable matrix and $\rho>0$ is a penalty parameter. We also employ ADMM to solve \eqref{admm-2-3}, which consists of the following iterations,
\begin{align}
	&\mathbf{\Phi}^{l+1} := \argmin_{\mathbf{\Phi} \succeq \delta \mathbf{I}_p} L_{\rho} (\mathbf{\Phi}, \mathbf{\Omega}^{l}, \mathbf{V}^l),\label{2admm-1}\\
	&\mathbf{\Omega}^{l+1} := \argmin_{\mathbf{\Omega}} L_{\rho} (\mathbf{\Phi}^{l+1}, \mathbf{\Omega}, \mathbf{V}^l),\label{2admm-2}\\
	&\mathbf{V}^{l+1} := \mathbf{V}^l + \rho (\mathbf{\Phi}^{l+1} - \mathbf{\Omega}^{l+1}). \label{2admm-3}
\end{align}
Here, \eqref{2admm-1} can be solved as  
\begin{align*}
	\mathbf{\Phi}^{l+1} &= \argmin_{\mathbf{\Phi} \succeq \delta \mathbf{I}_p} \text{tr}(\mathbf{V}^l (\mathbf{\Phi}-\mathbf{\Omega}^l)) + \frac{\rho}{2} \| \mathbf{\Phi} - \mathbf{\Omega}^l \|_F^2 \\
	& = \argmin_{\mathbf{\Phi} \succeq \delta \mathbf{I}_p} \| \mathbf{\Phi} - (\mathbf{\Omega}^l - \rho^{-1}\mathbf{V}^l ) \|_F^2 = (\mathbf{\Omega}^l - \rho^{-1}\mathbf{V}^l)_+,
\end{align*}
where $(\mathbf{A})_+$ denotes the projection of a matrix $\mathbf{A}$ onto the convex cone $\{ \mathbf{\Phi} \succeq \delta \mathbf{I}_p\}$. Specifically, if the eigen decomposition of $\mathbf{A}$ is $\sum_{j=1}^p s_j \mathbf{v}_j \mathbf{v}_j^T$, then $(\mathbf{A})_+ = \sum_{j=1}^p \max(s_j, \delta) \mathbf{v}_j \mathbf{v}_j^T$. Furthermore, \eqref{2admm-2} can be solved as
\begin{align*}
	\mathbf{\Omega}^{l+1} &= \argmin_{\mathbf{\Omega}} \frac{\lambda_n}{\mu} \| \mathbf{\Omega}\|_{1,\text{off}} + \frac{1}{2} \| \mathbf{\Omega} - \mathbf{R}_2^k \|_F^2 + \text{tr} (\mathbf{V}(\mathbf{\Phi}^{l+1}-\mathbf{\Omega})) + \frac{\rho}{2} \| \mathbf{\Phi}^{l+1} - \mathbf{\Omega} \|_F^2\\
	&= 	\argmin_{\mathbf{\Omega}} \frac{\lambda_n}{\mu (1+\rho)}  \| \mathbf{\Omega} \|_{1,\text{off}} + \frac{1}{2}\| \mathbf{\Omega} - \frac{\mathbf{R}_2^k + \mathbf{V}^l + \rho \mathbf{\Phi}^{l+1}}{1+\rho} \|_F^2\\
	&= \frac{1}{1+\rho} {\cal S}_{\frac{\lambda_n}{\mu}} (\mathbf{R}_2^k + \mathbf{V}^l + \rho \mathbf{\Phi}^{l+1}),
\end{align*}
where ${\cal S}_{\tau}(\mathbf{A}) = (s_{\tau}(A_{ij}))_{ij} \in {\R}^{p \times p}$, and $s_{\tau}(A_{ij})=\sign(A_{ij})\max(|A_{ij}|-\tau,0)\mathbf{1}(i\neq j) + A_{ij} \mathbf{1}(i=j)$ denotes entry-wise soft-thresholding operator for off-diagonal entries of $\mathbf{A}$. 

For the $\mathbf{L}\text{-step}$, it follows from Theorem 2.1 in \cite{Cai2010} that
\begin{align*}
	\mathbf{L}^{k+1} &= \argmin_{\mathbf{L}} \lambda_n \gamma \| \mathbf{L} \|_* + \text{tr} (\mathbf{U}^k (\mathbf{\Theta}^{k+1} - \mathbf{\Omega}^{k+1}-\mathbf{L})) + \frac{\mu}{2} \|  \mathbf{\Theta}^{k+1} - \mathbf{\Omega}^{k+1} - \mathbf{L} \|_F^2 \\
	&= \argmin_{\mathbf{L}} \frac{\lambda_n \gamma}{\mu} \| \mathbf{L} \|_* + \frac{1}{2} \| \mathbf{L} - (\mathbf{\Theta}^{k+1} - \mathbf{\Omega}^{k+1} + \mu^{-1} \mathbf{U}^k) \|_F^2 \\
	&= {\cal T}_{\frac{\lambda_n \gamma}{\mu}} (\mathbf{\Theta}^{k+1} - \mathbf{\Omega}^{k+1} + \mu^{-1} \mathbf{U}^k),
\end{align*}
where ${\cal T}_{\tau}(\mathbf{A})$ denotes the singular value shrinkage operator for a matrix $\mathbf{A}$. Specifically, if the SVD of $\mathbf{A}$ with rank $r$ is $\sum_{j=1}^p d_j \mathbf{u}_j \mathbf{v}_j^T$, then ${\cal T}_{\tau}(\mathbf{A}) = \sum_{j=1}^r \max(d_j-\tau,0)\mathbf{u}_j \mathbf{v}_j^T$.

\bibliography{ref} 
\bibliographystyle{apalike}

\end{document}